\let\oldtocsection=\tocsection
\let\oldtocsubsection=\tocsubsection
\let\oldtocsubsubsection=\tocsubsubsection
\renewcommand{\tocsection}[2]{\hspace{0em}\oldtocsection{#1}{#2}}
\renewcommand{\tocsubsection}[2]{\hspace{1em}\oldtocsubsection{#1}{#2}}
\renewcommand{\tocsubsubsection}[2]{\hspace{2em}\oldtocsubsubsection{#1}{#2}}
\numberwithin{equation}{section}
\theoremstyle{definition}
\newtheorem{definition}[equation]{Definition}
\newtheorem{example}[equation]{Example}
\newtheorem{proposition}[equation]{Proposition}
\newtheorem{theorem}[equation]{Theorem}
\newtheorem{remark}[equation]{Remark}
\newtheorem{lemma}[equation]{Lemma}
\numberwithin{equation}{section}
\newcommand{\midwedge}{\text{\Large$\wedge$}}
\newcommand{\midodot}{\text{\Large$\odot$}}
\newcommand{\LC}{\textrm{\tiny\tt LC}}
\newcommand{\dee}{\textrm{\tiny D}}
\newcommand{\be}{\begin{equation}}
\newcommand{\ee}{\end{equation}}
\def\beqa{\begin{eqnarray}}
\def\eeqa{\end{eqnarray}}
\def\bean{\begin{eqnarray*}}
\def\eean{\end{eqnarray*}}
\newcommand{\R}{\mathbb{R}}
\newcommand{\de}{\mathrm{d}}
\newcommand{\del}{\partial}
\newcommand{\IZ}{\mathbb{Z}}
\newcommand{\IR}{\mathbb{R}}
\newcommand{\IT}{\mathbb{T}}
\newcommand{\IX}{\mathbb{X}}
\newcommand{\IV}{\mathbb{V}}
\newcommand{\frh}{\mathfrak{h}}
\def\e{{\,\rm e}\,}
\newcommand{\cW}{{\mathcal W}}
\newcommand{\cN}{{\mathcal N}}
\newcommand{\cS}{{\mathcal S}}
\newcommand{\cH}{{\mathcal H}}
\newcommand{\cQ}{{\mathcal Q}}
\newcommand{\cL}{{\mathcal L}}
\newcommand{\cF}{{\mathcal F}}
\newcommand{\cK}{{\mathcal K}}
\newcommand{\cU}{{\mathcal U}}
\newcommand{\ccW}{{\mathscr W}}
\newcommand{\ccL}{{\mathscr L}}
\newcommand{\ccK}{{\mathscr K}}
\newcommand{\sfa}{{\mathsf{a}}}
\newcommand{\sfH}{{\mathsf{H}}}
\newcommand{\sfG}{{\mathsf{G}}}
\newcommand{\sfp}{{\mathsf{p}}}
\newcommand{\sfGamma}{{\mathsf{\Gamma}}}
\newcommand{\unit}{\mathds{1}}   			
\begin{document}

\title[D-Branes in Para-Hermitian Geometries]{D-Branes in Para-Hermitian Geometries}

\author[V.~E.~ Marotta]{Vincenzo Emilio Marotta}
\address[Vincenzo Emilio Marotta]
{Department of Mathematics and Maxwell Institute for Mathematical
  Sciences\\ Heriot-Watt
  University\\ Edinburgh EH14 4AS\\ United Kingdom}
\email{vm34@hw.ac.uk}

\author[R.~J. Szabo]{Richard J.~Szabo}
  \address[Richard J.~Szabo]
  {Department of Mathematics, Maxwell Institute for Mathematical Sciences and Higgs Centre for Theoretical Physics\\
  Heriot-Watt University\\
  Edinburgh EH14 4AS \\
  United Kingdom}
  \email{R.J.Szabo@hw.ac.uk}

\vfill

\begin{flushright}
\footnotesize
{\sf EMPG--22--01}
\normalsize
\end{flushright}

\vspace{1cm}

\begin{abstract}
We introduce T-duality invariant versions of D-branes in doubled geometry using a global covariant framework based on para-Hermitian geometry and metric algebroids. We define D-branes as conformal boundary conditions for the open string version of the Born sigma-model, where they are given by maximally isotropic vector bundles which do not generally admit the standard geometric picture in terms of submanifolds. When reduced to the conventional sigma-model description of a physical string background as the leaf space of a foliated para-Hermitian manifold, integrable branes yield D-branes as leaves of foliations which are interpreted as Dirac structures on the physical spacetime. We define a notion of generalised para-complex D-brane, which realises our D-branes as para-complex versions of topological A/B-branes. We illustrate how our formalism recovers standard D-branes in the explicit example of reductions from doubled nilmanifolds.
\end{abstract}

\maketitle

\begin{center}
{\sl\small Contribution to the Special Issue of Universe on `Dualities and Geometry'}
\end{center}

{\baselineskip=12pt
\tableofcontents
}


\section{Introduction}

This paper is concerned with the global descriptions of duality symmetric extended geometries and field theories. We focus on the manifestly T-duality invariant extension of the bosonic sector of type~II supergravity to double field theory~\cite{HullZw2009,Hull:2009zb}, which is based on the doubled formalism for string theory~\cite{Siegel1993a,Siegel1993b} (see e.g.~\cite{Berman2013} for a review). We work in the setting of para-Hermitian geometry and metric algebroids, which encompasses the standard flat space treatments (recovered as instances of para-K\"ahler geometry) and known examples of global doubled geometries (typically involving non-integrable almost para-complex structures). This approach was originally proposed by Vaisman~\cite{Vaisman2012,Vaisman2013}, and has recently undergone a renewed flurry of activity, see e.g.~\cite{Freidel2017,Jonke2018,Svoboda2018,Osten2018,Kokenyesi:2018xgj,Freidel2019,SzMar,Mori2019,Chatzistavrakidis:2019huz,Hassler:2019wvn,Hu:2019zro,Marotta:2019eqc,Sakatani:2020umt,Carow-Watamura:2020xij,Ikeda:2020lxz,Mori:2020yih,Grewcoe:2020gka,Marotta:2021sia} for a host of developments in this direction.

Given the pertinence of D-branes to non-perturbative aspects and as probes of short-distance geometry in string theory, as well as their role in gauge theories, we pose the question: What is the proper definition of a D-brane in the context of para-Hermitian geometry?  This question has not yet been tackled in full generality, and a detailed answer should lead to a natural incorporation of phenomena described by T-duality. 
The purpose of this contribution is to fill this gap. We attack this problem from two complementary perspectives, starting from the worldsheet point of view where D-branes are regarded as boundary conditions for two-dimensional non-linear sigma-models. D-branes in doubled geometry have been previously studied from this perspective on doubled torus bundles~\cite{Hull2005,Lawrence:2006ma,Hull2007}, on doubled twisted tori~\cite{Albertsson2009,Hull:2019iuy} and on doubled flat space~\cite{Albertsson:2011ux}; they have also been discussed in the U-duality invariant setting of exceptional target space geometries~\cite{Blair:2019tww}.

A para-Hermitian manifold endowed with a Born metric and its compatible metric algebroid can be used to define a doubled string sigma-model in a duality symmetric formulation. In~\cite{Marotta:2019eqc} we called this the `Born sigma-model'. The Born sigma-model is invariant under rigid T-duality transformations, and it can be viewed as a covariant extension of the metastring sigma-model into string phase spaces~\cite{Freidel2015}. In this sense it is a direct generalisation of Hull's doubled sigma-models into doubled torus bundles with manifest worldsheet covariance~\cite{Hull2005,Hull2007}. In this paper we present a new extension of the Born sigma-model by a Wess-Zumino term which encodes topologically non-trivial generalised NS--NS fluxes of the doubled geometry, thus generalising the doubled sigma-models into doubled twisted tori introduced in~\cite{Hull2009}.

The para-Hermitian geometry approach to double field theory is largely a theory of foliated manifolds. The choice of an almost para-complex structure represents a choice of polarisation of the doubled geometry, and a maximally isotropic foliation arises as a solution to the section constraint of double field theory~\cite{Vaisman2012}. Physical sigma-models are then obtained from the Born sigma-model by a Lie algebroid gauging along the foliation, which implements the quotient to the leaf space representing the physical spacetime. A para-Hermitian manifold typically admits different maximally isotropic Riemannian foliations, so that T-dual sigma-models are recovered on the respective leaf spaces. In this way the Born sigma-model unifies T-dual non-linear sigma-models in a geometric description.

In this paper we define the open string version of the Born sigma-model. D-branes are then boundary conditions for this two-dimensional field theory. Boundary conditions and D-branes for the Born sigma-model (as well as their extensions to branes in M-theory and IIB string theory) have also been discussed in a local formulation by~\cite{Sakatani:2020umt}, but only in the case of para-K\"ahler manifolds (which recovers conventional double field theory on flat space), after solving the section constraint and gauging with respect to a coordinate system adapted to the leaves of a foliation. Our analysis instead generalises the approach of~\cite{Albertsson2009}, which considered D-branes in doubled twisted tori, but also in a local setting and assuming integrability properties which are not implied by the boundary conditions. In the following we improve on these approaches by considering global aspects of D-branes as they arise solely from the boundary data of the Born sigma-model, without recourse to additional structures, which we instead discuss as supplementary input towards the more physical and geometrically intuitive notions of D-branes. We refer to the branes obtained in this way as (almost) `Born D-branes'.

From this perspective, D-branes in para-Hermitian geometry are seen as maximally isotropic vector bundles $L$. All branes are of the same rank, so that the physical D-branes of different dimensions are treated in a unified geometric setting. In particular, the splitting between tangent and normal directions, which are treated equally due to the manifest T-duality symmetry, is determined by $L$. However, no integrability condition is needed to solve the boundary conditions for the Born sigma-model, and even when it is imposed by hand, it generally only leads to a foliation of spacetime by the D-brane $L$. Thus these branes do not generally follow the standard geometric picture of a D-brane as a submanifold of spacetime. We discuss this point in some detail, as well as the related problem of how to incorporate D-branes carrying non-trivial gauge flux into this picture, which corresponds to bound states of D-branes of various dimensions. Even the `physical' D-branes, obtained by reducing Born D-branes to the leaf space, are generally only described as leaves of foliations of the physical spacetime.

To understand the geometrical characterisation of Born D-branes, we develop a complementary spacetime perspective on D-branes by exploiting the similarities between para-Hermitian geometry and generalised geometry~\cite{gualtieri:tesi,Hitchin2011}. These similarities are part of the crux of the para-Hermitian approach to doubled geometry, as they underlie the geometry behind the reduction of double field theory to supergravity upon solving the section constraint in a particular polarisation. In particular, the metric algebroids of doubled geometry should reduce to Courant algebroids. The ingredients comprising a Born geometry define an almost para-quaternionic structure, together with compatibility conditions with respect to the underlying metrics~\cite{Freidel2019}. These define a neutral almost hypercomplex structure, which restricts the type of T-dual string backgrounds that can be described in this framework (see~\cite{Kimura:2022dma} for a recent discussion). 

With these connections in mind, we extend Gualtieri's approach~\cite{gualtieri:tesi} to viewing abelian generalised complex D-branes as `generalised submanifolds' into the framework of metric algebroids and almost generalised para-complex structures. The setting provided by a generic metric algebroid turns out to be too coarse to provided any meaningful analysis, hence we restrict to the intermediary structure of a pre-Courant algebroid, which represents a structure along the way between the metric algebroids of double field theory and the Courant algebroids of supergravity arising from imposition of the section constraint~\cite{Jonke2018,Chatzistavrakidis:2019huz,Marotta:2021sia}. We refer to these branes as `generalised para-complex D-branes'. In particular, we show that the Born D-branes on an almost para-Hermitian manifold fit into this picture in a natural way: Any two-dimensional non-linear sigma-model naturally corresponds to an exact Courant algebroid (see e.g.~\cite{Alekseev:2004np,Severa2015,Severa-letters}); on an almost para-Hermitian manifold this is called the `large Courant algebroid'~\cite{Jonke2018,Chatzistavrakidis:2019huz,Hu:2019zro,Marotta:2021sia}. Seen in this way, our D-branes provide natural para-complex versions of the A-branes and B-branes of topological string theory. Moreover, using techniques of Courant algebroid reduction, this perspective yields an interpretation of the reduction of Born D-branes to the leaf space as Dirac structures on the physical spacetime.

We illustrate our formalism explicitly in the example of the standard six-dimensional doubled nilmanifold, which was also studied by~\cite{Albertsson2009,Hull:2019iuy}. We show how to recover the standard D-branes on the three-dimensional Heisenberg nilmanifold, and also on the $3$-torus $\IT^3$ with NS--NS $H$-flux; in particular, our formalism gives a natural geometric explanation of the well-known fact that D3-branes wrapping $\IT^3$ with $H\neq0$ are prohibited. It would be interesting to study how D-branes in other contexts are also realised by our framework, for instance the D-branes in Drinfel'd doubles which are related by Poisson-Lie T-duality~\cite{Klimcik:1996hp}, and the interplay between D-branes and non-abelian T-duality~\cite{Terrisse:2018hhf}. Both of these latter examples can be viewed as instances of the generalised T-duality in para-Hermitian geometry introduced by~\cite{Marotta:2019eqc}.

In this paper we discuss only bosonic string sigma-models, hence our discussion of D-branes is decoupled from issues of stability and preservation of supersymmetry. At the purely bosonic level, all that is needed is the boundary conformality condition,  and indeed our Born D-branes are conformally invariant. It would be interesting to develop supersymmetric extensions of the Born sigma-model, and to discern what new types of target space para-complex structures appear; (generalised) para-complex structures appear on target spaces for topologically twisted and supersymmetric sigma-models~\cite{Stojevic:2009ub,Kokenyesi:2018xgj,Hu:2019zro}, as well as in target spaces for $\cN=2$ hypermultiplets in Euclidean signature~\cite{Cortes:2005uq}. Another subtle issue we do not address in this paper is the extension to (multiple) non-abelian D-branes. In addition to the technical problems involved in the extension to higher rank gauge bundles, which are no longer intrinsic to the para-Hermitian geometry, it is not clear how to include the non-abelian transverse scalars which should now be sections of the tensor product of the normal bundle to the D-brane and the endomorphism bundle of the gauge bundle. A notion of higher rank generalised complex brane is discussed in~\cite{Gualtieri:2007ng}.

The remainder of this paper is structured into three parts. In Section~\ref{BSM} we review the Born sigma-model, extend it by a Wess-Zumino term and to open strings, derive and study properties of Born D-branes as boundary conditions for these two-dimensional field theories, and discuss how they induce D-branes on a leaf space representing a physical spacetime. In Section~\ref{sec:targetbranes} we provide putative spacetime notions of branes on a metric algebroid, introduce our notion of generalised para-complex D-branes showing how they naturally include the Born D-branes of Section~\ref{BSM} as special instances, discuss how they reduce to the physical spacetime, and finally present explicit examples on the doubled nilmanifold. To make the main messages of this paper clear, we have delegated all mathematical details surrounding para-Hermitian geometry and metric algebroids to Appendix~\ref{intropara} at the end of the paper; the reader unaquainted with these technical details may wish to consult there first before moving to the main body of the paper. \\[-2ex]

\paragraph{\bf Acknowledgments} \ 
 {\sc R.J.S.} would like to thank the editors Athanasios Chatzistavrakidis and Dimitrios Giataganas for the invitation to contribute to this special issue.
The work of {\sc V.E.M.} was funded by the STFC Doctoral Training Partnership Award ST/R504774/1 and a
Maxwell Institute Research Fellowship. The work of {\sc R.J.S.} was supported by
the STFC Consolidated Grant ST/P000363/1.

\section{Worldsheet Perspective: D-Branes on Para-Hermitian Manifolds} 
\label{BSM}

In this section we discuss string worldsheet sigma-models whose target space is a Born manifold, following and extending the treatment of~\cite{Marotta:2019eqc}. When extended to worldsheets with boundaries, an analysis of the open string boundary conditions leads to a global operational definition of a D-brane on a para-Hermitian manifold, generalizing the considerations and results for D-branes in doubled twisted tori from~\cite{Albertsson2009}. We describe several properties and special instances of our definition of D-brane, and particularly how they match with physical expectations from double field theory. 

\medskip

\subsection{Sigma-Models for Para-Hermitian Manifolds} ~\\[5pt]
We will start by defining sigma-models which describe harmonic maps from a worldsheet $\Sigma$, with metric $h$, into a para-Hermitian manifold $M$ using, when it exists, a generalised metric $\cH$ or a compatible Born geometry on $M$; see Appendices~\ref{sec:paravector} and~\ref{genmesubs} for the relevant background, definitions and notation. 

For this, we recall that the \emph{Dirichlet functional} $\cS_0:C^\infty(\Sigma,M)\longrightarrow\IR$ is obtained by endowing the space of maps $\de \phi:
T\Sigma \longrightarrow \phi^*TM$ for $\phi \in C^{\infty}(\Sigma, M)$ with a norm defined by
$\cH$, regarded as a metric on the vector space of sections of the pullback 
$\phi^*TM$ of the
tangent bundle $TM$ to $\Sigma$ by $\phi,$ and the inverse metric $h^{-1}$ on $T^*\Sigma$. This gives
a well-defined norm $\|\,\underline{\de\phi}\,\|_{h,\cH}$ for sections $\underline{\de \phi} \in
\mathsf{\Gamma}(T^*\Sigma \otimes \phi^*TM)$ which enables us to write
the Dirichlet functional as
\be \label{sigmanorm}
\cS_0[\phi]=\frac{1}{4}\,\int_{\Sigma}\, h^{-1}\big(\bar\cH (\de \phi, \de \phi)\big)\ \de \mu(h) =:
\frac{1}{4}\,\int_\Sigma\, \|\,\underline{\de \phi}\, \|_{h,\cH}^2\ \de \mu(h)\ ,
\ee 
where
$\bar \cH = \phi^* \cH$ is the pullback\footnote{Here and in the following a bar over a field on $M$ denotes its pullback to the worldsheet $\Sigma$ by the smooth map~$\phi:\Sigma\longrightarrow M$.} metric on $\phi^* TM,$ $\bar\cH (\de \phi, \de \phi) \in \mathsf{\Gamma}(T^* \Sigma \otimes T^*\Sigma)$ 
and
$$
\de \mu(h) = \star\,1
$$ 
is the area measure
induced by the worldsheet metric $h$ whose associated Hodge operator is denoted $\star$. For definiteness we take $h$ to be Lorentzian so that $\star^2=\mathds{1}.$

\begin{definition}
A \emph{sigma-model for a para-Hermitian manifold} is the theory of maps in $C^\infty(\Sigma,M)$ given by the action functional \eqref{sigmanorm}, where $\Sigma$ is a closed oriented surface
endowed with a Lorentzian metric $h$ and $(M,K, \eta)$
is an almost para-Hermitian manifold with a generalised metric $\cH$. A \emph{Born sigma-model} is a sigma-model for a para-Hermitian manifold where $\cH$ is a Born metric, i.e.~a compatible generalised metric in the sense of Definition~\ref{compagenmetr}.
\end{definition}

The equations of motion of the sigma-model are given by the equations for the critical points $\phi_0:(\Sigma,h)\longrightarrow(M,\cH)$ of \eqref{sigmanorm}. Their solutions are harmonic maps. The additional factors of $\frac12$ in \eqref{sigmanorm} and below ensure that the doubled sigma-model reduces to a conventional non-linear sigma-model on a `physical spacetime' after solving the section constraint~\cite{Marotta:2019eqc}.

The local expression for the Dirichlet functional \eqref{sigmanorm} is obtained as follows. Let $\cU \subset \Sigma$ be an open subset with a chart for $\Sigma$ consisting of local coordinates $(\sigma^\alpha) = (\tau, \sigma)$, and let $U \subset M$ be an open subset such that $\phi(\cU) \subseteq U.$ The local components of $\phi$ are given by $\phi^I \in C^{\infty}(\cU, \IR)$ for $I=1,\dots,2d$, where  $2d$ is the dimension of $M$. There is a frame $\{Z_I\}$ for $TU,$ i.e. a local frame for $TM$, such that $\underline{\de \phi}$ can be locally written as 
\be \label{localphi}
\underline{\de \phi}\,\big\rvert_{(\cU, U)}= \sum_{I=1}^{2d} \, \bar{Z}_I \otimes \de \phi^I =: \bar Z_I\otimes\de\phi^I \ ,
\ee
where $\bar{Z}_I= \phi^* Z_I$ is the pullback local frame for $\phi^*TM$, $\de$ is the de Rham differential on $\Sigma$ and we use the usual Einstein summation convention throughout. Then the action functional \eqref{sigmanorm} becomes
\be \label{sigmaaction}
\cS_0[\phi]=\frac{1}{4}\, \int_{\Sigma}\, \phi^*\big(\cH(Z_I, Z_J)\big) \, \de \phi^I \wedge \star\, \de \phi^J =: \frac14 \, \int_{\Sigma}\, \bar\cH_{IJ} \, \de \phi^I \wedge \star\, \de \phi^J\ .
\ee

We will also consider a topological term of the form 
\be \label{sigmatop}
\cS_{\rm top}[\phi]=\frac{1}{2}\, \int_{\Sigma}\, \bar\omega\ ,
\ee
which minimally couples the string to the fundamental 2-form $\omega$ of the almost para-Hermitian manifold $M.$ Its local expression is given by
\be \label{localtop} 
\cS_{\rm top}[\phi]=\frac{1}{4}\, \int_{\Sigma}\, \phi^*\big(\omega(Z_I, Z_J)\big) \, \de \phi^I \wedge  \, \de \phi^J  =: \frac{1}{4}\, \int_{\Sigma}\, \bar{\omega}_{IJ} \, \de \phi^I \wedge \, \de \phi^J \ .
\ee
This incorporates topologically trivial generalised fluxes on $M$ through the curvature 3-form
\begin{align*}
\cK = \de \omega \ .
\end{align*}

For curved worldsheets, the general form of a two-dimensional non-linear sigma-model also involves a
Fradkin-Tseytlin term
\be\nonumber
\cS_{\Psi}[\phi]=\frac1{4\pi}\, \int_{\Sigma}\,R^{(2)}(h) \, \bar\Psi \ \de
\mu(h) \ , 
\ee
where the smooth function $\Psi: M \longrightarrow \mathbb{R}$ is a scalar
dilaton field and $R^{(2)}(h)$ is the scalar curvature of the metric
$h$ on $\Sigma$. Since the metric $h$ is conformally equivalent to a
flat metric on $\Sigma$, this term can be (classically) set to $0$ by
a conformal transformation of the worldsheet and will not be
considered any further in the ensuing analysis.

We will denote by $\mathcal{S}(\cH, \omega)$ a Born sigma-model given by the sum of \eqref{sigmaaction} and~\eqref{localtop}, i.e.~the action functional is
\be \label{actcomplete}
 \cS[\phi]= \cS_0[\phi]+ \cS_{\rm top}[\phi] = \frac14 \, \int_{\Sigma}\, \bar\cH_{IJ} \, \de \phi^I \wedge \star\, \de \phi^J + \frac14\,\int_\Sigma\, \bar{\omega}_{IJ} \, \de \phi^I \wedge \, \de \phi^J  \ .
\ee
The notation stresses that the defining data for a Born sigma-model are given by
the fundamental geometric structures of a Born manifold. The action functional \eqref{actcomplete} is invariant under rigid ${\sf O}(d,d)$-transformations of any local frame $\{Z_I\}$ for $TM$, preserving the split signature metric $\eta$ (see Remark~\ref{rem:Tduality}), and so in this sense the Born sigma-model captures all T-dual sigma-models in a unified geometric description.

\medskip

\subsection{Wess-Zumino Terms from Canonical Metric Algebroids} ~\\[5pt]
If we wish to incorporate topologically non-trivial generalised fluxes in the doubled spacetime $M$, and correspondingly add a Wess-Zumino term in the Born sigma-model, we need some further restrictions on the underlying almost para-Hermitian manifold $(M,K,\eta)$. The inclusion of a canonical 3-form can be achieved by using metric algebroids on the tangent bundle $TM$ (which always exist, see Appendix~\ref{sec:malg}), together with Proposition~\ref{Dbracket3form} which shows that the difference between two D-brackets defining distinct metric algebroids with respect to the same split signature metric $\eta$ is indeed a 3-form. 
The almost para-Hermitian manifold admits a unique canonical compatible D-bracket by Example~\ref{ex:canmetricalg}, and we obtain a 3-form associated with it by choosing a reference metric algebroid. The most natural choice is induced by the Levi-Civita connection of $\eta$ (Example~\ref{ex:metricalgeeta}).

\begin{definition}
The \emph{canonical $3$-form} $H_{\texttt{can}}$ on an almost para-Hermitian manifold $(M,\!K,\!\eta)$ is given by
\begin{align}\label{eq:HfluxD}
H_{\textrm{\tt can}}(X,Y,Z) := \eta([\![X,Y]\!]_{\textrm{\tiny\tt LC}} - [\![X,Y]\!]_{\tt can},Z) \ ,
\end{align}
for all $X,Y,Z\in\sfGamma(TM)$. 
\end{definition}

The components of the canonical 3-form in the directions along the eigenbundles $L_\pm$ of the almost para-complex structure $K$ are given by
\begin{align*}
H_{\texttt{can}}(\sfp_\pm(X),&\,\sfp_\pm(Y),\sfp_\pm(Z)) \\[4pt]
& = \tfrac12\,\big( \eta([\sfp_\pm(X),\sfp_\pm(Y)],\sfp_\pm(Z)) \\ & \hspace{2cm}
- \eta([\sfp_\pm(X),\sfp_\pm(Z)],\sfp_\pm(Y)) + \eta(\sfp_\pm(X),[\sfp_\pm(Y),\sfp_\pm(Z)])\big) \ ,
\end{align*}
for all $X,Y,Z\in\sfGamma(TM)$, where $[\,\cdot\,,\,\cdot\,]$ is the Lie bracket of vector fields. These respectively vanish when $L_\pm$ is integrable. In particular, this gives a means for finding a relation between the components of $\cK=\de\omega$ and the Nijenhuis tensor of the almost para-complex structure $K$, see~\cite{Svoboda2018} for further details.

The canonical 3-form $H_{\texttt{can}}$ is in general neither closed nor has integer periods, hence we need to select a specific class of almost para-Hermitian manifolds which allow the introduction of a Wess-Zumino term in the Born sigma-model.

\begin{definition}\label{def:admissible}
An almost para-Hermitian manifold $(M,K,\eta)$ is \emph{admissible} if ${\sf H}_2(M)=0$ and (one half of) the canonical 3-form defines an integer cohomology class $\frac1{4\pi}\,[H_{\texttt{can}}]\in{\sf H}^3(M;\IZ)$.
\end{definition}

The condition on the degree~$2$ homology of $M$ guarantees that the closed image cycle $\phi(\Sigma)\subset M$ is a boundary for all maps $\phi\in C^\infty(\Sigma,M).$ The condition on the canonical 3-form is the \emph{Dirac quantisation condition} for the generalised fluxes, which implies geometrically that $H_{\texttt{can}}$ represents the Dixmier-Douady class of the tensor product of two stably isomorphic bundle gerbes on $M$ with connection. 

For an admissible almost para-Hermitian manifold, we can introduce an open three-dimensional manifold $V$ with boundary $\partial V=\Sigma$, and define the corresponding Wess-Zumino action functional $\cS_{\textrm{\tiny WZ}}:C^\infty(\Sigma,M)\longrightarrow \IR/2\pi\,\IZ$ by 
\begin{align}\label{eq:WZdouble}
\cS_{\textrm{\tiny WZ}}[\phi] = \frac12\,\int_V \, \phi^* H_{\tt can} \ ,
\end{align} 
where here we have smoothly extended the map $\phi\in C^\infty(\Sigma,M)$ to $V$. The  condition $\de H_{\tt can}=0$ will imply that the equations of motion for \eqref{eq:WZdouble} involve only the original map $\phi:\Sigma\longrightarrow M$. The Dirac quantisation condition ensures that, in the quantum theory, the contribution of the amplitude
$\exp(\,\mathrm{i}\,\cS_{\textrm{\tiny WZ}}[\phi])$ to the functional integral is well-defined,
i.e.~it is independent of the choice of three-dimensional manifold $V$ bounded by
$\Sigma$ and of the smooth extension of $\phi\in C^\infty(\Sigma,M)$ to $V$.

\medskip

\subsection{Equations of Motion and Boundary Conditions} ~\\[5pt]
In order to discuss open strings and D-branes in the context of the Born sigma-model, we assume henceforth that the two-dimensional worldsheet manifold $\Sigma$ has a non-empty boundary $\del \Sigma.$ Working throughout at lowest order in string perturbation theory, we take $\Sigma$ to be a flat worldsheet with metric $$h=-\de\tau\otimes\de\tau + \de\sigma\otimes\de\sigma \ , $$ and  choose the local coordinates $(\sigma^\alpha)=(\tau,\sigma)$ such that $\tau\in\IR$ is the local coordinate on the boundary $\del\Sigma$ while $\sigma\in\IR$ parametrises the normal direction.

The variational problem associated with the Born sigma-model can be formulated as follows. Let $\Sigma \times M$ be the total space of the (trivial) fibre bundle over $\Sigma$ with projection $\mathrm{pr}_\Sigma : \Sigma \times  M \longrightarrow \Sigma$, and let $J^1(\Sigma\times M)\longrightarrow \Sigma$ be the corresponding bundle of $1$-jets. Recall that $\mathsf{\Gamma}(\Sigma \times M) \simeq C^{\infty}(\Sigma, M)$ and $\underline{\de \phi} \in \mathsf{\Gamma}(T^*\Sigma \otimes \phi^* TM)$ for all $ \phi \in C^{\infty}(\Sigma, M).$ We shall now give an alternative definition of the action functional \eqref{actcomplete} for the Born sigma-model.

The Lagrangian function $\cL_0\in C^{\infty}\big(J^1(\Sigma \times M)\big)$ for the kinetic term \eqref{sigmanorm} is given by $\cL_0 (j^1 \phi)=\|\,\underline{\de\phi}\,\|_{h,\cH},$ where we identify the first jet prolongation $j^1\phi$ of a section $\underline{\phi}\in\mathsf{\Gamma}(\Sigma \times M)$ with $\underline{\de \phi}\,.$ The action functional is again of the form \eqref{sigmanorm}, where $\|\,\underline{\de\phi} \, \|_{h,\cH} \ \de \mu (h)$ is the Lagrangian density of the kinetic term.
The topological term \eqref{sigmatop} is included in this formulation by considering the Lagrangian density $(j^1\phi)^* \omega \in \mathsf{\Omega}^2(\Sigma).$

In the jet bundle formulation of classical field theory, the Euler-Lagrange equations for the total Lagrangian  $\cL$ are given by 
\begin{align}
(j^1\phi)^*\Big(\frac{\partial\cL}{\partial\IX^I} - \partial_\alpha\frac{\partial\cL}{\partial\IX_\alpha^I}\Big) = 0 \ ,
\end{align}
where $(\sigma^\alpha, \IX^I, \IX_\alpha^I)$ are local coordinates on the total space of the jet bundle $J^1(\Sigma \times M),$ and $\partial_\alpha:=\frac\partial{\partial\sigma^\alpha}$. For the Born sigma-model this gives
\be \label{eq:Borneom}
\de\star\,\bar\cH_{IJ}\,\de\phi^J +
\bar\cK_{IJK}\,\de \phi^J\wedge\de \phi^K=0 \ ,
\ee
where $\cK=\de\omega$.
Boundary conditions are obtained from the equations
\be \nonumber
(j^1 \phi)^* \cL \bigr\rvert_{\partial \Sigma}=  (j^1 \phi)^* \frac{\partial \cL}{\partial \IX_\alpha^I} \biggr\rvert_{\partial \Sigma}=0 \ , 
\ee
where $\underline{\phi} \in \mathsf{\Gamma}(\Sigma \times M)$ is a critical section of the Lagrangian $\cL$. 
Explicitly, the boundary conditions for the Born sigma-model read
\be \label{jetbou}
(j^1\phi)^*\big( -\tfrac{1}{2}\,\cH_{IJ}\, \IX_\sigma^J + \omega_{IJ}\, \IX_\tau^J \big) \big\rvert_{\partial \Sigma}=0 \ .
\ee

A similar form of the equations \eqref{jetbou} were obtained for the Born sigma-model in~\cite{Sakatani:2020umt}. We shall now show that they are the same as the equations which follow from the more standard approach to boundary conditions in sigma-models. These were obtained in a particular related context by~\cite{Albertsson2009}. 

The variation of the kinetic term of the Born sigma-model \eqref{sigmaaction} is given by
\be\nonumber
\delta \cS_0 = \frac{1}{4}\,\int_\Sigma\, \pounds_\varepsilon \big( \bar{\cH}_{IJ} \, \de \phi^I \wedge \star \, \de \phi^J \big) \ ,
\ee
where $\pounds_\varepsilon$ is the Lie derivative along the vector field $\varepsilon \in \sfGamma(T \Sigma)$ generating the variation.
We use the Cartan formula $\pounds_\varepsilon=\de\circ\iota_\varepsilon+\iota_\varepsilon\circ\de$, where $\iota_\varepsilon$ is the interior multiplication of forms by $\varepsilon$, together with Stokes\rq{} theorem. We then obtain the contribution to the boundary conditions as 
\be \label{kinvar}
\delta \cS_0 \big\rvert_{\partial \Sigma}= \frac{1}{2}\, \int_\Sigma\, \de \big((\iota_\varepsilon\, \de \phi^I)\, \bar{\cH}_{IJ} \star \de \phi^J\big) = \frac{1}{2}\,\int_{\partial\Sigma} \, (\iota_\varepsilon\, \de \phi^I)\, \bar{\cH}_{IJ} \star \de \phi^J \ .
\ee

The variation of the topological term \eqref{sigmatop} is similarly given by 
\be \label{vartop}
\delta \cS_{\rm top}= \frac{1}{2}\,\int_\Sigma\, \pounds_\varepsilon\, \bar{\omega}= \frac{1}{2}\,\int_\Sigma\, \iota_\varepsilon\, \bar{\cK} + \frac12\,\int_\Sigma\, \de\, \iota_\varepsilon\, \bar{\omega} \ .
\ee
The first term in the second equality of \eqref{vartop} contributes to the equations of motion, while the second term appears in the boundary conditions after using Stokes\rq{} theorem. This gives
\be \label{vartopcan}
\delta \cS_{\rm top}\big\rvert_{\partial \Sigma}= \frac{1}{2}\,\int_{\partial\Sigma} \, \iota_\varepsilon\, \bar{\omega} \ .
\ee

By adding the two variations \eqref{kinvar} and \eqref{vartopcan} together, and demanding that $\delta \cS\big|_{\partial\Sigma}=0$ for arbitrary variational vector fields $\varepsilon\in\sfGamma(T\Sigma)$, we obtain the boundary conditions \eqref{jetbou} in the more canonical form
\be \label{finalvar}
\big(-\tfrac{1}{2}\,\bar{\cH}_{IJ}\, \partial_\sigma \phi^J\, \de \sigma + \bar{\omega}_{IJ}\,\partial_\tau \phi^J\, \de \tau \big)\big\rvert_{\partial \Sigma} = 0 \ .
\ee
This generalises the boundary conditions which appeared in the approach of~\cite{Albertsson2009} to the doubled sigma-model for doubled twisted tori, and in~\cite{Sakatani:2020umt} for the Born sigma-model. From \eqref{finalvar} it follows that
\begin{align} \label{eq:conformalinv}
\partial_\tau \phi^I\,\bar{\cH}_{IJ}\, \partial_\sigma \phi^J\big\rvert_{\partial \Sigma} = 0 \ .
\end{align}
The left-hand side is the restriction to $\partial\Sigma$ of the component of the worldsheet energy-momentum tensor $T_{\tau\sigma}$ derived from \eqref{actcomplete}, and \eqref{eq:conformalinv} implies that it is conserved. In other words, the boundary conditions \eqref{finalvar} preserve conformal invariance of the Born sigma-model.

\medskip

\subsection{D-Branes for Born Sigma-Models} ~\\[5pt]
\label{sec:Dbranes}
Analysing the conformal boundary conditions \eqref{finalvar} leads to a definition of D-branes for the Born sigma-model, generalizing the approaches of~\cite{Albertsson2009,Sakatani:2020umt}.
For this, we require the solution of the equations of motion to satisfy the self-duality constraint~\cite{Marotta:2019eqc, Hull2005, Hull2007, Hull2009}
\be \label{selfduality}
\de \phi = \bar{\eta}^\sharp \circ \bar{\cH}^\flat(\star \, \de \phi) \ ,
\ee
where $\bar{\eta}^\sharp= \phi^*(\eta^\sharp)$ and $\bar{\cH}^\flat= \phi^*(\cH^\flat)$ are the pullbacks of the musical isomorphisms induced by $\eta$ and $\cH$, respectively. This constraint arises from the Born sigma-model with a Lie algebroid gauging that reduces it to the ordinary non-linear sigma-model on a `physical' target space~\cite{Marotta:2019eqc}. It enables us to provide solutions of the boundary conditions with a geometric interpretation. Written in this way,
the Born sigma-model \eqref{actcomplete} is an immediate generalisation of the sigma-models
for doubled torus fibrations that were introduced
in~\cite{Hull2007}. 

A solution of \eqref{finalvar} is given by a distribution $L \subset TM.$ To characterise the distribution $L$, we consider the short exact sequence of vector bundles on $M$ given by
\be \nonumber
0 \longrightarrow L \longrightarrow TM \longrightarrow \nu(L) \longrightarrow 0 \ ,
\ee
where $\nu(L) = TM / L$ is the normal bundle of $L.$ We then choose an orthogonal splitting $TM = L \oplus L^\perp$ with respect to the generalised metric $\cH$. This determines orthogonal projectors $\Pi \colon TM \longrightarrow L$ and $\Pi^\perp \colon TM \longrightarrow L^\perp .$

In the local form \eqref{localphi} for $\de \phi,$ the boundary conditions \eqref{finalvar} in the splitting $TM=L\oplus L^\perp$ are solved by setting
\begin{align}
\phi^*\big( \Pi^\perp (Z_I) \big) \otimes \del_\tau \phi^I \, \de \tau &=0 \ , \label{solution1}\\[4pt]
 \tfrac12\,\phi^*\big(\cH(\Pi(Z_I), Z_J)\big)\, \del_\sigma \phi^J\, \de \sigma &= \phi^*\big(\omega(\Pi(Z_I), Z_J)\big)\, \del_\tau \phi^J \, \de \tau \ . \label{solution2}
\end{align}  
On the other hand, the local form of the self-duality constraint \eqref{selfduality} is given by
\begin{align}
\bar{Z}_I \otimes \del_\tau \phi^I \, \de \tau &=  \bar{\eta}^\sharp \circ \bar{\cH}^\flat (\bar{Z}_J) \otimes \star \, \del_\sigma \phi^J\, \de \sigma \ , \label{selfdual1} \\[4pt]
\bar{Z}_I \otimes \del_\sigma \phi^I\, \de \sigma &=  \bar{\eta}^\sharp \circ \bar{\cH}^\flat(\bar{Z}_J) \otimes \star \, \del_\tau \phi^J\, \de \tau \ . \label{selfdual2}
\end{align} 
By combining \eqref{finalvar}, \eqref{solution2} and \eqref{selfdual2} we obtain
\be\nonumber
\phi^*\big(-\tfrac{1}{2}\, \eta (\Pi(Z_I), \Pi(Z_J)) + \omega(\Pi(Z_I), \Pi(Z_J)) \big)\, \partial_\tau \phi^J\, \de \tau =0 \ ,
\ee
which is solved by taking
\be \nonumber
\eta \big(\Pi(Z_I), \Pi(Z_J)\big)=0 \qquad \mbox{and} \qquad \omega\big(\Pi(Z_I), \Pi(Z_J)\big)=0 \ . 
\ee
These equations imply that the distribution $L$ must be isotropic with respect to both $\eta$ and $\omega$ in order to solve the boundary conditions \eqref{finalvar}.
 
By substituting \eqref{selfdual1} into \eqref{solution1} we obtain
\be \nonumber
\phi^*\big( \Pi^\perp ( \eta^\sharp \circ \cH^\flat(Z_I)) \big) \otimes \del_\sigma \phi^I\, \de \sigma =0 \ .
\ee
This equation together with \eqref{solution2} gives 
\be \nonumber
\eta \big(\Pi^\perp(Z_I), \Pi^\perp(Z_J)\big)=0 \ ,
\ee
where we used the isotropy of $L$ with respect to $\omega$ as well. It follows that $L^\perp$ is isotropic with respect to $\eta$, and that the ranks of both $L$ and $L^\perp$ are maximal, i.e. ${\rm{rank}}(L)= {\rm{rank}}(L^\perp)= d.$ Thus there are equal numbers of Neumann boundary conditions, given by \eqref{solution2}, and Dirichlet boundary conditions, given by \eqref{solution1}, as necessitated by T-duality. These isotropy conditions are consistent with conformal invariance: Substituting the self-duality constraints \eqref{selfdual1} and \eqref{selfdual2} into \eqref{eq:conformalinv}, and using the relation between the metrics $\eta$ and $\cH$ from Example~\ref{ex:etacHrel}, we find
\begin{align*}
\partial_\sigma\phi^I \, \bar\eta_{IJ} \, \partial_\sigma\phi^J\big|_{\partial \Sigma} = 0 \qquad \mbox{and} \qquad \partial_\tau\phi^I \, \bar\eta_{IJ} \, \partial_\tau\phi^J\big|_{\partial \Sigma} = 0 \ ,
\end{align*}
which shows that the restrictions of the sections $(j^1\phi)^*(\mathbb{X}_\sigma)$ and $(j^1\phi)^*(\mathbb{X}_\tau)$ to the boundary have images belonging to maximally isotropic sub-bundles of $\phi^*TM$.

\begin{definition} \label{dbranepara}
A \emph{D-brane for a Born sigma-model}, or \emph{almost Born D-brane} for short, is a maximally isotropic  distribution $L_{\textrm{\tiny D}} \subset TM$ which is preserved by the almost para-complex structure $K$, i.e. $K(L_{\textrm{\tiny D}}) = L_{\textrm{\tiny D}}.$ A \emph{Born D-brane} is an almost Born D-brane $L_{\textrm{\tiny D}}$ which is involutive, i.e. $[\sfGamma(L_{\textrm{\tiny D}}),\sfGamma(L_{\textrm{\tiny D}})]\subseteq\sfGamma(L_{\textrm{\tiny D}})$.
\end{definition}

\begin{remark}\label{rem:maxisotropic}
Definition~\ref{dbranepara} implies that $L_{\textrm{\tiny D}}\subset TM$ is a sub-bundle of constant rank, and moreover that $L_{\textrm{\tiny D}}$ is maximally isotropic with respect to both the split signature metric $\eta$ and the fundamental 2-form $\omega$, because of the invariance condition with respect to $K.$
\end{remark}

\begin{example}\label{ex:simplestDbranes}
Let $(M,K,\eta)$ be any almost para-Hermitian manifold. Then the eigenbundles $L_\pm$ of the almost para-complex structure $K\in{\sf Aut}_\unit(TM)$ are always almost Born D-branes.
\end{example}

\medskip

\subsection{Lagrangian Subspaces and Linear D-Branes} ~\\[5pt]
\label{sec:Lagrangian}
To work towards a better understanding of the general structure of a Born D-brane, we will first unravel the meaning of Definition~\ref{dbranepara} in the simple finite-dimensional setting where the target para-Hermitian vector bundle is the generalised tangent bundle of a point (see Example~\ref{ex:gentanbun}). We start by recalling the general construction of Lagrangian subspaces of $\IV=V\oplus V^*,$ where $V$ is a $d$-dimensional real vector space and $V^*$ is its linear dual. 

In our framework, $(\IV, K, \braket{\,\cdot\,,\, \cdot\,})$ is a para-Hermitian vector space with para-complex structure $K \in {\sf{Aut}}(\IV)$ defined by $K|_V=\unit_V$ and $K|_{V^*}=-\unit_{V^*}$, and split signature inner product induced by the duality pairing 
$$\braket{v+ \alpha, w + \beta}= \iota_v \beta+ \iota_w \alpha \ ,$$
where $v,w\in V$ and $\alpha,\beta\in V^*$.
We denote by ${\rm pr}_V: \IV \longrightarrow V$ the projection onto $V,$ i.e.~${\rm pr}_V(v+\alpha)=v$ for all $v+\alpha \in \IV.$

\begin{remark} \label{can2form}
By construction, $K$ and $\braket{\,\cdot\,,\, \cdot\,}$ satisfy the compatibility condition 
$$\braket{K(v+\alpha), K(w+\beta)}:=-\braket{v+\alpha, w+\beta} \ ,$$
for all $v+\alpha, \, w+\beta \in \IV$.
Thus there is a canonical 2-form on $\IV$ defined by 
$$\omega(v+\alpha, w+\beta):=\braket{K(v+\alpha), w+\beta}= \iota_v \beta- \iota_w \alpha \ .$$ 
\end{remark}

\begin{definition}
A \emph{Lagrangian subspace} of $(\IV, K, \braket{\,\cdot\,,\, \cdot\,})$ is a maximally isotropic vector subspace $L$ of $\IV$ with respect to the inner product $\braket{\,\cdot\,,\, \cdot\,}.$ We denote by ${\sf Lag}(\IV)$ the set of all Lagrangian subspaces of $\IV$.
\end{definition}

Notice that ${\sf Lag}(\IV)$ is non-empty, as $V, V^* \in {\sf{Lag}}(\IV).$

We now describe the construction of all Lagrangian subspaces. Let $W\subseteq V$ be a vector subspace of $V$ and $\Omega \in \midwedge^2\, W^*$ a 2-form on $W$. Let ${\sf Ann}(W)\subseteq V^*$ be the annihilator subspace of $W$ in $V^*$, i.e. the set of all $\alpha\in V^*$ such that $\iota_w\alpha=0$ for all $w\in W$. Then define 
\be \label{lagsub}
L(W, \Omega) \coloneqq  \{ v+ \iota_v \tilde{\Omega} + \alpha  \in \IV \ | \ v \in W \ , \ \alpha \in {\sf{Ann}}(W) \} \ \subset \ \IV
\ee
where $\tilde{\Omega}\in \midwedge^2\, V^*$ is any extension of $\Omega,$ i.e. $\tilde{\Omega}\rvert_{W}=\Omega,$ or equivalently $\iota_w\iota_v \tilde{\Omega}= \iota_w\iota_v \Omega$ for all~$v,w \in W.$ We have the well-known result of

\begin{proposition}
$L(W, \Omega) \subset \IV$ is a Lagrangian subspace of $\IV$ which is independent of the choice of extension of $\Omega.$
\end{proposition}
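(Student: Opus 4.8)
The plan is to verify two things: first, that $L(W,\Omega)$ is well-defined, i.e.\ independent of the choice of extension $\tilde\Omega$ of $\Omega$; and second, that it is a Lagrangian subspace, meaning it is maximally isotropic with respect to $\braket{\,\cdot\,,\,\cdot\,}$. For independence, suppose $\tilde\Omega_1$ and $\tilde\Omega_2$ are two extensions of $\Omega$ to $\midwedge^2 V^*$. For $v\in W$ the difference $\iota_v\tilde\Omega_1-\iota_v\tilde\Omega_2=\iota_v(\tilde\Omega_1-\tilde\Omega_2)$ is an element of $V^*$; I claim it lies in ${\sf Ann}(W)$, since for any $w\in W$ we have $\iota_w\iota_v(\tilde\Omega_1-\tilde\Omega_2)=\iota_w\iota_v\tilde\Omega_1-\iota_w\iota_v\tilde\Omega_2=\iota_w\iota_v\Omega-\iota_w\iota_v\Omega=0$, using precisely the characterisation of extension given in the statement. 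Hence $v+\iota_v\tilde\Omega_1+\alpha=v+\iota_v\tilde\Omega_2+\alpha'$ with $\alpha'=\alpha+\iota_v(\tilde\Omega_1-\tilde\Omega_2)\in{\sf Ann}(W)$, so the two descriptions of the set coincide.

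Next I would compute the dimension. Since $L(W,\Omega)$ is the image of the linear map $W\oplus{\sf Ann}(W)\to\IV$, $(v,\alpha)\mapsto v+\iota_v\tilde\Omega+\alpha$, which is visibly injective (the $V$-component recovers $v$, then the $V^*$-component recovers $\alpha$), we get $\dim L(W,\Omega)=\dim W+\dim{\sf Ann}(W)=\dim W+(d-\dim W)=d$. So $L(W,\Omega)$ has half the dimension of $\IV$, which is the maximal possible dimension for an isotropic subspace with respect to a split-signature form of signature $(d,d)$.

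It then remains to check isotropy: $\braket{x_1,x_2}=0$ for all $x_1,x_2\in L(W,\Omega)$. Write $x_i=v_i+\iota_{v_i}\tilde\Omega+\alpha_i$ with $v_i\in W$, $\alpha_i\in{\sf Ann}(W)$. Using the formula $\braket{v+\alpha,w+\beta}=\iota_v\beta+\iota_w\alpha$ from the duality pairing, I compute
\[
\braket{x_1,x_2}=\iota_{v_1}\big(\iota_{v_2}\tilde\Omega+\alpha_2\big)+\iota_{v_2}\big(\iota_{v_1}\tilde\Omega+\alpha_1\big)=\iota_{v_1}\iota_{v_2}\tilde\Omega+\iota_{v_2}\iota_{v_1}\tilde\Omega+\iota_{v_1}\alpha_2+\iota_{v_2}\alpha_1 \ .
\]
The first two terms cancel because $\tilde\Omega$ is antisymmetric, so $\iota_{v_1}\iota_{v_2}\tilde\Omega=-\iota_{v_2}\iota_{v_1}\tilde\Omega$; the last two terms vanish because $\alpha_i\in{\sf Ann}(W)$ and $v_j\in W$. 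Hence $\braket{x_1,x_2}=0$ and $L(W,\Omega)$ is isotropic, and being $d$-dimensional it is maximally isotropic, i.e.\ Lagrangian.

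There is no real obstacle here; the only point requiring a moment's care is recognising that the two computational facts one needs—that $\iota_v(\tilde\Omega_1-\tilde\Omega_2)\in{\sf Ann}(W)$ for independence, and that $\iota_{v_1}\iota_{v_2}\tilde\Omega+\iota_{v_2}\iota_{v_1}\tilde\Omega=0$ together with $\iota_{v_i}\alpha_j=0$ for isotropy—are both immediate consequences of the defining property of an extension and of antisymmetry, respectively, so that nothing about $\Omega$ beyond its being a $2$-form on $W$ is used. One should also note in passing that maximality of the rank, rather than needing a separate argument, follows from the dimension count combined with the standard fact that isotropic subspaces for a nondegenerate form of signature $(d,d)$ have dimension at most $d$.
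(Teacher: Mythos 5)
Your proof is correct, and it is the standard argument: the paper itself states this result as ``well-known'' and supplies no proof, so there is nothing to diverge from. Your three steps --- well-definedness via $\iota_v(\tilde\Omega_1-\tilde\Omega_2)\in{\sf Ann}(W)$, the dimension count $\dim L(W,\Omega)=d$ from injectivity of $(v,\alpha)\longmapsto v+\iota_v\tilde\Omega+\alpha$, and isotropy from antisymmetry of $\tilde\Omega$ together with $\iota_{v}\alpha=0$ for $v\in W$, $\alpha\in{\sf Ann}(W)$ --- are exactly the computations needed, and the maximality remark at the end correctly closes the argument.
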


We can show that any Lagrangian subspace of $\IV$ has the form $L(W, \Omega)$ through

\begin{theorem} \label{lagwo}
For any $L \in \sf{Lag}(\IV)$, define the vector subspace $W \coloneqq \mathrm{pr}_V (L)\subseteq V$ and the 2-form $\Omega \in \midwedge^2\, W^*$ by $\Omega(v,w) \coloneqq \iota_w \alpha=-\iota_v\beta, $ for all
$v+\alpha, w+\beta \in L.$ Then $L=L(W,\Omega)$.
\end{theorem}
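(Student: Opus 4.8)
The plan is to show two things: first that the formula $\Omega(v,w) := \iota_w\alpha$ is well-defined on $W = \mathrm{pr}_V(L)$, and second that the Lagrangian subspace $L(W,\Omega)$ built from this data coincides with $L$. Both will follow from the maximal isotropy of $L$ together with a dimension count.

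\medskip

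\textbf{Well-definedness of $\Omega$.} Given $v \in W$, choose $\alpha \in V^*$ with $v+\alpha \in L$. I first claim $\iota_w\alpha$ depends only on $v$ and $w \in W$, not on the chosen lift $\alpha$. If $v+\alpha, v+\alpha' \in L$ then $\alpha - \alpha' \in L \cap V^*$ (using that $L$ is a subspace), so for any $w+\beta \in L$ isotropy of $L$ gives $\braket{w+\beta,\, \alpha-\alpha'} = \iota_w(\alpha-\alpha') = 0$; since $w$ ranges over all of $W$ this shows $\alpha-\alpha' \in \mathsf{Ann}(W)$, hence $\iota_w\alpha = \iota_w\alpha'$ for all $w \in W$. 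The same argument applied with the roles of the two vectors swapped, together with the isotropy relation $\braket{v+\alpha, w+\beta} = \iota_v\beta + \iota_w\alpha = 0$ for $v+\alpha, w+\beta \in L$, shows that $\iota_w\alpha = -\iota_v\beta$, so the two expressions in the statement agree. Antisymmetry of $\Omega$ is then immediate: $\Omega(v,w) = \iota_w\alpha = -\iota_v\beta = -\Omega(w,v)$. Thus $\Omega \in \midwedge^2 W^*$ is a genuine $2$-form.

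\medskip

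\textbf{The inclusion $L \subseteq L(W,\Omega)$.} Fix any extension $\tilde\Omega \in \midwedge^2 V^*$ of $\Omega$. Take $v+\alpha \in L$, so $v \in W$ by definition of $W$. I want to show $v+\alpha \in L(W,\Omega)$, i.e.\ that $\alpha - \iota_v\tilde\Omega \in \mathsf{Ann}(W)$. For any $w \in W$ pick $\beta$ with $w+\beta \in L$; then $\iota_w(\alpha - \iota_v\tilde\Omega) = \iota_w\alpha - \iota_w\iota_v\tilde\Omega = \Omega(v,w) - \iota_w\iota_v\Omega = \Omega(v,w) - (-\Omega(v,w)) \cdot$\,---\,more carefully, $\iota_w\iota_v\tilde\Omega = \tilde\Omega(v,w)$ with the sign convention fixed so that this equals $\Omega(v,w)$, hence the difference vanishes. (One must track the interior-product sign convention consistently with the definition of $L(W,\Omega)$ in \eqref{lagsub}; this is the one genuinely fiddly point.) Therefore $v+\alpha = v + \iota_v\tilde\Omega + (\alpha - \iota_v\tilde\Omega)$ with $v \in W$ and $\alpha - \iota_v\tilde\Omega \in \mathsf{Ann}(W)$, which is exactly membership in $L(W,\Omega)$.

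\medskip

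\textbf{Equality by dimension count.} We have $L \subseteq L(W,\Omega) \subset \IV$, and by the preceding Proposition $L(W,\Omega)$ is Lagrangian, so $\dim L(W,\Omega) = d = \dim \IV/2$. Since $L$ is also maximally isotropic, $\dim L = d$ as well. A subspace contained in another of the same finite dimension is equal to it, so $L = L(W,\Omega)$. The main obstacle is not conceptual but bookkeeping: one must fix the sign convention for $\iota_v\tilde\Omega$ once and for all (consistently with Remark~\ref{can2form} and \eqref{lagsub}) and verify the annihilator computation in the middle step respects it; everything else is a direct consequence of maximal isotropy and linear algebra.
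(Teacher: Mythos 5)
Your proof is correct, and it reaches the conclusion by a slightly different route than the paper. The paper's proof is constructive: it uses the short exact sequence $0 \to {\sf{Ann}}(W) \to L \to W \to 0$ with $\ker\big(\mathrm{pr}_V|_L\big)={\sf{Ann}}(W)$, chooses a splitting $s:W\to L$ to write $L=\{v+A(v)+\alpha\}$ for a linear map $A:W\to V^*$, extracts the $2$-form from skew-symmetry of $v\mapsto A(v)|_W$ (which, as in your argument, is exactly the isotropy computation), and then rewrites each element as $v+\iota_v\tilde\Omega+\alpha'$, obtaining the set equality $L=L(W,\Omega)$ directly without any dimension count. You instead avoid choosing a splitting: you verify that the formula $\Omega(v,w)=\iota_w\alpha$ in the statement is well defined (independence of the lift $\alpha$, via $\alpha-\alpha'\in L\cap V^*\subseteq{\sf{Ann}}(W)$ — a point the paper handles only implicitly through the splitting), prove the single inclusion $L\subseteq L(W,\Omega)$, and conclude by comparing dimensions, using that every maximally isotropic subspace of the split-signature space $\IV$ has dimension $d$ and that $L(W,\Omega)$ is Lagrangian by the preceding Proposition. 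Both arguments hinge on the same isotropy identity $\iota_v\beta+\iota_w\alpha=0$; yours is shorter and choice-free at the price of invoking the standard Witt-index fact, the paper's is splitting-dependent but yields equality outright. Two small tidying remarks: bilinearity of $\Omega$ deserves a sentence (it follows since lifts add, i.e. $v_1+\alpha_1,\,v_2+\alpha_2\in L$ gives $(v_1+v_2)+(\alpha_1+\alpha_2)\in L$), and the sign convention you flag is resolved simply by fixing $\iota_w\iota_v\tilde\Omega=\tilde\Omega(v,w)$, which is consistent with the paper's definition of the extension ($\iota_w\iota_v\tilde\Omega=\iota_w\iota_v\Omega$ for $v,w\in W$) and with $\Omega(v,w)=\iota_w\alpha$.
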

\begin{proof}
Since $W \coloneqq \mathrm{pr}_V (L),$ it follows that $\ker\big(\mathrm{pr}_V \big\rvert_L\big)= {\sf{Ann}}(W).$ Thus there is a short exact sequence of vector spaces given by
\be \label{seqL}
0 \longrightarrow {{\sf{Ann}}(W)} \xlongrightarrow{ \ i \ } L \xlongrightarrow{\mathrm{pr}_V} W \longrightarrow 0 \ ,
\ee
where $i$ is the subspace inclusion. 
Let $s: W \longrightarrow L$ be a splitting of \eqref{seqL}. Then ${\rm{im}}(s)= \{ v + A(v) \in L\  |\ v \in W \}$ for a linear map $A: W \longrightarrow V^*.$
Therefore 
$$L={\rm{im}}(s)\oplus {\sf{Ann}}(W)=\{ v+ A(v)+ \alpha \in \IV\  | \ v \in W \ , \ \alpha \in  {\sf{Ann}}(W)\} \ .$$
Since $L$ is both isotropic and coisotropic with respect to $\braket{\,\cdot\,,\, \cdot\,}$, it follows that
$$0=\braket{v+A(v)+\alpha, w+A(w)+\beta}=\iota_v A(w)+ \iota_w A(v) \ .$$
In other words, the map $W \longrightarrow W^*$ given by $v \longmapsto A(v)\rvert_W$ is skew-symmetric and so determines a 2-form $\Omega$ on $W$ such that $\iota_v \Omega \coloneqq A(v)\rvert_W.$

Let $\tilde{\Omega} \in \midwedge^2\, V^*$ be any extension of $\Omega$ and define $$\gamma_v \coloneqq  A(v)- \iota_v \tilde{\Omega} \ \in \ {\sf{Ann}}(W) \ , $$
for all $v\in W$. 
Then any element $v+A(v)+\alpha \in L$ can be written as
$$v+A(v)+ \alpha= v+ \iota_v \tilde{\Omega} + \gamma_v +\alpha= v+\iota_v \tilde{\Omega} + \alpha\rq{} $$
where $\alpha\rq{}=\alpha+\gamma_v \in {\sf{Ann}}(W).$ Hence $L=L(W, \Omega).$
\end{proof}

With these preliminary considerations we can now formulate precisely what is meant by a D-brane in this setting. 

\begin{definition}\label{def:linearD}
A \emph{linear D-brane} in the para-Hermitian vector space $(\IV, K, \braket{\,\cdot\,,\, \cdot\,})$ is a Lagrangian subspace $L$ which is preserved by the para-complex structure $K\in {\sf{Aut}}(\IV),$ i.e.~$K(L) = L.$
\end{definition}

\begin{remark}
The invariance condition on $L$ in Definition~\ref{def:linearD} with respect to $K$, as well as its Lagrangian condition, together imply that $L$ is also maximally isotropic with respect to the 2-form $\omega$ introduced in Remark~\ref{can2form}.
\end{remark}

\begin{proposition} \label{lindbraneW}
Let $L$ be a linear D-brane in $(\IV, K, \braket{\,\cdot\,,\, \cdot\,})$, and let $W\subseteq V$ be the vector subspace $W:={\rm pr}_V(L)$. Then $L=W \oplus {\sf{Ann}}(W).$
\end{proposition}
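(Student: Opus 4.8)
The plan is to combine Theorem~\ref{lagwo} with the $K$-invariance condition to show that the $2$-form $\Omega$ appearing in the decomposition $L = L(W,\Omega)$ must vanish. By Theorem~\ref{lagwo}, since $L$ is Lagrangian we may write $L = L(W,\Omega)$ where $W = \mathrm{pr}_V(L)$ and $\Omega \in \midwedge^2\, W^*$ is determined by $\Omega(v,w) = \iota_w\alpha$ for all $v+\alpha,\, w+\beta \in L$. Concretely, an arbitrary element of $L$ has the form $v + \iota_v\tilde\Omega + \alpha$ with $v \in W$ and $\alpha \in {\sf Ann}(W)$, for any fixed extension $\tilde\Omega \in \midwedge^2\, V^*$ of $\Omega$.

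The key step is to apply $K$ to such an element and use $K(L) = L$. Since $K|_V = \unit_V$ and $K|_{V^*} = -\unit_{V^*}$, we get $K(v + \iota_v\tilde\Omega + \alpha) = v - \iota_v\tilde\Omega - \alpha$. For this to lie in $L$, it must be expressible as $v' + \iota_{v'}\tilde\Omega + \alpha'$ for some $v' \in W$, $\alpha' \in {\sf Ann}(W)$; projecting to $V$ forces $v' = v$, and then the $V^*$-component gives $-\iota_v\tilde\Omega - \alpha = \iota_v\tilde\Omega + \alpha'$, i.e. $-2\,\iota_v\tilde\Omega = \alpha + \alpha' \in {\sf Ann}(W)$. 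Restricting this covector to $W$ kills the right-hand side, so $\iota_w\iota_v\tilde\Omega = 0$ for all $w \in W$, i.e. $\iota_w\iota_v\Omega = 0$ for all $v,w \in W$. Hence $\Omega = 0$, and therefore $L = L(W,0) = W \oplus {\sf Ann}(W)$. Finally, it remains to identify ${\sf Ann}(W)$ with the summand appearing in the statement; since the statement writes ${\sf Ann}(W)$ directly this is immediate, and one may note in passing that $\dim W + \dim {\sf Ann}(W) = d$ is consistent with $\dim L = d$.

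I do not anticipate a genuine obstacle here — the argument is essentially a short linear-algebra computation once Theorem~\ref{lagwo} is invoked. The only mildly delicate point is bookkeeping with the extension $\tilde\Omega$: one must be careful that the conclusion $\iota_w\iota_v\tilde\Omega = 0$ is only claimed for $v,w \in W$ (which is all that is needed, since $\Omega$ lives on $W$), rather than overclaiming that $\tilde\Omega$ itself vanishes. Alternatively, one could sidestep $\tilde\Omega$ entirely by working directly with the intrinsic description of $\Omega$ from Theorem~\ref{lagwo}: take $v + \alpha \in L$, note $K(v+\alpha) = v - \alpha \in L$, and compute $\Omega(v,v') $ using the pair $v + \alpha$ versus $\Omega(v,v')$ using $v - \alpha$ to get $\Omega(v,v') = -\Omega(v,v')$ hence $\Omega = 0$; this is perhaps the cleanest route and avoids any choice of extension.
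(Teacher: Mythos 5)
Your proof is correct and takes essentially the same approach as the paper: both invoke Theorem~\ref{lagwo} to write $L=L(W,\Omega)$ and then use the $K$-invariance of $L$ to force $\Omega=0$, whence $L=W\oplus{\sf Ann}(W)$. The paper packages the key step as the vanishing of $\omega$ on $L$, computing $0=\braket{K(v+\iota_v\tilde\Omega+\alpha),\,w+\iota_w\tilde\Omega+\beta}=2\,\iota_v\iota_w\tilde\Omega$, while you instead match the components of $K(v+\iota_v\tilde\Omega+\alpha)$ as an element of $L(W,\Omega)$ and restrict to $W$ — the same computation in slightly different clothing, and your intrinsic variant at the end (using $\Omega(v,w)=\iota_w\alpha$ applied to both $v+\alpha$ and $v-\alpha$) is an equally valid shortcut.
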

\begin{proof}
Since $L\in{\sf Lag}(\IV)$, by Theorem~\ref{lagwo} it follows that $L=L(W,\Omega)$. We show that $\Omega=0$. 
For any $v+\iota_v \tilde{\Omega} +\alpha, \, w+\iota_w\tilde{\Omega}+\beta \in L(W, \Omega)$, we compute
\begin{align}
0=\omega(v+\iota_v \tilde{\Omega} +\alpha,  w+\iota_w\tilde{\Omega}+\beta)&= \braket{K(v+\iota_v \tilde{\Omega} +\alpha), w+\iota_w\tilde{\Omega}+\beta} \nonumber\\[4pt]
&=\braket{v -\iota_v \tilde{\Omega} -\alpha, w+\iota_w\tilde{\Omega}+\beta} \nonumber\\[4pt]
&=2\,\iota_v\iota_w\tilde{\Omega} \ . \nonumber
\end{align}
Thus $\iota_v \iota_w\Omega=0,$ for all $ v,w \in W.$
\end{proof}

\medskip

\subsection{General Structure of Born D-Branes} ~\\[5pt]
\label{sec:genBornDbranes}
We can now describe the general structure of a Born D-brane, beyond the simplest cases provided by Example~\ref{ex:simplestDbranes}, by applying the linear algebra developed in Section~\ref{sec:Lagrangian} fibrewise. Let $(M,K,\eta)$ be an almost para-Hermitian manifold. Recall that $TM=L_+ \oplus L_-$, where the eigenbundles $L_\pm$ of the almost para-complex structure $K$ are maximally isotropic sub-bundles of $TM$ with respect to $\eta.$ Denote by $\sfp_\pm : TM \longrightarrow L_\pm$ the projections onto $L_\pm\subset TM.$ If $L \subset TM$ is any sub-bundle of constant rank, then $$W_L \coloneqq \sfp_+(L)=L_+ \cap L$$ is a vector sub-bundle of $L_+.$ We denote by $$\overline{W}{\!}_L\coloneqq \eta^\sharp\big({\sf{Ann}}(W_L)\big) $$ the image in $L_-$ of the annihilator sub-bundle ${\sf{Ann}}(W_L) \subset T^*M$ under the musical isomorphism $\eta^\sharp : T^*M \longrightarrow TM$ induced by the split signature metric $\eta.$

\begin{lemma} \label{lagrconstrank}
Let $(M, K, \eta)$ be an almost para-Hermitian manifold and $L\subseteq TM$ a maximally isotropic sub-bundle such that $W_L\coloneqq \sfp_+(L)$ has constant rank. Then $L$ can be expressed in the form
$$L=\big\{ e+  \bar{\Omega}(e) + \bar{\alpha} \ \big| \ e \in W_L \ , \ \bar{\alpha} \in \overline{W}\!_L \big\} $$
where $\bar\Omega: W_L \longrightarrow L_- $ is a vector bundle morphism covering the identity whose restriction to $W_L$ induces a 2-form $\Omega \in \sfGamma(\midwedge^2\, W_L^*).$
\end{lemma}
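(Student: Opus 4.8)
The plan is to prove Lemma~\ref{lagrconstrank} by reducing the global statement to the fibrewise linear-algebra result of Theorem~\ref{lagwo}, exploiting the decomposition $TM = L_+\oplus L_-$ and the identification of $L_-$ with $L_+^*$ furnished by the split signature metric $\eta$. First I would fix the basic identifications: since $L_\pm$ are maximally isotropic with respect to $\eta$ and $TM=L_+\oplus L_-$, the pairing $\eta$ restricts to a perfect pairing between $L_+$ and $L_-$, hence $\eta^\sharp$ induces a vector bundle isomorphism $L_-\xrightarrow{\ \sim\ } L_+^*$ (equivalently, an isomorphism of each fibre $TM_x$ with the para-Hermitian vector space $\IV_x=L_{+,x}\oplus L_{+,x}^*$ of Section~\ref{sec:Lagrangian}, with $K$ and $\braket{\,\cdot\,,\,\cdot\,}$ going over to the structures there). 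Under this identification ${\sf Ann}(W_L)\subset T^*M$ sits inside $L_+^*$ as the fibrewise annihilator of $W_L=\sfp_+(L)$, and $\overline W\!_L=\eta^\sharp({\sf Ann}(W_L))\subseteq L_-$ is exactly its image; this is where the role of $\eta^\sharp$ in the statement comes from.

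Next I would apply Theorem~\ref{lagwo} pointwise: for each $x\in M$, the fibre $L_x$ is a Lagrangian subspace of $\IV_x$, so $L_x=L(W_{L,x},\Omega_x)$ for the subspace $W_{L,x}=\mathrm{pr}_{L_{+,x}}(L_x)$ and the skew form $\Omega_x$ on $W_{L,x}$ defined by $\Omega_x(e,e')=\iota_{e'}\alpha$ for $e+\alpha,e'+\alpha'\in L_x$ (reading $\alpha\in L_{+,x}^*$ via $\eta$). The hypothesis that $W_L$ has constant rank guarantees that $W_L\subseteq L_+$ is a genuine vector sub-bundle, so that ${\sf Ann}(W_L)$ and $\overline W\!_L$ are sub-bundles and the short exact sequence $0\to {\sf Ann}(W_L)\to L\to W_L\to 0$ of Theorem~\ref{lagwo} holds at the level of bundles; choosing a smooth splitting $s\colon W_L\to L$ (possible since we are over a base that admits partitions of unity, or simply locally and then glued) produces a smooth bundle morphism $A\colon W_L\to L_-$ with $\mathrm{im}(s)=\{e+A(e)\}$, and the isotropy of $L$ forces $A$ to be fibrewise skew after composing with $\eta^\flat$, i.e. it determines $\Omega\in\sfGamma(\midwedge^2 W_L^*)$ with $\iota_e\Omega=\eta^\flat(A(e))|_{W_L}$. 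Finally I would define $\bar\Omega\colon W_L\to L_-$ to be $A$ itself (a bundle morphism covering the identity), note that, exactly as in the proof of Theorem~\ref{lagwo}, the difference $\gamma_e:=A(e)-\bar\Omega_{\mathrm{ext}}(e)$ for any extension lands in $\overline W\!_L$, so every element of $L$ can be rewritten in the claimed normal form $e+\bar\Omega(e)+\bar\alpha$ with $\bar\alpha\in\overline W\!_L$; conversely each such element lies in $L$ because $L(W_L,\Omega)$ is independent of the extension. This gives the asserted description of $L$.

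The main obstacle I anticipate is not the fibrewise algebra — that is handed to us by Theorem~\ref{lagwo} — but the \emph{smoothness/constant-rank bookkeeping}: one must check that $W_L=L_+\cap L$, which a priori is only defined as a set of pointwise intersections, is actually a smooth sub-bundle, and this is precisely what the constant-rank hypothesis buys (intersection of sub-bundles of locally constant rank is a sub-bundle). Once $W_L$ is a sub-bundle, ${\sf Ann}(W_L)$, the sequence $0\to{\sf Ann}(W_L)\to L\to W_L\to 0$, and a smooth splitting all follow by standard bundle arguments, and the 2-form $\Omega$ is then smooth because $A$ is. A secondary subtlety is keeping careful track of the two musical isomorphisms: the statement is cleanest if one is explicit that "$\bar\Omega(e)$'' means an element of $L_-$ whose image under $\eta^\flat$ restricts to $\iota_e\Omega\in W_L^*$, matching the term $\iota_v\tilde\Omega$ in \eqref{lagsub} once $V^*$ is identified with $L_-$ via $\eta^\sharp$; I would state this identification once at the outset so the rest of the argument reads verbatim from Theorem~\ref{lagwo}.
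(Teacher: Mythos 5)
Your proposal is correct and follows essentially the same route as the paper: the paper's proof likewise reduces to Theorem~\ref{lagwo} applied fibrewise through the identification $TM=L_+\oplus L_-\simeq L_+\oplus L_+^*$ induced by $\eta$, with the maximal isotropy of $L$ forcing the skew-symmetry of $\Omega=\eta^\flat\circ\bar\Omega|_{W_L}$. The only difference is that you spell out the smoothness and constant-rank bookkeeping (that $W_L$, ${\sf Ann}(W_L)$ and a splitting of the sequence exist as bundles), which the paper leaves implicit.
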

\begin{proof}
Since $L$ is maximally isotropic, the restriction $\bar{\Omega}\rvert_{W_L}: W_L \longrightarrow \overline{W}\!_L$ is fibrewise skew-symmetric in the sense that $\Omega\coloneqq \eta^\flat \circ \bar\Omega\rvert_{W_L} \in \sfGamma(\midwedge^2 W^*_L)$. 
The result now follows from Theorem~\ref{lagwo} via the fibrewise identification $T M={L_+}\oplus {L_-} \simeq {L_+} \oplus L_+^*$ by using the metric $\eta.$
\end{proof}

\begin{proposition}\label{prop:LDWD}
Let $(M, K, \eta)$ be an almost para-Hermitian manifold and $L_{\textrm{\tiny D}} \subset TM$ an almost Born D-brane such that $W_{\textrm{\tiny D}}:=W_{L_{\textrm{\tiny D}}} = \sfp_+(L_{\textrm{\tiny D}})$ has constant rank. Then $$L_{\textrm{\tiny D}}=W_{{\textrm{\tiny D}}}\,\oplus\, \overline{W}\!_{{\textrm{\tiny D}}} \ . $$
\end{proposition}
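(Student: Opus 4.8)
The plan is to upgrade Proposition~\ref{lindbraneW} from the linear setting to the bundle setting, with Lemma~\ref{lagrconstrank} supplying the structural input. First I would apply Lemma~\ref{lagrconstrank}, which is legitimate because $L_{\textrm{\tiny D}}$ is maximally isotropic and $W_{\textrm{\tiny D}}:=\sfp_+(L_{\textrm{\tiny D}})$ has constant rank by hypothesis, to write
\[
L_{\textrm{\tiny D}} = \big\{\, e + \bar\Omega(e) + \bar\alpha \ \big|\ e\in W_{\textrm{\tiny D}}\,,\ \bar\alpha\in\overline{W}\!_{{\textrm{\tiny D}}}\,\big\}\,,
\]
where $\bar\Omega:W_{\textrm{\tiny D}}\longrightarrow L_-$ induces the 2-form $\Omega=\eta^\flat\circ\bar\Omega\rvert_{W_{\textrm{\tiny D}}}\in\sfGamma(\midwedge^2 W_{\textrm{\tiny D}}^*)$. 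It then suffices to show $\Omega=0$. Granting this, for each $e\in W_{\textrm{\tiny D}}$ the section $\bar\Omega(e)\in L_-$ satisfies $\eta(\bar\Omega(e),e')=\Omega(e,e')=0$ for all $e'\in W_{\textrm{\tiny D}}=\sfp_+(L_{\textrm{\tiny D}})$, and since $\overline{W}\!_{{\textrm{\tiny D}}}=\eta^\sharp({\sf{Ann}}(W_{\textrm{\tiny D}}))$ is precisely the $\eta$-orthogonal complement of $W_{\textrm{\tiny D}}$ inside $L_-$, this forces $\bar\Omega(e)\in\overline{W}\!_{{\textrm{\tiny D}}}$. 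Absorbing $\bar\Omega(e)$ into the $\overline{W}\!_{{\textrm{\tiny D}}}$-summand then yields $L_{\textrm{\tiny D}}=\{\,e+\bar\beta\ |\ e\in W_{\textrm{\tiny D}},\ \bar\beta\in\overline{W}\!_{{\textrm{\tiny D}}}\,\}$, which is the asserted decomposition; the sum is direct because $W_{\textrm{\tiny D}}\subseteq L_+$ while $\overline{W}\!_{{\textrm{\tiny D}}}\subseteq L_-$ and $L_+\cap L_-=0$.

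To prove $\Omega=0$ I would use that an almost Born D-brane is maximally isotropic not only with respect to $\eta$ but also with respect to the fundamental 2-form $\omega(X,Y)=\eta(K(X),Y)$, by Remark~\ref{rem:maxisotropic}. Evaluating $\omega$ on two general local sections $X=e+\bar\Omega(e)+\bar\alpha$ and $Y=e'+\bar\Omega(e')+\bar\alpha'$ of $L_{\textrm{\tiny D}}$, and using that $K$ acts as $+\unit$ on $L_+$ and $-\unit$ on $L_-$ so that $K(X)=e-\bar\Omega(e)-\bar\alpha$, the vanishing of $\eta$ on $L_+\times L_+$ and on $L_-\times L_-$ leaves only the mixed pairings:
\[
\omega(X,Y)=\eta\big(e,\bar\Omega(e')\big)+\eta\big(e,\bar\alpha'\big)-\eta\big(\bar\Omega(e),e'\big)-\eta\big(\bar\alpha,e'\big)\,.
\]
The two terms containing $\bar\alpha,\bar\alpha'$ vanish because $\overline{W}\!_{{\textrm{\tiny D}}}$ is $\eta$-orthogonal to $W_{\textrm{\tiny D}}$, and the remaining two give $\Omega(e',e)-\Omega(e,e')=-2\,\Omega(e,e')$ by skew-symmetry of $\Omega$. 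Isotropy of $L_{\textrm{\tiny D}}$ with respect to $\omega$ then gives $\Omega(e,e')=0$ for all $e,e'\in W_{\textrm{\tiny D}}$, i.e.\ $\Omega=0$, completing the argument.

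The one point that requires care is the bookkeeping relating $\bar\Omega$, which a priori takes values in the full eigenbundle $L_-$, to the intrinsic 2-form $\Omega$ on $W_{\textrm{\tiny D}}$: one has to observe that the residual freedom in $\bar\Omega$ is valued in $\overline{W}\!_{{\textrm{\tiny D}}}$ and that $\overline{W}\!_{{\textrm{\tiny D}}}$ is the $\eta$-annihilator of $W_{\textrm{\tiny D}}$ inside $L_-$, so that $\Omega=0$ genuinely forces $\bar\Omega$ to take values in $\overline{W}\!_{{\textrm{\tiny D}}}$. Everything else is Proposition~\ref{lindbraneW} applied fibrewise through the identification $TM=L_+\oplus L_-\simeq L_+\oplus L_+^*$ of Lemma~\ref{lagrconstrank}, and constancy of $\mathrm{rank}(W_{\textrm{\tiny D}})$ guarantees that $W_{\textrm{\tiny D}}$, $\overline{W}\!_{{\textrm{\tiny D}}}$ and hence $L_{\textrm{\tiny D}}$ are honest sub-bundles, so no new analytic difficulty arises beyond verifying that all bundle maps involved cover the identity.
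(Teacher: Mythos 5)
Your proposal is correct and follows essentially the same route as the paper: the paper simply cites Lemma~\ref{lagrconstrank} together with Proposition~\ref{lindbraneW}, and your argument is exactly that combination, with the $\omega$-isotropy computation of Proposition~\ref{lindbraneW} redone explicitly at the bundle level to show $\Omega=0$. Your added care about $\bar\Omega$ landing in $\overline{W}\!_{\textrm{\tiny D}}$ (the $\eta$-orthogonal of $W_{\textrm{\tiny D}}$ inside $L_-$) is a correct unpacking of the fibrewise identification the paper invokes.
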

\begin{proof}
This is a straightforward consequence of Proposition~\ref{lindbraneW} and Lemma~\ref{lagrconstrank}.
\end{proof}

\begin{remark}[\bf Worldvolumes]\label{rem:dbranesupport}
An almost Born D-brane $L_{\textrm{\tiny D}}$ gives a general solution to the boundary conditions for the Born sigma-model, which suffices for most purposes. However, it does not necessarily admit a geometric interpretation as a brane wrapping a submanifold of the target manifold $M$. This is reminiscent of the algebraic definition of D-branes as boundary states in abstract conformal field theory~\cite{Cardy:1989ir}, which in some cases also do not admit target space interpretations as open strings ending on worldvolumes, yet they are consistent boundary conditions for the two-dimensional field theory. 

The involutivity condition for a Born D-brane ensures that $L_{\textrm{\tiny D}}$ is Frobenius integrable, so that $L_{\textrm{\tiny D}}\simeq T\cF_{\textrm{\tiny D}}$ induces a foliation $\cF_{\textrm{\tiny D}}$ of the Born manifold $M.$ A leaf of the foliation $\cF_{\textrm{\tiny D}}$ is then interpreted physically as providing a geometric picture of a `D-brane worldvolume', whose dimension is ${\rm rank}(L_{\textrm{\tiny D}})=d$. The inclusion of all leaves accounts for the moduli of D-branes, i.e. the transverse displacements of the worldvolume in spacetime, as is necessary in any T-duality invariant formulation of D-branes. This is analogous to the considerations of certain D-branes as (singular) foliations in generalised geometry, which appears in e.g.~\cite{Zabzine:2004dp,Asakawa2012}. Locally, the integrability condition reads $[\Pi(Z_I),\Pi(Z_J)]\in\sfGamma(L_{\textrm{\tiny D}})$, which generalises the analogous condition in~\cite{Albertsson2009}.
\end{remark}

\begin{remark}[\bf Geometry] 
An almost Born D-brane $L_\dee$ naturally inherits a metric from the target para-Hermitian manifold. The Born metric $\cH$ given by \eqref{eq:diaghermmetric} is determined by a metric $g_+$ on the eigenbundle $L_+$, which induces a metric on $L_\dee$. Assume as previously that $W_\dee = L_\dee \cap L_+$ has constant rank. Then $W_\dee$ admits the restriction 
$g_\dee$ of the metric $g_+$ and $L_\dee=  W_\dee \oplus \overline{W}\!_\dee$ admits the metric  
\be \nonumber
\mathcal{H}_\dee = \bigg( \begin{matrix}
g_\dee & 0 \\ 0 & \eta^\flat(g_\dee^{-1})
\end{matrix} \bigg) \ ,
\ee
where we note that the ranks of $W_\dee$ and $\overline{W}\!_\dee$ are not generally equal.
\end{remark}

\begin{remark}[\bf Gauge Fields]\label{rem:B+F}
In the quantum theory, the massless states of the open string sigma-model should introduce gauge fields on the D-brane, but in the general setting of Definition~\ref{dbranepara} this is not so straightforward to describe geometrically.
In the case of a Born D-brane $L_{\textrm{\tiny D}}$ with induced foliation $\cF_{\textrm{\tiny D}}$, a field strength $F$ can be introduced by applying a $B_+$-transformation, as discussed in Appendix~\ref{sec:Btransformations} (a similar approach to incorporating gauge flux appears in the local approach of~\cite{Sakatani:2020umt} in the case of para-K\"ahler manifolds). Let $B_+ \colon L_+ \longrightarrow L_-$ be a vector bundle morphism over the identity with induced $2$-form $b_+\in\sfGamma(\midwedge^2L_+^*)$. The pullback of a D-brane \smash{$L_{\textrm{\tiny D}}^{B_+} = e^{-B_+}(L_{\textrm{\tiny D}})$} by the $B_+$-transformation is preserved by the pullback of the almost para-complex structure $K_{B_+}= e^{-B_+} \circ K \circ e^{B_+},$ but not by $K$ itself. In this new polarisation, the D-brane $L_{\textrm{\tiny D}}$ acquires a non-vanishing $2$-form $-\frac12\,\omega_{B_+}|_{L_{\textrm{\tiny D}}} = b_+|_{L_{\textrm{\tiny D}}}$, induced by the map $\bar{\Omega} = B_+|_{L_{\textrm{\tiny D}}}$ in Lemma~\ref{lagrconstrank}. 

By considering an embedded leaf $\cW_{\textrm{\tiny D}}\lhook\joinrel\longrightarrow M$ of $\cF_{\textrm{\tiny D}},$ the 2-form $F\in{\mathsf{\Omega}}^2(\cW_{\textrm{\tiny D}})$ is then given by
\be\nonumber
F \coloneqq b_+ \rvert_{\cW_{\textrm{\tiny D}}} \ .
\ee 
If $B_+$ induces a closed 2-form $b_+$ on $L_+$ whose restriction to $\cW_{\textrm{\tiny D}}$ has integer periods, then by Chern-Weil theory $F$ is the curvature $2$-form of a connection $\nabla^C$ on a complex line bundle $C\longrightarrow\cW_{\textrm{\tiny D}}$, which is interpreted physically as providing a geometric picture of a `Chan-Paton bundle' on the D-brane. Note that despite the requirement $\de b_+ = 0,$ the sub-bundle \smash{$L_{\textrm{\tiny D}}^{B_+}\subset TM$} still might not be integrable.

In an analogous way, one can introduce the `transverse scalar fields' to a Born D-brane $L_\dee$, which is given by a section $\varrho$ of the maximally isotropic normal bundle $\nu(L_\dee)\simeq L_\dee^\perp$. Then the pullback $L_\dee^\varrho=f_\varrho^*(L_\dee)$ by the diffeormorphism $f_\varrho:M\longrightarrow M$ generated by the vector field $\varrho$ is preserved by the corresponding pullback of the almost para-complex structure $K_\varrho = f_\varrho^*\circ K\circ f_\varrho^*{}^{-1}$.
\end{remark} 

\begin{remark}[{\bf T-Duality}] \label{rem:Tduality}
We can describe how an almost Born D-brane transforms under the \emph{generalised T-duality} discussed by~\cite{Marotta:2019eqc}. These transformations form the subgroup ${\sf O}(TM)\subset{\sf Aut}(TM)$ of tangent bundle automorphisms of the almost para-Hermitian manifold $(M,K,\eta)$ which preserve the split signature metric $\eta$; its elements are metric-preserving pairs $\vartheta=(\bar f,f)$ of a vector bundle isomorphism $\bar f:TM\longrightarrow TM$ covering a diffeomorphism $f:M\longrightarrow M$. Examples include isometric diffeomorphisms $f:M\longrightarrow M$, for which $\bar f=f^*$, and $B_+$-transformations, which cover the identity. The natural group of discrete transformations is
\begin{align*}
{\sf O}(TM;\IZ) := {\sf O}(TM) \, \cap \, {\sf Diff}(M;\IZ) \ ,
\end{align*}
where ${\sf Diff}(M;\IZ)\subset {\sf Diff}(M)$ is the subgroup of large diffeomorphisms of $M$. This generalises the usual T-duality group of torus bundles.

A generalised T-duality $\vartheta=(\bar f,f)$ pulls back a Born geometry $(K,\eta,\cH)$ on $M$ to the Born geometry $(K_\vartheta,\eta,\cH_\vartheta) = (\bar f\circ K\circ\bar f^{-1},\eta,\bar f^*\cH)$. A D-brane $L_\dee\subset TM$ for the Born sigma-model $\cS(\cH,\omega)$ into $(M,K,\eta)$ is then pulled back to the D-brane $L_\dee^\vartheta = \bar f(L_\dee)$ for the Born sigma-model $\cS(\cH_\vartheta,\omega_\vartheta)$ into $(M,K_\vartheta,\eta)$; indeed, if $L_\dee$ is maximally isotropic then so is $L_\dee^\vartheta$ (as $\vartheta$ is an isometry of $\eta$), while $K(L_\dee)=L_\dee$ implies $K_\vartheta(L_\dee^\vartheta) = L_\dee^\vartheta$. 
\end{remark}

\medskip

\subsection{Lagrangian Born D-Branes} ~\\[5pt]
\label{sec:LagrangianBorn}
A particular instance that naturally leads to a notion of `worldvolume' for an almost Born D-brane is when one wishes to consider D-branes in topologically non-trivial generalised flux backgrounds. When the worldsheet $\Sigma$ has a non-empty boundary $\partial\Sigma$, both the notion of admissibility from Definition~\ref{def:admissible} and the definition of the Wess-Zumino term \eqref{eq:WZdouble} require modification. Following~\cite{Figueroa-OFarrill:2005vws}, in this case we should reformulate the Born sigma-model as a theory of \emph{relative maps} $$\phi:(\Sigma,\partial\Sigma)\longrightarrow (M,\cW) \ , $$ where $\cW\subset M$ is a given fixed submanifold such that $\phi(\partial\Sigma)\subset \cW$. 

We now assume that the relative degree~$2$ homology of $(M,\cW)$ is trivial, ${\sf H}_2(M,\cW)=0$, and that there exists a $2$-form $B_{\texttt{can}}$ on $\cW$ such that the pair $(H_{\tt can},B_{\tt can})$ defines an integer relative cohomology class $\frac1{4\pi}\,[(H_{\tt can},B_{\tt can})]\in{\sf H}^3(M,\cW;\IZ)$. The former assumption ensures that the image chain $\phi(\Sigma)$ is a relative boundary modulo $\cW$ for all relative maps $\phi:(\Sigma,\partial\Sigma)\longrightarrow (M,\cW)$. The latter condition implies, in particular, that the canoncial $3$-form $H_{\tt can}$ is again closed and in addition that  its restriction to $\cW$ obeys
\begin{align}\label{eq:HBcan}
i^*H_{\tt can} = \de B_{\tt can} \ ,
\end{align}
where $i:\cW\lhook\joinrel\longrightarrow M$ is the embedding of $\cW$ in $M$. 

We can subsequently modify \eqref{eq:WZdouble} to the \emph{relative} Wess-Zumino action functional defined by~\cite{Figueroa-OFarrill:2005vws}
\begin{align}\label{eq:relativeWZdouble}
\cS_{\textrm{\tiny WZ}}[\phi] = \frac12\,\int_V\,\phi^*H_{\tt can} - \frac12\,\int_\Delta \, \phi^*B_{\tt can} \ ,
\end{align}
where now $V$ is a three-manifold with boundary $\partial V = \Sigma+\Delta$ such that $\phi(\Delta)\subset \cW$, and as before we have smoothly extended the relative map $\phi$ to $V$. By virtue of \eqref{eq:HBcan}, the canonical $3$-form contributes to the equations of motion \eqref{eq:Borneom} by shifting the curvature $\cK=\de\omega$  to $\cK+H_{\tt can}$, whereas the $2$-form $B_{\tt can}$ only contributes to the boundary conditions \eqref{finalvar} by shifting the fundamental $2$-form $\omega$ to $\omega+B_{\tt can}$. Indeed, using $\de H_{\tt can}=0$ and Stokes' theorem, the variation of \eqref{eq:relativeWZdouble} is computed to be
\begin{align*}
\delta \cS_{\textrm{\tiny WZ}} &= \frac12\,\int_V\,\pounds_\varepsilon\bar H_{\tt can} - \frac12\,\int_\Delta\,\pounds_\varepsilon\bar B_{\tt can} \\[4pt]
&= \frac12\,\int_{\partial V} \, \iota_\varepsilon\bar H_{\tt can} - \frac12\,\int_\Delta\,\iota_\varepsilon\de\bar B_{\tt can} - \frac12\,\int_{\partial\Delta}\,\iota_\varepsilon\bar B_{\tt can} \\[4pt]
&= \frac12\,\int_\Sigma \iota_\varepsilon\bar H_{\tt can} + \frac12\,\int_{\partial\Sigma}\,\iota_\varepsilon\bar B_{\tt can} \ ,
\end{align*}
where in the last step we used $\bar H_{\tt can}=\de \bar B_{\tt can}$ on $\Delta$ and $\partial\Delta=-\partial\Sigma$. The relative Dirac quantisation condition on $[(H_{\tt can},B_{\tt can})]$ guarantees that the functional integral is independent of the choice of pair~$(V,\Delta)$.

The addition of the Wess-Zumino term \eqref{eq:relativeWZdouble} also does not affect the self-duality constraint \eqref{selfduality}, so that the analysis of the boundary conditions \eqref{finalvar} proceeds exactly as before by replacing the fundamental $2$-form everywhere with $\omega+B_{\tt can}$. As $B_{\tt can}$ is defined only on $\cW\subset M$, in this case it is natural to choose the distribution
\begin{align*}
L_{\cW} := {\rm im}(\de i) \ ,
\end{align*}
where $\de i:T\cW\longrightarrow TM$ is the derivative of the embedding $i:\cW\lhook\joinrel\longrightarrow M$. Demanding that $L_\cW$ be an almost Born D-brane in the sense of Definition~\ref{dbranepara} then means that $\cW$ is a \emph{Lagrangian submanifold} of $M$ with respect to $\eta$, such that $K(L_\cW)= L_\cW$, and moreover (by Remark~\ref{rem:maxisotropic}) that
$
B_{\tt can}(\Pi(Z_I),\Pi(Z_J)) = 0 .
$
Since $B_{\tt can}$ is only defined on $\cW$, it follows that $B_{\tt can}=0$, and hence $i^*H_{\tt can}=0$ by \eqref{eq:HBcan}, or equivalently
\begin{align}\label{eq:Hcan0}
H_{\tt can}\big(\Pi(Z_I),\Pi(Z_J),\Pi(Z_K)\big) = 0 \ .
\end{align}
This is a generalisation of the `orientation' condition of~\cite{Albertsson2009} for D-branes in doubled twisted tori.

The middle-dimensional submanifold $\cW\subset M$ is regarded as the worldvolume of the almost Born D-brane $L_\cW\subset TM$, which we call a \emph{Lagrangian Born D-brane} in this case. It is the analogue in para-Hermitian geometry of the conventional A-branes (D-branes of the topological A-model) which are supported on Lagrangian submanifolds of a complex symplectic manifold~\cite{Witten:1992fb,Kapustin:2001ij}. The condition \eqref{eq:Hcan0} forbids Lagrangian Born D-branes whose worldvolumes support non-zero generalised fluxes. Note that this is a stronger requirement than the vanishing of the Freed-Witten anomaly~\cite{Freed:1999vc}, which would only require the canonical $3$-form $H_{\tt can}$ to become topologically trivial when restricted to $\cW$, as in \eqref{eq:HBcan}.

One virtue of dealing with Lagrangian Born D-branes is that it is straightforward to couple them to gauge fields through the introduction of Chan-Paton factors in the Born sigma-model. For this, let $C$ be a complex line bundle on the submanifold $\cW\subset M$ endowed with a unitary connection $\nabla^C$ whose curvature $2$-form is denoted $F$. Locally, $F=\de A$ where $A$ is the (local) connection $1$-form characterising $\nabla^C$. The string endpoint, which propagates on the boundary $\partial\Sigma$ of the worldsheet, is charged with respect to the gauge field $A$ on the brane. Since $\phi(\partial\Sigma)\subset \cW$, its incorporation into the Born sigma-model is achieved by adding the minimal coupling term
\begin{align}\label{eq:CP}
\cS_{\textrm{\tiny CP}}[\phi] := \frac12\,\int_{\partial\Sigma} \, \phi^*A = -\frac12\,\int_{\partial\Delta} \, \phi^*A = -\frac12\, \int_\Delta\,\phi^*F \ ,
\end{align}
where we used $\partial\Sigma=-\partial\Delta$ and Stokes' theorem. 

Combining \eqref{eq:CP} with \eqref{eq:relativeWZdouble} shows that the overall effect is to shift the $2$-form $B_{\tt can}$ to $B_{\tt can}+F$, which is also defined only on $\cW$. Repeating the arguments above shows that now
\begin{align}\label{eq:projflat}
F + B_{\tt can} = 0 \ .
\end{align} 
This implies that the $2$-form $B_{\tt can}$ defines an integer cohomology class $\frac1{2\pi}\,[B_{\tt can}]\in {\sf H}^2(\cW;\IZ)$, and so \eqref{eq:HBcan} again leads to the vanishing flux constraint \eqref{eq:Hcan0}. The condition \eqref{eq:projflat} means that Lagrangian Born D-branes can only couple to \emph{projectively flat} connections on $\cW$. If we choose to set $B_{\tt can}=0$ (which we may do as the role of $B_{\tt can}$ is superfluous at this stage), then this is again analogous to the case of Lagrangian A-branes, which necessarily come with flat Chan-Paton bundles~\cite{Witten:1992fb,Kapustin:2001ij}. 

\medskip

\subsection{D-Branes on the Leaf Space} ~\\[5pt]
\label{sec:Dbranesleaf}
We shall now discuss how Definition \ref{dbranepara} induces the canonical notion of a D-brane for the \lq\lq{}physical\rq\rq{} non-linear sigma-model obtained with the reduction procedure discussed in \cite{Marotta:2019eqc}.

Let $(M, K, \eta)$ be an almost para-Hermitian manifold with a Born metric $\cH$ such that the eigenbundle $L_-$ of $K$ is integrable, i.e. $L_-=T\cF$ where $\cF$ is the induced foliation. We further assume that the leaf space $\cQ=M/\cF$ is a smooth manifold. Let $q:M\longrightarrow\cQ$ be the quotient map, which is covered by its derivative $\de q: TM \longrightarrow T\cQ$. In the splitting $TM=L_+\oplus L_-$ induced by $K,$ the vector bundle morphism $\de q$ is fibrewise bijective when restricted to $L_+,$ i.e. $\de q \rvert_{L_+}: L_+ \longrightarrow T\cQ$ is a fibrewise isomorphism. Hence the $C^\infty(M)$-module $\sfGamma(L_+)$ is isomorphic to the $C^\infty(\cQ)$-module $\sfGamma(T\cQ).$

We assume that the Riemannian metric $\cH$ is \emph{bundle-like} with respect to the foliation $\cF$, that is,
\begin{align*}
\pounds_{\sfp_-(X)}\cH\big(\sfp_+(Y),\sfp_+(Z)\big) = 0 \ ,
\end{align*}
for all $X,Y,Z\in \sfGamma(TM)$. Then $(M,\cH,\cF)$ is a Riemannian foliation, and the leaf space $\cQ$ admits a Riemannian metric $g$ such that the quotient map $q: M \longrightarrow \cQ$ is a Riemannian submersion. We further assume that the fundamental 2-form $\omega$ is transversally invariant with respect to the foliation $\cF$, that is,
\begin{align*}
\pounds_{\sfp_-(X)}\omega\big(\sfp_+(Y),\sfp_+(Z)\big) = 0 \ ,
\end{align*}
for all $X,Y,Z\in\sfGamma(TM)$. Then $\cQ$ admits a 2-form $b \in {\mathsf\Omega}^2(\cQ)$ inherited from~$\omega.$ 

In this way, the leaf space becomes the target space of a non-linear sigma-model $S(g,b)$ whose background is the $d$-dimensional Riemannian manifold $(\cQ, g)$ with Kalb-Ramond field $b \in {\mathsf\Omega}^2(\cQ)$. The action functional of this sigma-model is
$$
S[\phi]=\frac{1}{2}\,\int_\Sigma\, \bar g_{ij}\, \de \phi^i \wedge
\star\,  \de \phi^j +  \int_\Sigma\, \bar b\ ,
$$
where here $\phi$ is a map from $(\Sigma,h)$ to $(\cQ,g).$

\begin{remark}
For an admissible almost para-Hermitian manifold $(M,K,\eta)$, it is also possible to reduce the corresponding Wess-Zumino action functional $\cS_{\textrm{\tiny WZ}}$ to the leaf space if the canonical $3$-form $H_{\tt can}$ on $M$ additionally satisfies
\begin{align*}
\iota_{\sfp_-(X)} H_{\tt can} = 0 \ , 
\end{align*}
for all $X\in\sfGamma(TM)$ (see~\cite{Severa2019}). Then $H_{\tt can}$ can be regarded as the pullback of a $3$-form on the leaf space $\cQ$ by the quotient map $q:M\longrightarrow\cQ$.
\end{remark}

Suppose now that $L_{\textrm{\tiny D}}\subset TM$ is a Born D-brane such that $W_{\textrm{\tiny D}}=L_+\cap L_{\textrm{\tiny D}}$ has constant rank (cf.~Proposition~\ref{prop:LDWD}). This gives another foliation $\cF_{\textrm{\tiny D}}$ of the Born manifold $M$ whose leaves can be understood as supported by the physical D-branes (cf.~Remark~\ref{rem:dbranesupport}). Then the corresponding D-brane for the sigma-model into $\cQ$ is given by the image of $L_{\textrm{\tiny D}}$ under the derivative $\de q$, $\de q(L_{\textrm{\tiny D}}) =\de q(W_\dee)  \subseteq T\cQ.$ This is a vector sub-bundle of $T\cQ$.
In particular, $\de q(L_{\textrm{\tiny D}})$ is involutive because $L_{\textrm{\tiny D}}$ is involutive, and the restriction of the metric $g$ to $\de q(L_{\textrm{\tiny D}})$ is positive-definite because $q:(M,\cH)\longrightarrow (\cQ,g)$ is a Riemannian submersion. The integrability condition inherited from $L_{\textrm{\tiny D}}$ implies that $\cQ$ admits a regular foliation $\cF^q_{\textrm{\tiny D}}$ whose leaves are supported by the physical D-branes of the sigma-model $S(g,b)$, where the dimension of each leaf is bounded from above by the rank of $W_{\textrm{\tiny D}}=L_+\cap L_{\textrm{\tiny D}}.$

\begin{example}
Let $(M, K, \eta, \cH)$ be a Born manifold which admits a Riemannian foliation $\cF$ such that $L_-=T\cF$ as above. Then $L_-$ is a Born D-brane and its induced physical D-branes are just points ($0$-branes) in $\cQ=M/\cF$. This corresponds to fully Dirichlet boundary conditions for the sigma-model $S(g,b)$ into $\cQ$.

At the opposite extreme, if the eigenbundle $L_+$ of $K$ is integrable as well, then $L_+$ is also a Born D-brane which induces a space-filling physical D-brane whose support is simply the whole leaf space $\cQ$ of the foliation. This corresponds to fully Neumann boundary conditions for the sigma-model $S(g,b)$ into $\cQ$.
\end{example}

\section{Spacetime Perspective: D-Branes on Metric Algebroids} 
\label{sec:targetbranes}

The definition and properties of D-branes given in \cite{Albertsson2009,Sakatani:2020umt} and in Section~\ref{BSM} of the present paper, although physically well motivated from the perspective of open string sigma-models, highlight the construction of a brane solely from the point of view of sub-bundles of the tangent bundle $TM$ of the doubled spacetime. On the one hand, this leads to a more general definition of branes which does not involve submanifolds of $M$. On the other hand, to recover the usual geometric picture of D-branes with worldvolumes and Chan-Paton bundles requires, among other things, the imposition of integrability of the sub-bundle as an extra condition, which cannot be derived from the analysis of the worldsheet constraints alone. As discussed in Remarks~\ref{rem:dbranesupport} and~\ref{rem:B+F}, from this perspective D-branes only arise as foliations, i.e. a single D-brane is given by a submanifold composing the foliation which integrates the distribution that solves the constraints.

The purpose of this section is to develop a complimentary picture of D-branes entirely from the perspective of para-Hermitian geometries on the spacetime, and to discuss how it connects to our worldsheet perspective from Section~\ref{BSM} for integrable branes. In particular, we provide classes of D-branes for Born sigma-models which serve as physically motivated examples of the branes in this section. Regardless of these connections, the treatment which follows is much more general and encompassing, and it mimics the well-known treatment of branes in generalised geometry. The material of this section relies heavily on the theory of metric algebroids, in the settings presented in Appendices~\ref{sec:malg} and~\ref{sec:expreCourant}.

\medskip

\subsection{D-Structures and Branes}\label{sec:Dstructures} ~\\[5pt]
The natural notion of integrability on a metric algebroid is provided by a `D-structure', which was given in~\cite{Freidel2019}. We start by introducing a different notion.

\begin{definition}\label{def:Dstructure}
An \emph{almost D-structure} on a metric algebroid $(E, \eta, \rho,\llbracket\,\cdot \, , \, \cdot\,\rrbracket)$ is an isotropic vector sub-bundle $L \subset E.$  A \emph{D-structure} is an almost D-structure $L$ which is involutive with respect to the D-bracket $\llbracket\,\cdot \, , \, \cdot\,\rrbracket,$ i.e.
$\llbracket\sfGamma(L), \sfGamma(L)\rrbracket \subseteq \sfGamma(L),$ and in this case we say that $L$ is \emph{D-integrable}.

If $(E, \eta, \rho,\llbracket\,\cdot \, , \, \cdot\,\rrbracket)$ is an exact Courant algebroid, then a D-structure is called a \emph{small Dirac structure}. A \emph{Dirac structure} is a small Dirac structure which is maximally isotropic.
\end{definition}

In this definition we allow for sub-bundles with non-constant rank. An almost D-structure $L$ is said to be \emph{regular} if its rank is constant. In particular, if $L$ is a Dirac structure, then the restriction of the Dorfman bracket to $L$ is skew-symmetric, and thus a Dirac structure is a Lie algebroid.

\begin{example} \label{ex:Dstructure}
Let $(E, \eta, \rho,\llbracket\,\cdot \, , \, \cdot\,\rrbracket)$ be an exact pre-Courant algebroid over a manifold $M$ with a maximally isotropic splitting $\sigma$, so that $(E, \eta, \rho,\llbracket\,\cdot \, , \, \cdot\,\rrbracket)$ is isomorphic to the pre-Courant algebroid $(\IT M,\eta_{\IT M},{\rm pr}_{TM},\llbracket\,\cdot\,,\,\cdot\,\rrbracket_{H_\sigma})$, where $H_\sigma\in{\sf\Omega}^3(M)$ is given by \eqref{eq:Hsigma}. Then ${\rm im}(\rho^*)\simeq T^*M$ is involutive with respect to $\llbracket\,\cdot \, , \, \cdot\,\rrbracket$, hence it is a D-structure. 

On the other hand, the $3$-form $H_\sigma$ measures the violation of involutivity of ${\rm im}(\sigma)\simeq TM$ with respect to the D-bracket, and thus ${\rm im}(\sigma)$ is an almost D-structure which is D-integrable if and only if~$H_\sigma=0$, in which case it is a Dirac structure.
\end{example}

We can establish a sufficient condition for a maximally isotropic integrable distribution on an almost para-Hermitian manifold $(M,K,\eta)$ to be integrable with the respect to the D-bracket of the canonical metric algebroid on its tangent bundle $TM$. This is contained in

\begin{proposition}\label{prop:DintDbrane}
Let $(M, K, \eta)$ be an almost para-Hermitian manifold and $L_{\textrm{\tiny D}} \subset TM$ an almost Born D-brane which is preserved by the Levi-Civita connection $\nabla^{\textrm{\tiny\tt LC}}$ of $\eta.$ Then $L_{\textrm{\tiny D}}$ is a D-integrable Born D-brane.
\end{proposition}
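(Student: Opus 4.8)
The plan is to exploit the classical identity relating the Levi-Civita D-bracket to the torsion-free metric connection and then use that a parallel sub-bundle is automatically involutive. Concretely, for the canonical metric algebroid one has the formula
\be\nonumber
\eta([\![X,Y]\!]_{\textrm{\tiny\tt LC}},Z) = \eta(\nabla^{\textrm{\tiny\tt LC}}_X Y - \nabla^{\textrm{\tiny\tt LC}}_Y X, Z) + \eta(\nabla^{\textrm{\tiny\tt LC}}_Z X, Y) \ ,
\ee
which (modulo the precise normalisation fixed in Example~\ref{ex:metricalgeeta} and the use of metric compatibility and vanishing torsion of $\nabla^{\textrm{\tiny\tt LC}}$) expresses $[\![\,\cdot\,,\,\cdot\,]\!]_{\textrm{\tiny\tt LC}}$ entirely in terms of $\nabla^{\textrm{\tiny\tt LC}}$ and $\eta$. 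First I would write this out carefully, recalling from the appendix exactly which connection-induced metric algebroid serves as the reference and hence what the canonical D-bracket is in these terms.

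Next I would take $X,Y\in\sfGamma(L_{\textrm{\tiny D}})$ and show $[\![X,Y]\!]_{\tt can}\in\sfGamma(L_{\textrm{\tiny D}})$. By the hypothesis that $L_{\textrm{\tiny D}}$ is $\nabla^{\textrm{\tiny\tt LC}}$-parallel, both $\nabla^{\textrm{\tiny\tt LC}}_X Y$ and $\nabla^{\textrm{\tiny\tt LC}}_Y X$ lie in $\sfGamma(L_{\textrm{\tiny D}})$; the only term that needs attention is the one of the form $\eta(\nabla^{\textrm{\tiny\tt LC}}_Z X,Y)$ appearing when one pairs $[\![X,Y]\!]_{\tt can}$ against an arbitrary $Z$. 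Here I would split $Z$ using a $\cH$-orthogonal decomposition $TM = L_{\textrm{\tiny D}}\oplus L_{\textrm{\tiny D}}^\perp$: if $Z\in\sfGamma(L_{\textrm{\tiny D}})$ the term is irrelevant for deciding membership in $L_{\textrm{\tiny D}}$ (it only pins down the $L_{\textrm{\tiny D}}$-component, already controlled), and if $Z\in\sfGamma(L_{\textrm{\tiny D}}^\perp)$ one wants $\eta([\![X,Y]\!]_{\tt can},Z)=0$. For the latter, parallelism of $L_{\textrm{\tiny D}}$ together with metric compatibility of $\nabla^{\textrm{\tiny\tt LC}}$ gives that $L_{\textrm{\tiny D}}^{\perp_\eta}$ (the $\eta$-orthogonal complement) is also parallel; and since $L_{\textrm{\tiny D}}$ is maximally isotropic for $\eta$ one has $L_{\textrm{\tiny D}}\subseteq L_{\textrm{\tiny D}}^{\perp_\eta}$, in fact $L_{\textrm{\tiny D}}=L_{\textrm{\tiny D}}^{\perp_\eta}$ by rank count (Remark~\ref{rem:maxisotropic}). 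Thus $\eta(\nabla^{\textrm{\tiny\tt LC}}_Z X,Y)$ with $X,Y\in\sfGamma(L_{\textrm{\tiny D}})$ and $\nabla^{\textrm{\tiny\tt LC}}_Z X\in\sfGamma(L_{\textrm{\tiny D}})=L_{\textrm{\tiny D}}^{\perp_\eta}$ vanishes by isotropy, and similarly the first two terms pair to zero against $Z\in\sfGamma(L_{\textrm{\tiny D}}^{\perp_\eta})=\sfGamma(L_{\textrm{\tiny D}})$. Hence $\eta([\![X,Y]\!]_{\tt can},Z)=0$ for all $Z\in\sfGamma(L_{\textrm{\tiny D}}^{\perp_\eta})$, and non-degeneracy of $\eta$ forces $[\![X,Y]\!]_{\tt can}\in\sfGamma((L_{\textrm{\tiny D}}^{\perp_\eta})^{\perp_\eta})=\sfGamma(L_{\textrm{\tiny D}})$.

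It remains to observe that $L_{\textrm{\tiny D}}$ is in particular involutive for the ordinary Lie bracket — indeed, the Lie bracket $[X,Y]=\nabla^{\textrm{\tiny\tt LC}}_XY-\nabla^{\textrm{\tiny\tt LC}}_YX$ since $\nabla^{\textrm{\tiny\tt LC}}$ is torsion-free, and both terms lie in $\sfGamma(L_{\textrm{\tiny D}})$ by parallelism — so $L_{\textrm{\tiny D}}$ is genuinely a Born D-brane in the sense of Definition~\ref{dbranepara}, not merely an almost Born D-brane; combined with the D-involutivity just established, it is a D-integrable Born D-brane. I expect the main obstacle to be purely bookkeeping: getting the normalisation and the exact form of the Levi-Civita D-bracket right from the appendix conventions, and being careful that "parallel with respect to $\nabla^{\textrm{\tiny\tt LC}}$" is used in the strong sense $\nabla^{\textrm{\tiny\tt LC}}_Z X\in\sfGamma(L_{\textrm{\tiny D}})$ for \emph{all} $Z\in\sfGamma(TM)$, not just $Z\in\sfGamma(L_{\textrm{\tiny D}})$ — the latter weaker hypothesis would not suffice to kill the cross term.
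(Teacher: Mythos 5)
There is a genuine gap, and it sits at the very first step: you identify the bracket relevant to the proposition with the Levi-Civita bracket $\llbracket\,\cdot\,,\,\cdot\,\rrbracket_{\textrm{\tiny\tt LC}}$ of Example~\ref{ex:metricalgeeta}, but D-integrability here is with respect to the \emph{canonical} metric algebroid of Example~\ref{ex:canmetricalg}, whose D-bracket is built from the canonical connection $\nabla^{\tt can}=\sfp_+\circ\nabla^{\textrm{\tiny\tt LC}}\circ\sfp_+ + \sfp_-\circ\nabla^{\textrm{\tiny\tt LC}}\circ\sfp_-$, not from $\nabla^{\textrm{\tiny\tt LC}}$. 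These two connections (hence the two D-brackets) coincide only in the almost para-K\"ahler case; in general the brackets differ precisely by the canonical $3$-form $H_{\tt can}$ of \eqref{eq:HfluxD}, and showing that this difference is invisible on $L_{\textrm{\tiny D}}$ is the real content of the proposition. Symptomatically, your D-integrability argument uses only $\eta$-isotropy and $\nabla^{\textrm{\tiny\tt LC}}$-parallelism and never invokes the hypothesis $K(L_{\textrm{\tiny D}})=L_{\textrm{\tiny D}}$; what it actually proves is involutivity of $L_{\textrm{\tiny D}}$ under $\llbracket\,\cdot\,,\,\cdot\,\rrbracket_{\textrm{\tiny\tt LC}}$, which does not by itself give involutivity under $\llbracket\,\cdot\,,\,\cdot\,\rrbracket_{\tt can}$, since for an $\eta$-Lagrangian parallel distribution that is not $K$-invariant the obstruction $H_{\tt can}\vert_{L_{\textrm{\tiny D}}}$ need not vanish.

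The missing bridge is where the paper's proof does its work: restricted to $X,Y,Z\in\sfGamma(L_{\textrm{\tiny D}})$ one has $\eta(\nabla^{\tt can}_XY,Z)=\eta(\nabla^{\textrm{\tiny\tt LC}}_XY,Z)-\tfrac12\,\nabla^{\textrm{\tiny\tt LC}}_X\omega(Y,K(Z))$, and the correction term vanishes because $K(Z)\in\sfGamma(L_{\textrm{\tiny D}})$ and $L_{\textrm{\tiny D}}$ is both involutive (from parallelism and torsion-freeness, as you correctly noted) and isotropic for $\omega$ as well as $\eta$ (the $\omega$-isotropy coming from $K$-invariance, cf.\ Remark~\ref{rem:maxisotropic}); only after this identification does your final computation, $\eta(\llbracket X,Y\rrbracket_{\tt can},Z)=\eta(\nabla^{\textrm{\tiny\tt LC}}_XY-\nabla^{\textrm{\tiny\tt LC}}_YX,Z)+\eta(\nabla^{\textrm{\tiny\tt LC}}_ZX,Y)=0$ by parallelism and maximal isotropy, become legitimate. (An equivalent repair: use $K$-invariance together with $\nabla^{\textrm{\tiny\tt LC}}$-parallelism to show that $\nabla^{\tt can}$ itself preserves $L_{\textrm{\tiny D}}$, then run your isotropy argument directly with $\nabla^{\tt can}$.) Your Frobenius-integrability step, your insistence on parallelism in all directions $Z$, and the final isotropy mechanism all match the paper; what is missing is any control of the discrepancy between $\nabla^{\textrm{\tiny\tt LC}}$ and $\nabla^{\tt can}$, i.e.\ of $H_{\tt can}$ on the brane.
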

\begin{proof}
Since the distribution $L_{\textrm{\tiny D}}$ is preserved by a torsion-free connection, it is Frobenius integrable. Hence $L_{\textrm{\tiny D}}$ is a Born D-brane.

To prove D-integrability we show that 
\be\nonumber
\eta([\![X,Y]\!]_{\tt can},Z)=0 \ ,
\ee
for all $X, Y, Z \in \sfGamma(L_{\textrm{\tiny D}}).$
For this, recall from Example~\ref{ex:canmetricalg} that the canonical connection $\nabla^{\tt can}$ is given by
\be\nonumber
\eta(\nabla^{\tt can}_X \, Y, Z)= \eta(\nabla^{\textrm{\tiny\tt LC}}_X \, Y, Z) -\tfrac12\, \nabla^{\textrm{\tiny\tt LC}}_X\, \omega(Y, K(Z)) \ ,
\ee
where here we restrict to $X, Y, Z \in \sfGamma(L_{\textrm{\tiny D}}).$
Locally we have
\be\nonumber
\nabla^{\textrm{\tiny\tt LC}}\, \omega = \de \omega + \varGamma^{\textrm{\tiny\tt LC}} \wedge \omega \ , 
\ee
where $\varGamma^{\textrm{\tiny\tt LC}}$ is the (local) connection  1-form characterising $\nabla^{\textrm{\tiny\tt LC}}.$ Then
\be \label{leviomega}
\nabla^{\textrm{\tiny\tt LC}}_X\, \omega(Y,Z) = \iota_Z \iota_Y \iota_X(\de \omega + \varGamma^{\textrm{\tiny\tt LC}} \wedge \omega) \ ,
\ee
for all $X, Y, Z \in \sfGamma(L_{\textrm{\tiny D}}).$ 
Since $L_{\textrm{\tiny D}}$ is involutive and isotropic, it follows that 
\be\nonumber
\iota_Z \iota_Y \iota_X \de \omega= - \iota_Z \iota_{[X,Y]} \omega =0 \ ,
\ee
and similarly the second term on the right-hand side of \eqref{leviomega} vanishes.
Thus $\nabla^{\textrm{\tiny\tt LC}}_X\, \omega(Y,Z) = 0$ and 
\be\nonumber
\eta(\nabla^{\tt can}_X \, Y, Z)= \eta(\nabla^{\textrm{\tiny\tt LC}}_X \, Y, Z)
\ee
for all $X, Y, Z \in \sfGamma(L_{\textrm{\tiny D}}),$ since $K(Z) \in \sfGamma(L_{\textrm{\tiny D}}).$
Therefore 
\be\nonumber
\eta([\![X,Y]\!]_{\tt can},Z)= \eta(\nabla^{\textrm{\tiny\tt LC}}_X \, Y-\nabla^{\textrm{\tiny\tt LC}}_Y \, X, Z) + \eta(\nabla^{\textrm{\tiny\tt LC}}_Z \, X, Y) \ ,
\ee
which vanishes because $\nabla^{\textrm{\tiny\tt LC}}$ preserves $L_{\textrm{\tiny D}}.$
\end{proof}

Proposition~\ref{prop:DintDbrane} describes properties of D-branes on the canonical metric algebroid of an almost para-Hermitian manifold, and it inspires the following notion.
Let $(E, \eta, \rho, \llbracket\, \cdot \, , \, \cdot\, \rrbracket)$ be a metric algebroid endowed with a para-complex structure $K \in {\sf Aut}_{\mathds{1}}(E).$ The quintuple  $(E, K, \eta,\rho, \llbracket\, \cdot \, , \, \cdot\, \rrbracket)$ is called a \emph{split metric algebroid}.
Examples are given by split exact pre-Courant algebroids, see Appendix~\ref{sec:expreCourant}. We can provide a natural notion of a brane on a split metric algebroid, which may be viewed as the real counterpart of the notion of a generalised complex brane from~\cite{gualtieri:tesi}.

\begin{definition} \label{brane}
A \emph{brane on a split metric algebroid} $(E, K, \eta, \rho, \llbracket\,\cdot \, , \, \cdot\,\rrbracket)$ is a D-structure $L\subset E$ which is preserved by $K,$ i.e.~$K(L)= L.$
\end{definition}

\begin{example}\label{ex:branepreC}
Let  $(E, K_\sigma, \eta, \rho, \llbracket\,\cdot \, , \, \cdot\,\rrbracket)$ be a split exact pre-Courant algebroid over a manifold $M$. Then $T^*M \subset E$ is a brane.
\end{example}

Beyond Example~\ref{ex:branepreC}, Definition \ref{brane} as it stands is too general to lead to any meaningful insight into the properties of such branes. Moreover, they do not immediately offer a relation to the more physically intuitive geometric structures surrounding D-branes. Hence we proceed to develop a theory of branes on the slightly stronger structure of a pre-Courant algebroid. One advantage provided by this restricted class of metric algebroids is that the notion of D-integrability leads to Frobenius integrability: If $L\subset E$ is a D-structure, then the bracket morphism property of the anchor (see Definition~\ref{malg}) implies that its image $\rho(L)\subset TM$ is involutive, and hence induces a foliation of $M$.
Pre-Courant algebroids constitute a physically meaningful intermediary step between the Courant algebroids of generalised geometry, wherein the section constraint is imposed and solved, and the more general metric algebroids of a fully unconstrained doubled geometry; see~\cite{Jonke2018,Chatzistavrakidis:2019huz,Mori:2020yih,Marotta:2021sia} for detailed descriptions of the chain of metric algebroids involved between type~II supergravity and double field theory.

\medskip

\subsection{Generalised Para-Complex D-Branes}\label{sec:GenDbranes} ~\\[5pt]
\label{sec:genparDbrane}
We shall now extend the definition of generalised submanifolds, guided by the analogue constructions in generalised complex geometry from~\cite{Bursztyn2007, Zambon2008}, to the setting of generalised para-complex structures and pre-Courant algebroids. 
We begin with the natural extension of~\cite[Definition~7.1]{Zambon2008}, which is stated for an exact Courant algebroid.  

\begin{definition} \label{def:gensubmanifold}
Let $(E, \eta , \rho, \llbracket \, \cdot \, , \, \cdot \, \rrbracket)$ be an exact pre-Courant algebroid over a manifold $M$ together with a maximally isotropic splitting $\sigma$ such that $(E, \eta , \rho, \llbracket \, \cdot \, , \, \cdot \, \rrbracket)$ is isomorphic to the pre-Courant algebroid
$(\IT M , \eta_{\IT M} , \mathrm{pr}_{TM}, \llbracket \, \cdot \, , \, \cdot \, \rrbracket_{H_\sigma}),$ where $H_\sigma \in \mathsf{\Omega}^3(M).$ A \emph{generalised submanifold} is a pair $(\cW, L),$ where $i \colon \cW \lhook\joinrel\longrightarrow M$ is a submanifold of $M$ such that $i^* \, \de H_\sigma=0$  and $L \subset E$ is a maximally isotropic D-integrable sub-bundle over $\cW$ such that $\rho(L) = T \cW.$ 
\end{definition}

\begin{remark} \label{rmk:gensub}
Let us unravel and discuss the physical significance of this definition. The generalised tangent bundle $L$ of $\cW$ is isomorphic to the sub-bundle $L_\sigma \subset \IT M$ over $\cW$ by the pre-Courant algebroid isomorphism induced by $\sigma,$  where $L_\sigma$ is the graph
\be\nonumber
L_\sigma = \big\{ X + \alpha \in T\cW \oplus T^*M\rvert_\cW \  \big| \ \alpha \rvert_\cW = \iota_X F_\sigma \big\}
\ee
of some $2$-form $F_\sigma \in \mathsf{\Omega}^2(\cW).$ It is easy to show that $L_\sigma$ takes this form by using the fact that it is maximally isotropic, together with $\rho(L) = T\cW$, and applying the results of Section \ref{sec:Lagrangian}. In particular, there is a one-to-one correspondence between maximally isotropic sub-bundles $L$ over $\cW$ such that $\rho(L)=T\cW$ and $2$-forms $F \in \mathsf{\Omega}^2(\cW).$

Then the D-integrability condition for $L$ can be written as
\be\nonumber
0 = \eta_{\IT M}(\llbracket X_1 + \alpha_1, X_2 + \alpha_2 \rrbracket_{H_\sigma}, X_3 + \alpha_3 )= \iota_{X_3} \iota_{X_2} \iota_{X_1} (i^* H_\sigma + \de F_\sigma) \ ,
\ee
for all $X_1 + \alpha_1, \, X_2 + \alpha_2, \, X_3 + \alpha_3 \in \mathsf{\Gamma}(L_\sigma).$
Hence the pair $(\cW,L)$ must satisfy 
\begin{align*}
i^*H_\sigma = - \de F_\sigma \ ,
\end{align*}
which clearly requires $\de (i^* H_\sigma) = 0 .$ This generalises the condition \eqref{eq:HBcan} on the Lagrangian Born D-branes of Section~\ref{sec:LagrangianBorn}.
\end{remark}

Let $(E,\eta,\rho,\llbracket\,\cdot\,,\,\cdot\,\rrbracket)$ be an exact pre-Courant algebroid. In Appendix~\ref{sec:expreCourant} we describe the natural para-Hermitian structures $(K_\sigma,\eta)$ on $E$. More generally, a para-complex structure $\ccK\in{\sf Aut}_\unit(E)$ which is compatible with the metric $\eta$, in the sense of Definition~\ref{parahermvector}, is the analogue in generalised geometry of an almost para-Hermitian structure on a manifold and is called an \emph{almost generalised para-complex structure}~\cite{Hu:2019zro}.

Motivated by the properties of the D-branes for a Born sigma-model (Definition~\ref{dbranepara}), we can now provide a simple extension of \cite[Definition 7.3]{Zambon2008} to almost generalised para-complex structures. 

\begin{definition} \label{def:genbranes}
Let $(E, \eta , \rho, \llbracket \, \cdot \, , \, \cdot \, \rrbracket)$ be an exact pre-Courant algebroid over $M$ together with a maximally isotropic splitting $\sigma$ and an almost generalised para-complex structure $\ccK.$ 
A \emph{generalised para-complex D-brane} supported on $\cW\subseteq M$ is a generalised submanifold $(\cW,L)$ such that $\ccK(L) = L .$
\end{definition}

\begin{example}
Let $\ccK=K_{\sigma}$ be the natural almost generalised para-complex structure from Appendix~\ref{sec:expreCourant}. By Example~\ref{ex:Dstructure}, $(M,TM)$ is a generalised submanifold if and only if $H_\sigma=0$. Then $(M,TM)$ is a space-filling generalised para-complex D-brane, and in this case $F_\sigma=0$ in the correspondence of Remark~\ref{rmk:gensub}.
\end{example}

Definition~\ref{def:genbranes} should be regarded as a \emph{localised} version of our previous notions of D-branes, which is suitable for describing a single worldvolume submanifold instead of a whole foliation. Repeating the analysis of Sections~\ref{sec:Lagrangian} and~\ref{sec:genBornDbranes} shows that the general structure of a generalised para-complex D-brane is analogous to that of an almost Born D-brane from Proposition~\ref{prop:LDWD}: Let $E=\ccL_+\oplus\ccL_-$ be the decomposition of $E$ into the $\pm\,1$-eigenbundles $\ccL_\pm$ of $\ccK$. Suppose that $\ccW:=L\cap \ccL_+$ has constant rank, and set $\overline{\ccW}:=\eta^\sharp\big({\sf Ann}(\ccW)\big)$. Then
\begin{align*}
L = \ccW \, \oplus \, \overline{\ccW} \ .
\end{align*}
When $i^*H_\sigma=0$, suitable integral $2$-forms $F_\sigma$ from the correspondence of Remark~\ref{rmk:gensub} yield gauge fluxes on the worldvolume $\cW$. We illustrate these structures in the examples of Born D-branes from Section~\ref{BSM}.

\begin{example}\label{ex:BornDbranes}
We show how the Born D-branes of Section~\ref{sec:Dbranes} provide special instances of the generalised para-complex D-branes of Definition \ref{def:genbranes}. Let $(M,K, \eta)$ be an almost para-Hermitian manifold. The standard Courant algebroid $(\IT M,\eta_{\IT M},\mathrm{pr}_{TM},\llbracket \, \cdot \, , \, \cdot \, \rrbracket_0)$ over $M$ is called the \emph{large Courant algebroid}, and it features in the construction of the metric algebroids of double field theory~\cite{Jonke2018,Chatzistavrakidis:2019huz,Marotta:2021sia,Hu:2019zro}. It admits an almost generalised para-complex structure
\be \label{eq:genparaK}
\ccK_K =
\begin{pmatrix}
K & 0 \\
0 & -K^{\mathtt{t}}
\end{pmatrix}
\ee  
that clearly preserves the splitting $\IT M = TM \oplus T^*M.$ 

Consider a Born D-brane in the almost para-Hermitian manifold $M$, i.e. a maximally isotropic integrable distribution $L_{\textrm{\tiny D}}$ such that $K(L_{\textrm{\tiny D}})=L_{\textrm{\tiny D}}.$ Let $i:\cW_{\textrm{\tiny D}}\lhook\joinrel\longrightarrow M$ be an embedded leaf of the foliation $\cF_{\textrm{\tiny D}}$ induced by $L_{\textrm{\tiny D}}$. Then $T\cW_\dee \oplus {\mathsf{Ann}}(T\cW_\dee) \subset \IT M$
forms a generalised submanifold with $\cW_\dee$, since this sub-bundle is clearly maximally isotropic and D-integrable with respect to the standard Dorfman bracket $\llbracket \, \cdot \, , \, \cdot \, \rrbracket_0$. Since 
\be \nonumber
\ccK_K\big(T\cW_\dee \oplus {\mathsf{Ann}}(T\cW_\dee)\big)= T\cW_\dee \oplus {\mathsf{Ann}}(T\cW_\dee) \ ,
\ee
it follows that $\big(\cW_\dee, T\cW_\dee \oplus {\mathsf{Ann}}(T\cW_\dee)\big)$ is a generalised para-complex D-brane on $M$. This is analogous to the splitting property of Proposition~\ref{prop:LDWD}, and in particular the sub-bundle $T\cW_\dee \oplus {\mathsf{Ann}}(T\cW_\dee)$ is also invariant with respect to the natural para-complex structure of $\IT M$ given by (see Example~\ref{ex:gentanbun})
\be\nonumber
K_{\IT M}=
\begin{pmatrix}
\unit & 0 \\
0 & -\unit
\end{pmatrix}
\ ,
\ee  
as discussed in Section \ref{sec:Lagrangian}.

More generally, if $(M,K,\eta)$ is admissible with closed canonical $3$-form $H_{\tt can}$, we can consider its corresponding $H_{\tt can}$-twisted large Courant algebroid $(\IT M,\eta_{\IT M},\mathrm{pr}_{TM},\llbracket \, \cdot \, , \, \cdot \, \rrbracket_{H_{\tt can}})$. Then the pair $\big(\cW_\dee, T\cW_\dee \oplus {\mathsf{Ann}}(T\cW_\dee)\big)$ is a generalised para-complex D-brane if and only if $i^*H_{\tt can}=0.$ In other words, these D-branes cannot support non-zero generalised fluxes.
\end{example}

\begin{example}
Example~\ref{ex:BornDbranes} can be extended in a way which elucidates further the relation between the Lagrangian Born D-branes from Section~\ref{sec:LagrangianBorn} and generalised para-complex D-branes. Let $(M, K, \eta)$ be an admissible almost para-Hermitian manifold together with a Born D-brane $L_\dee$ inducing a foliation $\cF_\dee$ of $M$, and consider the $H_{\tt can}$-twisted large Courant algebroid $(\IT M, \eta_{\IT M}, \mathrm{pr}_{TM}, \llbracket \, \cdot \, , \, \cdot \, \rrbracket_{H_{\tt can}})$ on $M$.  For any embedded leaf $i \colon \cW_\dee \lhook\joinrel\longrightarrow M$ of $\cF_\dee$, we construct a generalised submanifold $\big(\cW_\dee, L^F\big)$ by picking a $2$-form $F \in \mathsf{\Omega}^2(\cW_\dee)$ and applying the bijective correspondence discussed in Remark~\ref{rmk:gensub}, i.e. we set
\be\nonumber
L^F = \big\{ X+ \alpha \in T\cW_\dee \oplus T^*M\rvert_{\cW_\dee}  \ \big| \ \alpha\rvert_{\cW_\dee} = \iota_X F \big\} \ .
\ee
The $2$-form $F$ may be induced by a $B_+$-transformation of the para-Hermitian vector bundle $(TM,K,\eta)$ (as discussed in Remark~\ref{rem:B+F}) or alternatively of $(\IT M,K_{\IT M},\eta_{\IT M})$ (where it acts by pulling back $L^F$ to $L^{F+i^*B}$ for a $2$-form $B\in\mathsf{\Omega}^2(M)$). In any case, in order for $\big(\cW_\dee, L^F\big)$ to define a generalised submanifold, $F$ must satisfy the $H_{\tt can}$-twisted integrability condition 
\begin{align} \label{eq:dFHcan}
\de F + i^*H_{\tt can} = 0 \ .
\end{align}

Let us choose again the almost generalised para-complex structure \eqref{eq:genparaK} on $\IT M$ induced by the almost para-complex structure $K$ on $TM$. If $F=0$, then $L^0 = T\cW_\dee \oplus {\mathsf{Ann}}(T\cW_\dee)$ and we recover the D-branes of Example~\ref{ex:BornDbranes}. This is analogous to the Lagrangian Born D-branes carrying flat connections. 

More generally,
the pair $\big(\cW_\dee, L^F\big)$ is a generalised para-complex D-brane if and only if
\be \label{eq:para-hol}
K^{\mathtt{t}} (\iota_X F) + \iota_{K(X)} F \ \in \ \mathsf{Ann}(T\cW_\dee) \ ,
\ee
for all $X\in T\cW_\dee$,
because the condition $K(T\cW_\dee)=T\cW_\dee$ is ensured by the fact that $L_\dee$ is a Born D-brane.
If $i^*H_{\tt can}=0$ and $F$ is an integral $2$-form, then there exists a complex line bundle $C$ over $\cW_\dee$ with a connection $\nabla^C$ such that $F$ is the curvature of $\nabla^C$. For an integrable almost para-complex structure $K$, by~\cite[Proposition~2]{Lawn2005} the condition \eqref{eq:para-hol} implies that $\big(C,\nabla^C\big)$ defines a \emph{para-holomorphic} line bundle. This is the analogue of B-branes on a complex manifold~\cite{Witten:1992fb}, which come with holomorphic Chan-Paton bundles, and their realisation as generalised complex branes~\cite{gualtieri:tesi}.

The general $H_{\tt can}$-twisted integrability condition \eqref{eq:dFHcan} can be interpreted as saying that the canonical $3$-form $H_{\tt can}$ sources a distribution of magnetic charge on the D-brane. It implies that the geometric description of the `gauge field' on those D-branes which support non-zero generalised fluxes is not simply through a connection on a vector bundle; this is somewhat analogous to the obstructions discussed in~\cite{Freed:1999vc,Kapustin:1999di}. For D-branes in doubled twisted tori, the gauge field is a connection on a module over a bundle of noncommutative algebras~\cite{Hull:2019iuy,Aschieri:2020uqp}.
\end{example}

\medskip

\subsection{Reduction of Large Courant Algebroids and D-Branes} ~\\[5pt]
We will now explain how to implement the reductions of our D-branes to `physical' spacetimes, in the sense of Section~\ref{sec:Dbranesleaf}, within the framework of the present section. This relies heavily on the theory of Courant algebroid reduction developed by~\cite{Bursztyn2007, Zambon2008}, which we review in Appendix~\ref{app:Courantred}.

\begin{example} \label{rmk:standardreduction}
We apply the reduction of Theorem~\ref{thm:BCGZ} to the setting of Section~\ref{sec:Dbranesleaf}. 
Let $(M,K,\eta)$ be an almost para-Hermitian manifold, and assume that the eigenbundle $L_-=T\cF$ of $K$ is integrable.  Thus $M$ is foliated by $\cF$, and we suppose  that the leaf space $\cQ= M/\cF$ is a smooth manifold. Then there is a unique surjective submersion $q: M \longrightarrow \cQ$ which is compatible with the smooth structure of $\cQ.$ 

Consider the large Courant algebroid $(\IT M, \eta_{\IT M} , \mathrm{pr}_{TM}, \llbracket \, \cdot \, , \, \cdot \, \rrbracket_0)$ on $M$. We  set $$A \coloneqq  L_- \oplus \{0\} = T\cF \oplus \{ 0 \} \ .$$ Then $A^\perp$ is spanned fibrewise by  sections of the form $Y + \de (q^*f),$ where $Y \in \mathsf{\Gamma}(TM)$ is a projectable vector field\footnote{A vector field $Y\in\sfGamma(TM)$ is \emph{projectable} with respect to the foliation $\cF$ if $[X,Y]\in\sfGamma(T\cF)$ for all $X\in\sfGamma(T\cF)$.} and $f \in C^\infty(\cQ).$ To apply Theorem~\ref{thm:BCGZ} to this case, we need to check that $Y + \de (q^*f)$ is basic with respect to $A$ (Definition~\ref{def:basic}). For this, we compute
\be \nonumber
\llbracket X , Y + \de (q^*f) \rrbracket_0 = [X,Y] + \pounds_X\, \de (q^*f) =  [X,Y] \ \in \  \mathsf{\Gamma}(A)  \ ,
\ee
for any $X \in \mathsf{\Gamma}(A)$, and therefore $Y + \de (q^*f) \in \mathsf{\Gamma}_{\mathtt{bas}}(A^\perp).$ 
Theorem \ref{thm:BCGZ} then implies that the reduced Courant algebroid is given by the standard Courant algebroid on $\cQ,$ through the pullback diagram
\be \nonumber
\begin{tikzcd}
\displaystyle \frac{(L_- \oplus \{ 0 \})^\perp}{L_- \oplus \{ 0 \}} \arrow{r}{} \arrow{dd} & \IT\cQ \arrow{dd}{} \\ & \\
M \arrow{r}{q} & \cQ
\end{tikzcd}
\ee
This makes rigorous previous arguments suggesting that the large Courant algebroid on a doubled space should reduce to the standard Courant algebroid on a `physical spacetime' upon implementation of the section constraint, see e.g~\cite{Jonke2018,Chatzistavrakidis:2019huz,Marotta:2021sia}.
\end{example}

\begin{example} \label{ex:redDirac}
Building on Example \ref{rmk:standardreduction}, we apply the Dirac reduction of Proposition~\ref{prop:Dircred} to a Born D-brane. For this, we assume that the fundamental $2$-form $\omega$ of the almost para-Hermitian manifold $(M,K,\eta)$ is transversally invariant and introduce a bundle-like Born metric $\cH$ with respect to the foliation $\cF$ induced by the involutive sub-bundle $L_-\subset TM$. A Born D-brane $L_\dee \subset TM$ for the Born sigma-model $\cS(\cH,\omega)$ into $M$ induces another foliation $\cF_\dee$ of $M$, as well as a Dirac structure for the large Courant algebroid on $M$ given by
\be \nonumber
L := L_\dee \oplus {\mathsf{Ann}}(L_\dee) \ .
\ee

With the reduction discussed in Example \ref{rmk:standardreduction}, we need to ensure that the hypotheses of Proposition~\ref{prop:Dircred} are met, which restricts the class of Born D-branes that can be reduced to the leaf space in this way. 
In particular, the condition that $L \cap A^\perp$ has constant rank is satisfied if $L_\dee = T\cF_\dee$ admits a sub-bundle spanned by projectable vector fields. Then the sections which span $L \cap A^\perp$ still take the form $Y + \de (q^*f),$ where $Y \in \mathsf{\Gamma}(L_\dee)$ is a projectable vector field and
$\de (q^*f) \in \mathsf{\Gamma}(\mathsf{Ann}(L_\dee))$ with $f \in C^\infty(\cQ).$ Thus the condition \eqref{eq:diracred1} is satisfied because
\be \nonumber
\llbracket X , Y + \de (q^*f) \rrbracket_0 = [X,Y] \ \in \ \mathsf{\Gamma}(L_-) \ , 
\ee
for all $X \in \mathsf{\Gamma}(A)= \mathsf{\Gamma}(L_- \oplus \{ 0 \}).$ The condition \eqref{eq:diracred2} is satisfied because $L_\dee$ is an integrable sub-bundle. Hence by Proposition~\ref{prop:Dircred}, $L$ descends to a Dirac structure $L_{\mathtt{red}}$ on $\cQ = M/\cF.$ In this way the reduction of a Born D-brane to the leaf space can also be regarded as a Dirac reduction. The interpretation of the set of D-branes for the quotient sigma-model $S(g,b)$ into $\cQ$ as a Dirac structure is analogous to the considerations of D-branes in generalised geometry by~\cite{Asakawa2012}.
\end{example}

\begin{example}
Finally we consider the reduction of the generalised para-complex D-branes which were studied in Section~\ref{sec:genparDbrane}.
The reduction of a generalised complex structure discussed in~\cite[Proposition~6.1]{Zambon2008} adapts to the case of a generalised para-complex structure. This restricts the class of generalised para-complex D-branes that can be reduced to the leaf space in this way. In particular, D-integrability of a generalised para-complex structure $\ccK$ on an exact Courant algebroid $(E, \eta, \rho, \llbracket\,\cdot \, , \, \cdot\,\rrbracket)$ requires that its Nijenhuis tensor vanishes:
\be \nonumber
\mathsf{Nij}_{\ccK}(e_1, e_2) \coloneqq \llbracket \ccK(e_1), \ccK(e_2) \rrbracket - \llbracket e_1, e_2 \rrbracket - \ccK\big(\llbracket e_1, \ccK(e_2) \rrbracket +\llbracket \ccK(e_1), e_2\rrbracket\big) = 0 \ ,
\ee
for all $e_1,e_2\in\sfGamma(E)$. Let $A \longrightarrow \cW$ be a sub-bundle of $E$ over a submanifold $\cW$ satisfying the conditions of Theorem \ref{thm:BCGZ}.
Then the hypotheses of \cite[Proposition~6.1]{Zambon2008} further require that $\ccK (A) \cap A^\perp$ has constant rank, that
\be \nonumber
\ccK (A) \cap A^\perp \ \subseteq \ A \ , 
\ee
and that
\be \nonumber
\ccK\big(\mathsf{\Gamma}_{\mathtt{bas}}\big(\ccK (A) \cap A^\perp\big)\big) \ \subseteq \  \mathsf{\Gamma}_{\mathtt{bas}}\big(\ccK (A) \cap A^\perp\big) \ .
\ee

For the large Courant algebroid over an almost para-Hermitian manifold $(M, K, \eta)$, the almost generalised para-complex structure $\ccK_K$ induced by $K$ is given by \eqref{eq:genparaK}. Its Nijenhuis tensor $\mathsf{Nij}_{\ccK_K}$ vanishes if and only if the almost para-complex structure $K$ is Frobenius integrable. In general, this is not the case for the almost para-Hermitian structure which induces a Born sigma-model, where only one of the eigenbundles $L_\pm$ of $K$ is required to be integrable; in fact, this happens in most examples of physical interest. Thus the reduction of a generalised para-complex D-brane on $M$ does not necessarily yield a generalised para-complex structure on the leaf space $\cQ=M/\cF$. This harmonises nicely with the expected picture of a D-brane in the `physical spacetime'.
\end{example}

\medskip

\subsection{D-Branes on Doubled Nilmanifolds} ~\\[5pt]
A large class of consistent compactifictions of supergravity are provided by doubled twisted tori~\cite{DallAgata2007}, which are defined as quotients $M=\sfGamma\setminus{\sf G}$ of a doubled Lie group $\sfG$~\cite{Marotta:2019eqc} by a discrete cocompact subgroup $\sfGamma\subset{\sf G}$ acting from the left; these have gauge algebras realised as the isometry algebra of $M$~\cite{Hull2009}. The smooth manifold $M$ is parallelisable, and so has a basis of globally defined left-invariant $1$-forms. The isometry group of $M$ is generated by vector fields dual to these $1$-forms. Polarisations of $M$ give quotients which are physical spaces realised as torus bundles, with a given monodromy encoded geometrically in $M$.

Here we will focus on the example of the doubled nilmanifold $M_\sfH$, which is obtained as a quotient of the cotangent bundle $T^*\sfH = \sfH\ltimes\frh$ of the three-dimensional Heisenberg group $\sfH$ by a discrete cocompact subgroup $\sfGamma_m$ labelled by an integer $m\in\IZ$. The doubled nilmanifold $M_\sfH$ admits two distinguished fibrations: a principal $\IT^3$-bundle over the Heisenberg nilmanifold $\IT_\sfH$ and a fibration over the $3$-torus $\IT^3$ with $\IT^3$ fibres. For further details, including the reductions of the doubled sigma-models in the corresponding polarisations of $M_\sfH$, see \cite{Marotta:2019eqc, Hull2009}. We shall illustrate our formalism by reproducing some of the D-branes found in~\cite{Lawrence:2006ma,Albertsson2009}.

\begin{example}[\bf Nilmanifold] \label{ex:nilmanifold}
Let $M_\sfH \longrightarrow \IT_\sfH$ be the principal $\IT^3$-bundle induced by the quotient of the Drinfel\rq{}d double $T^*\sfH,$ where the nilmanifold $\IT_\sfH$ has degree $m$ when viewed as a circle bundle over a $2$-torus $\IT^2$. This bundle inherits a split signature metric $\eta$ from the bi-invariant split signature metric on $T^*\sfH$, together with a compatible para-complex structure $K.$ Let
$(x,y,z, \tilde{x}, \tilde{y}, \tilde{z})$ be the local coordinates on $M_\sfH$ descending from the coordinates on $T^*\sfH=\sfH\ltimes\frh.$ Then the eigenbundles of $K$ are spanned pointwise  by
\be
 Z_x=\frac\partial{\partial x} \ , \quad Z_y= \frac\partial{\partial y} \qquad \mbox{and} \qquad  Z_z= \frac\partial{\partial z} + m\,x\,\frac\partial{\partial y} \ , \label{zdist}
\ee
for the distribution $L_+$, and 
\be
 \tilde{Z}^x= \frac\partial{\partial\tilde x} \ , \quad \tilde{Z}^y=  \frac\partial{\partial\tilde y} + m\,z \,  \frac\partial{\partial\tilde x} -m\,x\,\frac\partial{\partial\tilde z} \qquad  \mbox{and} \qquad \tilde{Z}^z= \frac\partial{\partial\tilde z} \ , \label{tilddist}
\ee
for the distribution $L_-.$

Then $L_+$ is an integrable distribution which is preserved by the para-complex structure $K,$ i.e. it is a Born D-brane. Following Example \ref{ex:redDirac}, we consider the sub-bundles  $A = L_- \oplus \{ 0 \}$ and $L=L_+ \oplus {\mathsf{Ann}}(L_+)$ of $\IT M_\sfH$. Then the $TM_\sfH$-component of $L \cap A^\perp$ is given by $L_+,$ i.e. $\mathsf{\Gamma}(L_+)$ is spanned by projectable vector fields. This yields a Dirac structure on the standard Courant algebroid over $\IT_\sfH$, which is given by the tangent bundle $T\IT_\sfH$, i.e. it is a D$3$-brane filling $\IT_\sfH$.

Similarly, the involutive distribution $L_\dee$ defined pointwise by 
\be \nonumber
L_\dee \big\rvert_\xi \coloneqq {\mathsf{Span}}(Z_x, \, Z_y , \,   \tilde{Z}^z) \ ,
\ee 
for all $\xi \in M_\sfH,$ is preserved by $K$ and is maximally isotropic with respect to $\eta.$ Hence it is a Born D-brane which induces the Dirac structure $L= L_\dee \oplus {\mathsf{Ann}}(L_\dee)$ on the large Courant algebroid $\IT M_\sfH.$ The Dirac structure $L$ can be reduced to a Dirac structure on the Heisenberg nilmanifold $\IT_\sfH$
because the $TM_\sfH$-component of the intersection $L \cap A^\perp$ is spanned by $\{ Z_x, \, Z_y \},$ where $A = L_- \oplus \{ 0 \}.$ The reduced Dirac structure is associated with a foliation of $\IT_\sfH$ whose leaves have dimension~$2$ and correspond to tori $\IT^2$. They are D2-branes for the reduced sigma-model into $\IT_\sfH$.

D1-branes on $\IT_\sfH$ with the topology of circles are analogously obtained by considering the maximally isotropic involutive distribution $ L'_\dee$ defined pointwise as 
\be\nonumber
 L'_\dee \big\rvert_\xi \coloneqq {\mathsf{Span}} (Z_y, \, \tilde{Z}_x , \,   \tilde{Z}^z) \ ,
\ee
for all $\xi \in M_\sfH,$ which is preserved by $K.$ In this case the $TM_\sfH$-component of the intersection $L \cap A^\perp$ is spanned by $Z_y,$ where $L =  L'_\dee \oplus {\mathsf{Ann}}( L'_\dee)$ and $A=L_- \oplus \{ 0 \}$.
\end{example}

\begin{example}[\bf ${\IT^3}$ with $\boldsymbol H$-flux]
We consider now the torus bundle $M_\sfH \longrightarrow \IT^3$ with split signature metric $\eta$ as above and compatible almost para-complex structure $K^\prime$ determined by the eigenbundles $L_\pm^\prime$ respectively spanned pointwise by 
\be\label{eq:Zprime}
Z'_x = \frac{\partial}{\partial x} \ , \quad Z'_y= \frac{\partial}{\partial y} -m\,x\, \frac{\partial}{\partial \tilde{z}} \qquad \mbox{and} \qquad Z'_z=\frac{\partial}{\partial z} - m\,y\, \frac{\partial}{ \partial \tilde{ x}} + m\,x\, \frac{\partial}{\partial \tilde{y}} \ ,
\ee 
and
\be\nonumber
\tilde{Z}^{\prime\,x} = \frac\partial{\partial\tilde x} \ , \quad \tilde{Z}^{\prime\,y} = \frac\partial{\partial\tilde y} \qquad \mbox{and} \qquad \tilde{Z}^{\prime\,z} =\frac\partial{\partial\tilde z} \ ,
\ee
where $(x, y, z, \tilde{x}, \tilde{y}, \tilde{z})$ are local coordinates on $M_\sfH$ adapted to the fibres at a point $\xi \in M_\sfH.$ In this case the integer $m$ labels the Dixmier-Douady class of a gerbe in ${\sf H}^3(\IT^3;\IZ)\simeq\IZ$.

An example of a Born D-brane is given by the subbundle $L_\dee\subset TM_\sfH$ defined pointwise as
\be\nonumber
L_\dee \big\rvert_\xi \coloneqq {\mathsf{Span}} (Z^\prime_x, \, \tilde{Z}^{\prime\,y} , \,   \tilde{Z}^{\prime\,z} )
\ee
for any $\xi \in M_\sfH.$ This sub-bundle is clearly preserved by $K^\prime$, it is maximally isotropic with respect to $\eta,$ and it is integrable:
\be\nonumber
[Z^\prime_x, \tilde{Z}^{\prime\,y}] = [Z^\prime_x, \tilde{Z}^{\prime\,z}] = [\tilde{Z}^{\prime\,y}, \tilde{Z}^{\prime\,z} ]  = 0 \ .
\ee
The Born D-brane $L_\dee$ can be reduced by considering its Dirac structure $L \coloneqq L_\dee \oplus {\mathsf{Ann}}(L_\dee).$ In this case, we set $A = L^\prime_- \oplus \{ 0 \}$, and then the $TM_\sfH$-component of $A^\perp \cap L$ is given by the sub-bundle spanned pointwise by $Z^\prime_x.$ This yields a Dirac structure on the quotient manifold $\cQ = \IT^3$ associated with a one-dimensional foliation of $\cQ.$ Its leaves are circles and are wrapped by D$1$-branes. These D-branes are T-dual to the D$0$-branes on the Heisenberg nilmanifold $\IT_\sfH$ obtained from reducing the Born D-brane $L_-$ of Example~\ref{ex:nilmanifold}.

It is well-known that the $3$-torus $\IT^3$ with non-zero $H$-flux does not admit any space-filling D$3$-branes; this follows from the non-vanishing Freed-Witten anomaly~\cite{Freed:1999vc} in this case and was also reproduced by the analysis of~\cite{Albertsson2009}. In our geometric approach this follows immediately: The bundle $M_\sfH \longrightarrow \IT^3$ is non-trivial, hence it does not admit any horizontal integrable distribution of rank $3.$ A less conceptual but more calculational way of seeing this is to use the computations of the D-brackets from~\cite{SzMar} to determine the canonical $3$-form $H_{\tt can}$ on $M_\sfH$ in this polarisation. One finds
\begin{align*}
H_{\tt can} = -\tfrac32 \, m\,\de x\wedge \de y\wedge \de z \ .
\end{align*} 
This is non-zero when evaluated on the basis of local vector fields from \eqref{eq:Zprime} for $m\neq0$, and so it forbids D$3$-branes wrapping $\IT^3$ (cf. Sections~\ref{sec:LagrangianBorn} and~\ref{sec:genparDbrane}).
\end{example}

\appendix

\section{Para-Hermitian Geometry and Metric Algebroids} \label{intropara}

In this appendix we summarise the mathematical background that is used throughout the main text of the paper. The reader should note the similarities with analogous structures defined in generalised geometry. For further details, advancements and references, see e.g.~\cite{Vaisman2012,Vaisman2013,Freidel2017,Svoboda2018,Freidel2019, Marotta:2019eqc,Marotta:2021sia}.

\subsection{Para-Hermitian Vector Bundles} \label{sec:paravector} ~\\[5pt]
We start by introducing the key geometric structures that play a key role throughout our discussions.
\begin{definition} \label{parahermvector}
Let $E\longrightarrow  M$ be a real vector bundle of even rank $2d$ over a smooth manifold $ M$. A
\emph{para-complex structure} on $E$ is a vector bundle automorphism
$K \in {\sf Aut}_\unit(E)$ covering the identity such that
$K^2=\unit$ and $K\neq\pm\,\mathds{1}$, and whose $\pm\,1$-eigenbundles  have equal rank $d$. The pair $(E,K)$ is a \emph{para-complex vector bundle}. 

If $E$ additionally admits a fibrewise  metric $\eta\in \mathsf{\Gamma}\big(\midodot^2 E^*\big)$ of split signature $(d,d)$ which is compatible with the para-complex structure $K$ in the sense that\footnote{Throughout we use the same symbol for a vector bundle morphism covering the identity and its induced $C^\infty( M)$-module morphism on sections.}
$$
\eta\big(K(e_1),K(e_2)\big)=-\eta(e_1,e_2) \ , 
$$
for all $e_1, e_2 \in \sfGamma(E)$, then the pair $(K, \eta)$ is
a \emph{para-Hermitian structure} on $E$ and the triple $(E,K,\eta)$ is a \emph{para-Hermitian vector bundle}. 

An \emph{almost para-Hermitian manifold} is a triple $(M,K,\eta)$ of a manifold $M$ of even dimension $2d$ with an almost para-Hermitian structure $(K,\eta)$ on its tangent bundle $TM$.\footnote{We may drop the adjective `almost' if $K$ is Frobenius integrable.}
\end{definition}

In this case $K$ admits two eigenbundles $L_\pm$ with
eigenvalues $\pm\,1,$ so that 
$$
E= L_+ \oplus L_- \ ,
$$ 
and $L_\pm$ are maximally
isotropic with respect to the fibrewise metric $\eta.$ The case $E=TM$ for an
almost para-Hermitian manifold $M$ is particularly relevant because it
allows one to formulate conditions for the integrability of the
eigenbundles $L_\pm\subset TM$, and hence on the possibility that $M$ is a foliated
manifold.

\begin{remark}\label{rem:EL-}
Let $E\longrightarrow M$ be a vector bundle of rank $2d$ endowed with a split
signature metric $\eta$, and let $L$ be a maximally isotropic sub-bundle
of $E.$ Then the short exact sequence
\be
0\longrightarrow L\longrightarrow E\longrightarrow
E/L\longrightarrow 0 
\label{eq:EL-}\ee
always admits a maximally isotropic splitting. This determines a para-Hermitian structure on $E.$ All  maximally isotropic splittings of  \eqref{eq:EL-} give isomorphic para-Hermitian structures.
\end{remark}

It is straightforward to see that the compatibility condition between $\eta$ and $K$ in Definition~\ref{parahermvector} is equivalent to 
$$
\eta \big(K(e_1), e_2\big)= -\eta \big(e_1, K(e_2)\big) \ , 
$$ 
for all $e_1,e_2 \in \sfGamma(E)$.  Any para-Hermitian vector
bundle $E$ is therefore endowed with a non-degenerate \emph{fundamental $2$-form} $\omega\in \mathsf{\Gamma}\big(\midwedge^2 E^*\big)$ given by
$$
\omega(e_1,e_2)= \eta\big(K(e_1), e_2\big) \ , 
$$ 
for all $e_1,\e_2 \in \sfGamma(E)$. The eigenbundles $L_\pm \subset E$ are also maximally isotropic with respect to $\omega.$

\begin{example} \label{ex:gentanbun}
Let $E=\mathbb{T} M$ be the \emph{generalised tangent bundle }
$$
\mathbb{T} M= T M \oplus T^* M
$$ 
over a manifold $ M.$ It is naturally endowed with a fibrewise split signature metric 
$$
\eta_{\IT M}(X+ \alpha, Y+ \beta)= \iota_X \beta + \iota_Y \alpha \ , 
$$
for all $X+\alpha,Y+ \beta \in \mathsf{\Gamma}(\mathbb{T} M),$ where $\iota_X$ is the interior multiplication of forms by the vector field $X\in\sfGamma(T M)$. The natural para-complex structure $K$ of $\mathbb{T} M$ is given by
$$
K_{\IT M}(X+\alpha)= X- \alpha \ ,
$$ 
for all $X+\alpha \in \mathsf{\Gamma}(\mathbb{T} M),$ so that $T M$ and
$T^* M$ are the respective $\pm\,1$-eigenbundles. Clearly $\eta$ and
$K$ are compatible in the sense of Definition~\ref{parahermvector},
and the bundles $T M$ and $T^* M$ are maximally isotropic with respect to $\eta.$ Thus we obtain a fundamental $2$-form
$$
\omega_{\IT M}(X+ \alpha, Y+ \beta)= \iota_X \beta - \iota_Y \alpha \ , 
$$ 
for all $X+\alpha , Y+ \beta \in \mathsf{\Gamma}(\mathbb{T} M),$ which is the additional natural non-degenerate pairing that can be defined in this case \cite{gualtieri:tesi}. 
\end{example}

\medskip

\subsection{Generalised Metrics and Born Geometry} \label{genmesubs} ~\\[5pt]
We shall now introduce a generalised metric compatible with a para-Hermitian structure.

\begin{definition} \label{genme}
Let $(E,\eta)$ be a pseudo-Euclidean vector bundle over a manifold $ M$. A \emph{generalised 
  metric} on $E$ is an automorphism $I \in {\sf Aut}_\unit(E)$ with
$I^2=\unit$ and $I\neq\pm\,\mathds{1}$ which defines a fibrewise Riemannian metric
\be \nonumber
\cH (e_1,e_2)=\eta\big(I(e_1),e_2\big) \ ,
\ee  
for all $e_1,e_2 \in \sfGamma(E)$.
\end{definition}

\begin{remark}
Definition \ref{genme} can also be recast in a different form.
Any generalised metric induces a vector bundle isomorphism\footnote{Throughout this paper a superscript ${}^\sharp$ denotes
the bundle morphism $E^*\longrightarrow E$ induced by a 
$(2,0)$-tensor in $\mathsf{\Gamma}(E\otimes E)$. For a $(0,2)$-tensor in $\mathsf{\Gamma}(E^*\otimes E^*)$ we
 use a superscript ${}^\flat$ for the induced bundle morphism
$E\longrightarrow E^*$. Conversely, the tensor associated
to a vector bundle morphism $T$ will be 
underlined as $\underline{T}\,$.}  $\cH^\flat \in {\sf Hom}(E, E^*)$ which, in terms of the induced vector bundle isomorphisms $\eta^{-1}{}^\sharp, \cH^{-1}{}^\sharp\in {\sf Hom}(E^*, E),$ satisfies the condition
\be \label{condgenme}
\eta^{-1}{}^\sharp\big(\cH^\flat(e)\big)= \cH^{-1}{}^\sharp\big(\eta^\flat(e)\big) \ ,
\ee
for all $e \in E,$ so that
$\eta^{-1}{}^\sharp \circ \cH^\flat \in {\sf End}(E)$. In Definition \ref{genme}
this is nothing but $I= \eta^{-1}{}^\sharp \circ \cH^\flat \in {\sf Aut}_\unit(E),$
and \eqref{condgenme} implies that $\eta^{-1}{}^\sharp\circ \cH^\flat$ squares to the
identity map in ${\sf End}(E).$ The tensor induced by this map can
be regarded as a section $\underline{I} \in \mathsf{\Gamma}(E^* \otimes E).$
\end{remark}

\begin{example} \label{ex:etacHrel}
Let $(M, K, \eta)$ be
an almost para-Hermitian manifold. A generalised metric on  $M$ is defined by
$$
\cH(X,Y):=\eta\big(I(X),Y\big) \ , 
$$
for all $X, Y \in \sfGamma(TM)$,
where  $I \in {\sf Aut}_\unit(TM)$ with $I^2=\unit$ and $I\neq\pm\,\mathds{1}$. It satisfies 
$$
\eta^{-1}{}^\sharp\big(\cH^\flat(X)\big)= \cH^{-1}{}^\sharp\big(\eta^\flat(X)\big) \ ,
$$
for all $X \in TM.$ Then $I(X)= \eta^{-1}{}^\sharp(\cH^\flat(X)),$
for all $X \in TM$.
\end{example}

We can unravel the structure of a generalised metric through~\cite{Marotta:2019eqc}

\begin{proposition} \label{gbparaherm}
Let $(E,K, \eta)$ be a para-Hermitian vector bundle. A generalised metric $\cH$ on $E$ defines a unique pair $(g_+, b_+)$
of a fibrewise Riemannian metric $g_+ \in \mathsf{\Gamma}(\midodot^2 L_+^*)$ on the
sub-bundle $L_+\subset E$ and a 2-form $b_+\in \mathsf{\Gamma}(\midwedge^2 L_+^*).$
Conversely, any such pair $(g_+,b_+)$ uniquely defines a generalised metric.
\end{proposition}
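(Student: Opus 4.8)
The plan is to adapt, following~\cite{Marotta:2019eqc}, the standard parametrisation of generalised metrics on the generalised tangent bundle by pairs $(g,b)$ (see~\cite{gualtieri:tesi}) to the split-signature para-Hermitian setting: I would work fibrewise, and then note that every construction involved is through smooth bundle maps and so globalises.

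First I would replace $\cH$ by the associated involution $I=\eta^{-1}{}^\sharp\circ\cH^\flat\in{\sf Aut}_\unit(E)$ and pass to the eigenbundle decomposition $E=C_+\oplus C_-$ of $I$ into its $\pm\,1$-eigenbundles, the images of the complementary smooth projectors $\tfrac12(\unit\pm I)$. On $C_\pm$ one has $\cH(e,e)=\pm\,\eta(e,e)$, so $\eta$ restricts to a positive-definite form on $C_+$ and a negative-definite form on $C_-$; since $\eta$ has signature $(d,d)$ this forces ${\rm rank}(C_+)={\rm rank}(C_-)=d$, and symmetry of $\cH$ and of $\eta$ gives that $C_+$ and $C_-$ are $\eta$-orthogonal. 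Because $L_\pm$ are $\eta$-isotropic and $\cH$ is Riemannian, $C_+\cap L_-=0$, so the restriction to $C_+$ of the projection $E=L_+\oplus L_-\longrightarrow L_+$ is a fibrewise isomorphism and $C_+$ is the graph of a bundle morphism $\beta_+\colon L_+\longrightarrow L_-$. Using the duality isomorphism $\eta^\flat\colon L_-\longrightarrow L_+^*$ coming from the non-degenerate pairing of the two maximally isotropic sub-bundles, I would encode $\beta_+$ as a bilinear form $E_+$ on $L_+$; a direct computation then identifies the restriction of $\eta$ to the graph $C_+$ with twice the symmetric part of $E_+$. Writing $E_+=g_++b_+$ for the decomposition into symmetric and antisymmetric parts, positive-definiteness of $\eta|_{C_+}$ makes $g_+$ a fibrewise Riemannian metric on $L_+$ and $b_+\in\sfGamma(\midwedge^2 L_+^*)$, and both are manifestly determined by $\cH$.

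For the converse, given $(g_+,b_+)$ I would form the bilinear form $E_+:=g_++b_+$ on $L_+$, let $C_+\subset E$ be the graph of the bundle morphism $L_+\longrightarrow L_-$ corresponding to $E_+$ under $\eta$, and take $C_-:=C_+^{\perp_\eta}$; since the induced form on $C_+$ is (twice) $g_+$ and hence non-degenerate, $E=C_+\oplus C_-$, and the law of inertia forces $\eta|_{C_-}$ to be negative-definite. Declaring $I:=\unit$ on $C_+$ and $I:=-\unit$ on $C_-$ produces an automorphism with $I^2=\unit$ and $I\neq\pm\unit$ for which $\cH(e_1,e_2):=\eta(I(e_1),e_2)$ is symmetric (the cross terms vanish by $\eta$-orthogonality) and positive-definite (it restricts to $\eta|_{C_+}$ on $C_+$ and to $-\eta|_{C_-}$ on $C_-$), i.e.~a generalised metric. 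To finish I would check that the two assignments are mutually inverse: the $+1$-eigenbundle of the reconstructed $\cH$ is $C_+$ by construction, and its $-1$-eigenbundle is $C_+^{\perp_\eta}$, which in the forward direction coincides with the original $C_-$ because both are the $d$-dimensional $\eta$-orthogonal complement of $C_+$; unwinding the identification $L_-\simeq L_+^*$ then recovers the same $E_+$, hence the same $(g_+,b_+)$.

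The step I expect to require the most care — more bookkeeping than genuine obstacle — is the handling of the identification $L_-\simeq L_+^*$ via $\eta$ and the attendant normalisation factors, together with the repeated appeals to the law of inertia to pin down the signature of $\eta$ on $C_+$ and on $C_+^{\perp_\eta}$. Everything else is routine fibrewise linear algebra, and smoothness is automatic since $I$, the projectors $\tfrac12(\unit\pm I)$ and $\eta^\flat$ are all smooth bundle maps.
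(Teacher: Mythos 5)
Your proof is correct, and it follows essentially the same route as the argument the paper relies on (cited from~\cite{Marotta:2019eqc}, adapting Gualtieri's parametrisation): pass to the involution $I=\eta^{-1}{}^\sharp\circ\cH^\flat$, identify its $+1$-eigenbundle as the graph of a map $L_+\longrightarrow L_-\simeq L_+^*$, and split the resulting bilinear form into $g_++b_+$, with the converse given by reversing the graph construction. The fibrewise checks (ranks via the signature of $\eta$, $\eta$-orthogonality of the eigenbundles, mutual inverseness) are exactly the required ones, so apart from the normalisation conventions you already flag, nothing is missing.
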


\begin{example}
Let $E=\mathbb{T} M=T M\oplus
T^* M$ be the generalised tangent bundle over a manifold $ M$. A generalised metric $\cH$ on  $ \mathbb{T} M$ is equivalent to
a Riemannian metric $g_+$ and a 2-form $b_+ $ on $ M$.  See \cite{Jurco2016, gualtieri:tesi} for further details.
\end{example}

We can now connect with  one of the key notions in the formalism of~\cite{Freidel2019}.

\begin{definition} \label{compagenmetr}
A \emph{compatible generalised metric} on a para-Hermitian vector bundle \\ $(E,K,\eta)$ is a generalised metric $\mathcal{H}$ on $E$ which is compatible with the fundamental 2-form $\omega$ in the sense that
\be\nonumber
\omega^{-1}{}^\sharp\big(\mathcal{H}^\flat(e)\big) =
-\mathcal{H}^{-1}{}^\sharp\big(\omega^\flat(e)\big) \ ,  
\ee
for all $ e \in E$.
The triple $(K,\eta,\mathcal{H})$ is a \emph{Born geometry} on
$E.$ 
\end{definition}

A Born geometry can be regarded as a reduction of the structure group of $E$ to ${\sf O}(d)$, and it is  a special type of generalised metric, as asserted through~\cite{Marotta:2019eqc}

\begin{proposition}\label{prop:cHg+}
A Born geometry on an almost para-Hermitian vector bundle $(E,K,\eta)$ is a generalised metric $\cH$ specified solely by a fibrewise metric $g_+$ on the eigenbundle $L_+.$ 
\end{proposition}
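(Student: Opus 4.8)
The plan is to build on Proposition~\ref{gbparaherm}, which already identifies a generalised metric $\cH$ on a para-Hermitian vector bundle $(E,K,\eta)$ with a pair $(g_+,b_+)$ of a fibrewise Riemannian metric $g_+\in\sfGamma(\midodot^2L_+^*)$ on $L_+$ and a $2$-form $b_+\in\sfGamma(\midwedge^2L_+^*)$; it then suffices to show that the additional compatibility of $\cH$ with the fundamental $2$-form $\omega$ required in Definition~\ref{compagenmetr} holds precisely when $b_+=0$. The first step I would take is to recast the musical identity of Definition~\ref{compagenmetr} as an algebraic identity between endomorphisms of $E$. Writing $I=\eta^{-1}{}^\sharp\circ\cH^\flat\in{\sf Aut}_\unit(E)$ as in \eqref{condgenme}, and recalling from Appendix~\ref{sec:paravector} that $\omega^\flat=\eta^\flat\circ K$, whence $\omega^{-1}{}^\sharp=K\circ\eta^{-1}{}^\sharp$ using $K^2=\unit$, a direct computation gives $\omega^{-1}{}^\sharp\circ\cH^\flat=K\circ I$ and $\cH^{-1}{}^\sharp\circ\omega^\flat=I\circ K$, so that the condition of Definition~\ref{compagenmetr} becomes the anticommutation relation $I\circ K=-K\circ I$. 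I expect this reformulation to be the step requiring the most care, since it is pure bookkeeping with the $\sharp/\flat$ conventions and signs, but it is the crux of the argument.

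Next I would extract the geometric content of $I\circ K=-K\circ I$. Because $L_\pm$ are the $\pm\,1$-eigenbundles of $K$, the anticommutation forces $I(L_\pm)=L_\mp$, with $I|_{L_+}\colon L_+\to L_-$ and $I|_{L_-}\colon L_-\to L_+$ mutually inverse isomorphisms since $I^2=\unit$. Thus in the splitting $E=L_+\oplus L_-$ the endomorphism $I$ is purely off-diagonal; comparing with Proposition~\ref{gbparaherm}, in which the $2$-form $b_+$ parametrises exactly the deviation of $\cH$ from being block-diagonal with respect to $E=L_+\oplus L_-$, this off-diagonality of $I$ is equivalent to $b_+=0$. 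Identifying $L_-\simeq L_+^*$ via $\eta$ as in Example~\ref{ex:etacHrel}, the surviving isomorphism $I|_{L_+}$ is then $\eta^{-1}{}^\sharp\circ g_+^\flat$, so that $\cH$ reduces to the block-diagonal metric $\cH=g_+\oplus\eta^\flat(g_+^{-1})$, manifestly determined by $g_+$ alone; moreover $\cH$ is positive-definite if and only if $g_+$ is.

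Finally, for the converse I would verify that any fibrewise Riemannian metric $g_+$ on $L_+$ produces, through the recipe just described, an automorphism $I\in{\sf Aut}_\unit(E)$ with $I^2=\unit$, with $\eta(I\,\cdot\,,\,\cdot\,)$ symmetric and positive-definite, and with $I\circ K=-K\circ I$; by the first step this $I$ defines a generalised metric compatible with $\omega$ in the sense of Definition~\ref{compagenmetr}, i.e.~a Born geometry. This establishes the asserted bijection between Born geometries on $(E,K,\eta)$ and fibrewise Riemannian metrics on the eigenbundle $L_+$.
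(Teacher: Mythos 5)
Your argument is correct, and in fact the paper itself states Proposition~\ref{prop:cHg+} without proof (it is quoted from~\cite{Marotta:2019eqc}), so there is no in-paper derivation to compare against; your proposal supplies exactly the argument the surrounding text presupposes, landing on the diagonal form \eqref{eq:diaghermmetric}. The key reformulation is right: with $\cH^\flat=\eta^\flat\circ I$, $\omega^\flat=\eta^\flat\circ K$ and $I^2=K^2=\unit$, the compatibility condition of Definition~\ref{compagenmetr} is precisely $I\circ K=-K\circ I$, and the anticommutation forces $I(L_\pm)=L_\mp$. The only place where you lean on material not actually spelled out in this paper is the step ``off-diagonality of $I$ $\Leftrightarrow$ $b_+=0$'', which implicitly uses the graph/parametrisation behind Proposition~\ref{gbparaherm}; this can be closed in one line without it: from $I(L_+)\subseteq L_-$ and isotropy of $L_-$ one gets $\cH(L_+,L_-)=\eta(I(L_+),L_-)=0$, and the mixed $L_+$--$L_-$ block of $\cH$ under the $\eta$-identification $L_-\simeq L_+^*$ is exactly $b_+$ (up to sign/convention), so it vanishes; conversely the block-diagonal $\cH$ of \eqref{eq:diaghermmetric} has $I$ exchanging $L_\pm$, hence anticommuting with $K$. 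Your converse step and the positivity statement ($\cH>0$ iff $g_+>0$, since the second block is the transport of $g_+^{-1}$ by $\eta$) are also fine, so the bijection between Born geometries and fibrewise metrics on $L_+$ is established.
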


In other words, a compatible generalised metric $\cH$ can be regarded
as a choice of a metric on the sub-bundle $L_+$ in the splitting $E=L_+\oplus L_-$ associated with $K.$ In this polarisation, the compatible generalised metric reads in matrix notation as\footnote{Here we consider the extension of $\eta^\flat: E \longrightarrow E^*$ to the tensor product bundles.}
\be \label{eq:diaghermmetric}
\mathcal{H} = \bigg( \begin{matrix}
g_+ & 0 \\ 0 & \eta^\flat(g_+^{-1})
\end{matrix} \bigg) \ .
\ee

\medskip

\subsection{$B_+$-Transformations}\label{sec:Btransformations} ~\\[5pt]
We describe an important class of isometries of a para-Hermitian vector bundle called `$B_+$-transformations'; as the name suggests, there is also a notion of `$B_-$-transformation', which corresponds to interchanging the roles of the eigenbundles $L_+$ and $L_-$ everywhere below.

\begin{definition}\label{def:B+}
Let $(E, K, \eta)$ be a  para-Hermitian vector bundle. A
$B_+$-\emph{transformation} is an isometry $e^{B_+}: E \longrightarrow
E$ of $\eta$ covering the identity which is given in matrix notation by
\be \label{btra}
e^{B_+}= 
\bigg(\begin{matrix}
\mathds{1} & 0 \\
B_+ & \mathds{1}
\end{matrix}\bigg)
\ee
in the  splitting $E=L_+\oplus L_-$ induced by $K$, where $B_+ : L_+
\longrightarrow L_-$ is a  skew map in the sense
that it satisfies 
$$
\eta\big(B_+(e_1),e_2\big)=- \eta\big(e_1, B_+(e_2)\big) \ ,
$$ 
for all $ e_1, e_2 \in \sfGamma(E)$.
\end{definition}

The map $B_+$ defines  a 2-form $b_+\in\sfGamma\big(\midwedge^2L_+^*\big)$ by 
$$
b_+(e_1,e_2) = \eta\big(B_+(e_1),e_2\big) \ ,
$$
for all $e_1, e_2 \in \sfGamma(E).$ 

The inverse of a $B_+$-transformation is given by the map $e^{-B_+}:E \longrightarrow E.$ The pullback of $K=
\unit_{L_+}- \unit_{L_-}\in {\sf Aut}_\unit(E)$ by a $B_+$-transformation is given by 
\be \nonumber
K_{B_+}=e^{-B_+}\circ K\circ e^{B_+}=
\begin{pmatrix}
\unit & 0 \\
-2\,B_+ & -\unit
\end{pmatrix} \ .
\ee
The $B_+$-transformation maps the polarisation
$E=L_+\oplus L_-$ induced by $K$ to a new polarisation \smash{$E=L^{B_+}_+\oplus L^{B_+}_-$}, with \smash{$L^{B_+}_\pm=e^{-B_+}(L_\pm)$}, such that only the $-1$-eigenbundle is preserved by a
$B_+$-transformation, \smash{$L^{B_+}_-=L_-,$} while the $+1$-eigenbundle changes, \smash{$L^{B_+}_+\neq L_+.$} The fundamental 2-form $\omega_{B_+}$ of the para-Hermitian structure $(K_{B_+}, \eta)$ is given by
\begin{align*}
\omega_{B_+}= \omega - 2\,b_+ \ .
\end{align*}

\begin{remark}\label{rem:B+}
If $(E,\eta,L)$ is a vector bundle of even rank
endowed with a split signature metric and a choice of maximally
isotropic sub-bundle, as in Remark~\ref{rem:EL-},  then the
maximally isotropic splittings of the short exact sequence
\eqref{eq:EL-} are mapped into each other via $B_+$-transformations which preserve $L$.
\end{remark}

It is also possible to determine the $B_+$-transformations of a
compatible generalised metric of a Born geometry. A compatible
generalised metric $\cH$ of a  para-Hermitian structure $(K,
\eta)$ transforms under a $B_+$-transformation to the compatible
generalised metric $\cH_{B_+}$ of the pullback 
para-Hermitian structure $(K_{B_+}, \eta)$ on $E.$ Recalling that
$\cH$ takes the diagonal form \eqref{eq:diaghermmetric}, an explicit calcuation then shows~\cite{Marotta:2019eqc}

\begin{proposition}
A generalised metric on a para-Hermitian
vector bundle $(E, K, \eta)$ corresponds to a choice of a Born geometry
$(K,\eta, \cH)$ and a $B_+$-transformation.
\end{proposition}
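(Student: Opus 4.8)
The plan is to reduce the statement to the parametrisations already established in Propositions~\ref{gbparaherm} and~\ref{prop:cHg+}, together with the parametrisation of $B_+$-transformations by $2$-forms on $L_+$. First I would recall that, by Proposition~\ref{gbparaherm}, a generalised metric $\cH$ on $(E,K,\eta)$ is the same data as a pair $(g_+,b_+)$ consisting of a fibrewise Riemannian metric $g_+\in\sfGamma(\midodot^2 L_+^*)$ and a $2$-form $b_+\in\sfGamma(\midwedge^2 L_+^*)$; and that, by Proposition~\ref{prop:cHg+}, the compatible (i.e.\ Born) generalised metrics are precisely those corresponding to the slice $b_+=0$, namely the diagonal metrics $\cH_{g_+}$ of the form \eqref{eq:diaghermmetric}. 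Finally, since $\eta$ restricts to a perfect pairing between $L_+$ and $L_-$, a $2$-form $b_+\in\sfGamma(\midwedge^2 L_+^*)$ is equivalent to a skew map $B_+:L_+\longrightarrow L_-$ via $\eta(B_+(e_1),e_2)=b_+(e_1,e_2)$, hence to a $B_+$-transformation $e^{B_+}$ in the sense of Definition~\ref{def:B+}. Thus, at the level of data, both sides of the claimed correspondence are parametrised by a pair (metric on $L_+$, $2$-form on $L_+$), and it remains to match these identifications.

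The core of the argument is to show that the $B_+$-transform of the Born geometry determined by $g_+$ recovers the generalised metric determined by $(g_+,b_+)$. Concretely, I would show that the conjugated involution $I_{B_+}=e^{-B_+}\circ I_{g_+}\circ e^{B_+}$, where $I_{g_+}=\eta^{-1}{}^\sharp\circ\cH_{g_+}^\flat$, defines through Definition~\ref{genme} a generalised metric $(\cH_{g_+})_{B_+}$ whose data in the sense of Proposition~\ref{gbparaherm} is again $(g_+,b_+')$ with $b_+'$ a fixed bijective (linear) function of the $2$-form of $B_+$, dictated by the relation $\omega_{B_+}=\omega-2\,b_+$ which I will track carefully. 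Since $e^{B_+}$ is an isometry of $\eta$, one has $(\cH_{g_+})_{B_+}(e_1,e_2)=\cH_{g_+}(e^{B_+}e_1,e^{B_+}e_2)$, so this is a direct block-matrix computation in the splitting $E=L_+\oplus L_-$: writing $\cH_{g_+}$ as in \eqref{eq:diaghermmetric} and $e^{B_+}$ as in \eqref{btra}, the pulled-back bilinear form expands into the standard off-diagonal generalised-metric normal form, and comparison with the normal form produced in the proof of Proposition~\ref{gbparaherm} identifies the data, leaving the $g_+$ slot untouched. Because $g_+$ is preserved and $b_+\leftrightarrow B_+$ is bijective, the assignment (Born geometry, $B_+$-transformation) $\longmapsto$ generalised metric is a bijection, which is precisely the asserted correspondence; its inverse sends a generalised metric $\cH$ with Proposition~\ref{gbparaherm}-data $(g_+,b_+)$ to the Born geometry $\cH_{g_+}$ together with the $B_+$-transformation whose $2$-form is the corresponding multiple of $b_+$.

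I expect the only real obstacle to be bookkeeping rather than conceptual: keeping the conventions straight between $\cH^\flat$, $\eta^\flat$, the induced metric $\eta^\flat(g_+^{-1})$ on $L_-$, and the factor of $2$ in $\omega_{B_+}=\omega-2\,b_+$, so that the block-matrix computation lands exactly on the normal form of Proposition~\ref{gbparaherm} and the bijection $b_+\leftrightarrow B_+$ is pinned down with the correct scalar. A minor point worth noting explicitly is that $(\cH_{g_+})_{B_+}$, although it is the \emph{compatible} generalised metric for the \emph{rotated} para-Hermitian structure $(K_{B_+},\eta)$ rather than for $(K,\eta)$, is nonetheless a bona fide generalised metric for the original $(E,K,\eta)$ in the sense of Definition~\ref{genme}, which is immediate since that definition refers only to $\eta$ and not to $K$ or $\omega$.
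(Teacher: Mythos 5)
Your proposal is correct and follows essentially the same route as the paper, which likewise establishes the correspondence by taking the diagonal Born metric \eqref{eq:diaghermmetric}, applying a $B_+$-transformation, and matching the result with the $(g_+,b_+)$ parametrisation of Proposition~\ref{gbparaherm} (the paper defers this explicit block computation to~\cite{Marotta:2019eqc}). Your added care about the sign/factor conventions relating $b_+$ to $B_+$, and the remark that the transformed metric is compatible with $(K_{B_+},\eta)$ but still a generalised metric for $(E,K,\eta)$, are exactly the bookkeeping points the cited calculation resolves.
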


\medskip

\subsection{Metric Algebroids} ~\\[5pt]
\label{sec:malg}
The extension of type~II supergravity to the manifestly duality-invariant framework of double field theory requires a weakening of the well-known notion of Courant algebroid from generalised geometry.

\begin{definition} \label{malg}
A \emph{metric algebroid} on a manifold $ M$ is quadruple
$(E,\eta,\rho,\llbracket\,\cdot\,,\,\cdot\,\rrbracket)$ of a vector bundle
$E\longrightarrow M$ with a fibrewise non-degenerate pairing $\eta$, a vector bundle morphism $\rho: E \longrightarrow T M$ covering the identity, called the \emph{anchor}, and an $\R$-bilinear bracket $\llbracket\,\cdot \, , \,\cdot\,\rrbracket : \sfGamma(E) \times \sfGamma(E) \longrightarrow \sfGamma(E)$, called the \emph{D-bracket}, such that 
\begin{enumerate}
\item $ \llbracket e_1, f\, e_2\rrbracket = f\, \llbracket e_1, e_2\rrbracket + \big(\rho(e_1)\cdot f\big)\, e_2 \, ,$ \\[-1ex]
\item $\rho(e_1)\cdot\eta(e_2, e_3)= \eta(\llbracket e_1, e_2\rrbracket, e_3)+ \eta(e_2, \llbracket e_1, e_3\rrbracket) \, ,$ and \\[-1ex]
\item $\eta(\llbracket e_1, e_1\rrbracket, e_2)= \frac{1}{2}\, \rho(e_2)\cdot\eta(e_1, e_1)\, ,$
\end{enumerate}
for all $ e_1, e_2, e_3 \in \sfGamma(E)$ and $f \in C^{\infty}( M).$

A \emph{pre-Courant algebroid} is a metric algebroid $(E,\eta,\rho,\llbracket\,\cdot\,,\,\cdot\,\rrbracket)$ whose anchor $\rho:E\longrightarrow T M$ is a bracket morphism:
\begin{itemize}
\item[(4)] $\rho(\llbracket e_1,e_2\rrbracket) = [\rho(e_1),\rho(e_2)] \, ,$
\end{itemize}
for all $e_1,e_2\in\sfGamma(E)$, where $[\,\cdot\,,\,\cdot\,]$ is the usual Lie bracket of vector fields on $T M$. 

A \emph{Courant algebroid} is a metric algebroid $(E,\eta,\rho,\llbracket\,\cdot\,,\,\cdot\,\rrbracket)$ which is also a Leibniz-Loday algebroid, i.e. its bracket $\llbracket\,\cdot\,,\,\cdot\,\rrbracket$, called the \emph{Dorfman bracket} in this case, obeys the Jacobi identity 
\begin{itemize}
\item[(5)] $\llbracket e_1,\llbracket e_2,e_3\rrbracket\rrbracket = \llbracket\llbracket e_1,e_2\rrbracket,e_3\rrbracket + \llbracket e_2,\llbracket e_1,e_3\rrbracket\rrbracket \, ,$
\end{itemize}
for all $ e_1, e_2, e_3 \in \sfGamma(E)$.
\end{definition}

The anchored Leibniz rule (1) follows from property (2), and hence a metric algebroid can be characterised by two axioms alone. The Jacobi identity (5) together with the anchored Leibniz rule (1) implies the bracket morphism property (4). Hence, as the terminology suggests, any Courant algebroid is also a pre-Courant algebroid (but not conversely); pre-Courant and Courant algebroids can thus be specified by a minimal set of three axioms.

A metric algebroid is called \emph{regular} if its anchor $\rho$ has constant rank, and \emph{transitive} if $\rho$ is surjective. 

\begin{remark}
The anchor map $\rho:E\longrightarrow T M$ and the pairing $\eta$ induce a map $\rho^* : T^* M \longrightarrow E$ given by 
$$\eta\big(\rho^* (\alpha), e\big) \coloneqq \braket{\rho^{\tt t}(\alpha), e } \ , $$
for all $ e \in \sfGamma(E)$ and $\alpha \in \sfGamma(T^* M)$, where $\rho^{\tt t} : T^* M \longrightarrow E^*$ is the transpose of $\rho$ and $\braket{\,\cdot\,,\,\cdot\,}:\sfGamma(E^*)\times\sfGamma(E)\longrightarrow C^\infty( M)$ is the canonical dual pairing. In other words, $\rho^* = (\eta^{-1})^\sharp \circ \rho^{\tt t} .$
\end{remark}

We will now show that metric algebroids are abundant and arise very naturally. Let $(E, \eta)$ be a pseudo-Euclidean vector bundle over a manifold $ M.$ The vector bundle of metric-preserving first order covariant differential operators on $E$ forms a transitive Lie algebroid  called the \emph{Atiyah algebroid} ${\sf At}(E, \eta),$ see for instance~\cite{Mackenzie}. It fits into a short exact sequence of Lie algebroids over $ M$ given by 
\be \label{eq:atiyahalg}
0 \longrightarrow \mathfrak{so}(E) \xlongrightarrow{i} {{\sf At}}(E, \eta) \xlongrightarrow{{\sfa}} T M \longrightarrow 0 \ ,
\ee
where $\sfa$ is the  anchor of the Atiyah algebroid and $i$ is the subalgebroid inclusion of
\be \nonumber
\mathfrak{so}(E) \coloneqq \big\{ \psi \in {\sf{End}}(E) \ \big| \ \eta\circ(\psi\times\unit) = -\eta\circ(\unit\times\psi)\big\} \ .
\ee 
The isomorphism $\mathfrak{so}(E) \simeq \midwedge^2 E $ follows from this definition.

 \begin{remark} \label{metricconnections} 
Metric Koszul connections $\nabla$ on $E$ are in one-to-one correspondence with splittings of the Atiyah sequence \eqref{eq:atiyahalg}. Since a splitting of the Atiyah sequence always exists, $(E, \eta )$ always admits a metric Koszul connection.
\end{remark} 
 
\begin{remark} \label{homconnection}
Given any two metric connections $\nabla$ and $\nabla^\prime$ on $E,$ 
 from the short exact sequence \eqref{eq:atiyahalg} it follows that
\be \label{eq:diffconnection}
\nabla^\prime - \nabla = {\Omega} \ ,
\ee
for some ${\Omega} \in \mathsf{\Gamma}\big({\sf{Hom}}(T M, \mathfrak{so}(E))\big) \simeq \mathsf{\Gamma}\big(T^* M \otimes \mathfrak{so}(E)\big).$ Then ${\Omega}$ satisfies 
\be\nonumber
 \eta\big({\Omega} (X, e_1), e_2 \big) = - \eta\big( e_1, {\Omega}(X, e_2) \big) \ ,
\ee  
for all $e_1, e_2 \in \mathsf{\Gamma}(E)$ and $X \in \mathsf{\Gamma}(T M).$
\end{remark} 

We say that a vector bundle $E\longrightarrow M$ is \emph{anchored} if it is equipped with a vector bundle morphism $\rho:E\longrightarrow T M$. Following~\cite{Vaisman2012}, we then have

\begin{proposition} \label{prop:malgexistence}
Any anchored pseudo-Euclidean vector bundle $(E, \eta , \rho )$ over $ M$ admits a metric algebroid structure.
\end{proposition}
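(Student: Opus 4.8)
The plan is to construct an explicit D-bracket on the given anchored pseudo-Euclidean vector bundle $(E,\eta,\rho)$ using a metric connection, which is guaranteed to exist by Remark~\ref{metricconnections}. First I would pick any metric Koszul connection $\nabla$ on $(E,\eta)$, which exists since a splitting of the Atiyah sequence \eqref{eq:atiyahalg} always exists. The natural candidate for the bracket is the \emph{Gualtieri-type formula}
\be\nonumber
\llbracket e_1, e_2\rrbracket := \nabla_{\rho(e_1)}\, e_2 - \nabla_{\rho(e_2)}\, e_1 + \eta^{-1}{}^\sharp\big(\eta(\nabla_{(\,\cdot\,)}\, e_1, e_2)\big) \ ,
\ee
for all $e_1, e_2 \in \sfGamma(E)$, where the last term is the section of $E$ obtained by raising the index on the $1$-form $X\longmapsto\eta(\nabla_X\, e_1, e_2)$ using $\eta$. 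I would then verify directly that this bracket satisfies the three defining axioms of a metric algebroid in Definition~\ref{malg}.

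The verification proceeds axiom by axiom. For axiom~(2), I would expand $\rho(e_1)\cdot\eta(e_2,e_3)$ using metricity of $\nabla$, i.e.~$\rho(e_1)\cdot\eta(e_2,e_3) = \eta(\nabla_{\rho(e_1)}e_2,e_3)+\eta(e_2,\nabla_{\rho(e_1)}e_3)$, and then check that the cross terms coming from the Koszul-correction term in $\llbracket e_1,e_2\rrbracket$ and $\llbracket e_1,e_3\rrbracket$ cancel against the $-\nabla_{\rho(e_2)}e_1$ and $-\nabla_{\rho(e_3)}e_1$ pieces; this is the standard computation and is essentially forced by the symmetry of the correction term. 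For axiom~(3), I would compute $\eta(\llbracket e_1,e_1\rrbracket,e_2)$: the first two terms of the bracket cancel, leaving $\eta(\eta^{-1}{}^\sharp(\eta(\nabla_{(\,\cdot\,)}e_1,e_1)),e_2) = \eta(\nabla_{\rho(e_2)}e_1,e_1) = \tfrac12\,\rho(e_2)\cdot\eta(e_1,e_1)$ by metricity. Axiom~(1) then follows automatically, as the excerpt already notes that the anchored Leibniz rule is a consequence of axiom~(2); alternatively it can be checked directly from the Leibniz property of $\nabla$ and the tensoriality of the correction term.

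The main subtlety — rather than a genuine obstacle — is keeping careful track of the musical isomorphism $\eta^{-1}{}^\sharp$ and the index placement in the correction term, so that the bracket genuinely lands in $\sfGamma(E)$ and the cancellations in axiom~(2) are exact rather than approximate. It is worth noting explicitly that \emph{no} compatibility between $\rho$ and the Lie bracket on $TM$ is required here (that would be the pre-Courant condition, axiom~(4)), nor any Jacobi identity, so the construction works for an arbitrary anchor $\rho$; in particular $\rho=0$ is allowed, in which case the bracket reduces to the purely algebraic Koszul term. I would close by remarking that different choices of metric connection $\nabla$ yield different D-brackets differing by the $\mathfrak{so}(E)$-valued tensor of Remark~\ref{homconnection}, which foreshadows Proposition~\ref{Dbracket3form} on the $3$-form difference of D-brackets.
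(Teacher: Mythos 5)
Your proposal is correct and takes essentially the same route as the paper: pick a metric connection (guaranteed by the splitting of the Atiyah sequence), define the connection-induced D-bracket, verify axioms (2) and (3) directly via metricity with the cross terms cancelling exactly as you describe, and note that axiom (1) follows from (2). The only cosmetic point is that your correction term should be $\rho^*=\eta^{-1}{}^\sharp\circ\rho^{\tt t}$ applied to the $1$-form $X\longmapsto\eta(\nabla_X\,e_1,e_2)$ (equivalently, the bracket is defined implicitly by $\eta(\llbracket e_1,e_2\rrbracket,e)=\eta(\nabla_{\rho(e_1)}e_2-\nabla_{\rho(e_2)}e_1,e)+\eta(\nabla_{\rho(e)}e_1,e_2)$, as the paper does), since $\eta^{-1}{}^\sharp$ alone maps $E^*\longrightarrow E$ while your $1$-form lives in $\sfGamma(T^*M)$.
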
 

\begin{proof}
Recall from Remark~\ref{metricconnections} that a metric connection $\nabla$ on $E$ always exists. Define the bracket operation
\be \nonumber
\llbracket\, \cdot \,,\, \cdot\, \rrbracket_\nabla \colon \mathsf{\Gamma}(E) \times \mathsf{\Gamma}(E) \longrightarrow \mathsf{\Gamma}(E)
\ee
by
\be \label{eq:nablabracket}
\eta(\llbracket e_1 , e_2 \rrbracket_\nabla,e) \coloneqq \eta(\nabla_{\rho(e_1)} e_2 - \nabla_{\rho(e_2)}e_1,e) +  \eta( e_1, \nabla_{\rho(e)} e_2 )  \ ,
\ee
for all $e, e_1, e_2 \in  \mathsf{\Gamma}(E).$ From the definition \eqref{eq:nablabracket} we find
\begin{align}\nonumber
 \eta(\llbracket e,e_1\rrbracket_\nabla, e_2) + \eta( e_1, \llbracket e, e_2\rrbracket_\nabla) = \eta( \nabla_{\rho(e)} e_1, e_2 ) + \eta( e_1, \nabla_{\rho(e)}e_2 ) = \rho(e)\cdot\eta( e_1, e_2) \  , 
\end{align}
which proves property (2) of Definition~\ref{malg}. Similarly
\be \nonumber
\eta( \llbracket e , e \rrbracket_\nabla , e_1 ) =  \eta( e, \nabla_{\rho(e_1)} e) = \tfrac12\,  \rho(e_1)\cdot\eta( e, e) \  ,
\ee
and hence the bracket $\llbracket\, \cdot \,,\, \cdot \, \rrbracket_\nabla$ satisfies property (3) of Definition~\ref{malg}.
\end{proof}

We now show that there are in fact infinitely many metric algebroid structures on any anchored pseudo-Euclidean vector bundle $(E, \eta , \rho ).$ This follows from~\cite{Vaisman2012}

\begin{proposition}  \label{Dbracket3form}
Let $(E, \eta , \rho )$ be an anchored pseudo-Euclidean vector bundle over $ M$ endowed with two  metric connections $\nabla^\prime$ and $\nabla$. Then the difference between their induced D-brackets is a $3$-form $F \in \mathsf{\Gamma}(\midwedge^3 E^*)$ given by
\be \label{eq:dbracket3form}
F(e_1, e_2, e) = \eta( \llbracket e_1, e_2 \rrbracket_{\nabla^\prime} - \llbracket e_1, e_2 \rrbracket_{\nabla}, e ) \ , 
\ee
for all $e, e_1, e_2 \in \mathsf{\Gamma}(E).$
\end{proposition}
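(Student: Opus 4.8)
The plan is to show directly that $F$, which a priori is only an $\mathbb{R}$-trilinear operation on sections, is in fact $C^\infty(M)$-linear in each slot and totally skew-symmetric; a tensor with these two properties is precisely a section of $\midwedge^3 E^*$. I will not need the explicit formula \eqref{eq:nablabracket} for the connection-induced D-bracket, only the three defining axioms of Definition~\ref{malg}, each of which is satisfied by both $\llbracket\,\cdot\,,\,\cdot\,\rrbracket_{\nabla'}$ and $\llbracket\,\cdot\,,\,\cdot\,\rrbracket_{\nabla}$ (by Proposition~\ref{prop:malgexistence}, together with the observation recorded after Definition~\ref{malg} that axiom~(1) follows from axiom~(2)). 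The mechanism used throughout is that every non-tensorial or non-skew term appearing in these axioms --- the anchor-derivative term $(\rho(e_1)\cdot f)\,e_2$ in (1), and the terms $\rho(e_i)\cdot\eta(\,\cdot\,,\,\cdot\,)$ in (2) and (3) --- depends only on $\eta$, $\rho$ and the Lie algebroid data of $TM$, not on the chosen metric connection, and hence cancels when the two D-brackets are subtracted.

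First I would establish $C^\infty(M)$-multilinearity. Linearity of $F$ in its last argument is immediate from the $C^\infty(M)$-linearity of $\eta$. For the middle argument, subtracting the Leibniz rule (1) written for $\nabla$ from that written for $\nabla'$ makes the term $(\rho(e_1)\cdot f)\,e_2$ drop out, leaving $\llbracket e_1, f\,e_2\rrbracket_{\nabla'} - \llbracket e_1, f\,e_2\rrbracket_{\nabla} = f\,\big(\llbracket e_1, e_2\rrbracket_{\nabla'} - \llbracket e_1, e_2\rrbracket_{\nabla}\big)$, whence $F(e_1, f\,e_2, e) = f\,F(e_1, e_2, e)$. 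Linearity in the first argument I defer until the skew-symmetry is in place.

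Next I would prove total skew-symmetry using the remaining two axioms. Subtracting axiom~(2) for $\nabla$ from that for $\nabla'$ kills $\rho(e_1)\cdot\eta(e_2,e)$ and yields $F(e_1, e_2, e) + F(e_1, e, e_2) = 0$, i.e.\ antisymmetry under the transposition of the last two slots. Polarising axiom~(3) --- replacing $e_1$ by $a+b$ and expanding by $\mathbb{R}$-bilinearity --- gives $\eta(\llbracket a,b\rrbracket + \llbracket b,a\rrbracket, e_2) = \rho(e_2)\cdot\eta(a,b)$ for each D-bracket separately; subtracting the two versions removes the right-hand side and produces $F(a,b,e_2) + F(b,a,e_2) = 0$, i.e.\ antisymmetry under the transposition of the first two slots. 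Since the transpositions $(1\,2)$ and $(2\,3)$ generate $S_3$, $F$ is totally skew. Combining this with the partial multilinearity above (for instance $F(f\,e_1, e_2, e) = -F(e_2, f\,e_1, e) = -f\,F(e_2, e_1, e) = f\,F(e_1, e_2, e)$) shows that $F$ is $C^\infty(M)$-trilinear and alternating, hence $F \in \sfGamma(\midwedge^3 E^*)$, which is the claim.

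I do not anticipate a serious obstacle: once one notices that each offending term in the axioms is connection-independent, the proof is essentially bookkeeping. The only point requiring a little care is the polarisation of axiom~(3) and the remark that antisymmetry in two adjacent pairs already forces full antisymmetry. An alternative, more computational route would write $\nabla' = \nabla + \Omega$ with $\Omega \in \sfGamma\big(T^*M \otimes \mathfrak{so}(E)\big)$ as in Remark~\ref{homconnection} and substitute into the defining formula for the D-bracket to exhibit $F$ explicitly; this makes $C^\infty(M)$-multilinearity manifest but forces repeated use of the $\eta$-skewness of $\Omega$ in its $E$-arguments to check alternation, so I prefer the axiomatic derivation above.
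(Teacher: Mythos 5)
Your proof is correct, but it takes a different route from the paper. The paper's proof is computational: it writes $\nabla' = \nabla + \Omega$ as in Remark~\ref{homconnection}, sets $\Lambda = \Omega\circ\rho \in \sfGamma\big({\sf Hom}(E,\mathfrak{so}(E))\big)$, and substitutes into the defining formula \eqref{eq:nablabracket} to exhibit $F$ explicitly, namely $F(e_1,e_2,e) = \eta\big(\Lambda(e_1,e_2)-\Lambda(e_2,e_1),e\big) + \eta\big(e_2,\Lambda(e,e_1)\big)$; there $C^\infty(M)$-linearity is manifest because $\Lambda$ is a bundle map, while total skew-symmetry rests on the $\eta$-skewness of $\Lambda$ in its $E$-arguments (a verification the paper leaves implicit). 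You instead argue axiomatically: both brackets satisfy axioms (2) and (3) of Definition~\ref{malg} (hence also (1)) with the same $\eta$ and $\rho$, so subtracting the axioms kills every non-tensorial and non-skew term, and the two transpositions generate $S_3$. Your polarisation of axiom (3) and the bootstrapping of first-slot linearity from skew-symmetry are both fine. What each approach buys: your argument is cleaner and strictly more general --- it shows that the difference of \emph{any} two D-brackets defining metric algebroid structures on the same anchored pseudo-Euclidean bundle $(E,\eta,\rho)$ is a $3$-form, which is in fact the form of the statement invoked in the main text when defining $H_{\tt can}$, and it never needs the explicit bracket formula; the paper's route, by contrast, produces a concrete expression for $F$ in terms of the contorsion-type tensor $\Lambda$, which is what one uses in practice to compute the $3$-form (e.g.\ $H_{\tt can}$ on the doubled nilmanifold), but as written only covers connection-induced brackets.
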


\begin{proof}  
By Remark \ref{homconnection}, the difference ${\Omega}$ between the two connections $\nabla^\prime$ and $\nabla$ induces 
\be\nonumber
{\Lambda}= {\Omega} \circ \rho \ \in \ \mathsf{\Gamma}\big(\mathsf{Hom}(E, \mathfrak{so}(E))\big) \ ,
\ee
such that
\be\nonumber
\eta\big( {\Lambda} (e, e_1), e_2 \big)= - \eta\big( e_1, {\Lambda}(e, e_2) \big) \ ,
\ee  
for all $e, e_1, e_2 \in \mathsf{\Gamma}(E).$
By substituting \eqref{eq:diffconnection} into \eqref{eq:nablabracket} it follows that
\be\nonumber
\eta( \llbracket e_1, e_2\rrbracket_{\nabla^\prime}, e ) = \eta( \llbracket e_1, e_2\rrbracket_{\nabla}, e ) + \eta\big( {\Lambda}(e_1, e_2)- {\Lambda}(e_2, e_1), e \big) + \eta\big( e_2, {\Lambda}(e, e_1)\big) \ .
\ee
Define the 3-form ${F} \in \mathsf{\Gamma}(\midwedge^3 E^*)$ by 
\be\nonumber
{F}(e_1, e_2, e)=  \eta\big( {\Lambda}(e_1, e_2) - {\Lambda}(e_2, e_1), e \big) + \eta\big( e_2, {\Lambda}(e, e_1)\big)
\ee
and then \eqref{eq:dbracket3form} follows.
\end{proof}  

\begin{remark}  \label{Correspondence3forms}
Proposition \ref{Dbracket3form} implies that there is a one-to-one correspondence between metric algebroid structures on $(E, \eta , \rho )$  and 3-forms ${F} \in \mathsf{\Gamma}(\midwedge^3 E^*).$ This correspondence is not canonical and depends on the choice of a reference D-bracket on $(E, \eta , \rho ).$
\end{remark}  

\begin{example}\label{ex:metricalgeeta}
Let $( M,\eta)$ be a pseudo-Riemannian manifold, and let $\nabla^\LC$ denote the Levi-Civita connection of the metric $\eta$. Define a bracket operation $\llbracket\,\cdot\,,\,\cdot\,\rrbracket_\LC:\mathsf{\Gamma}(T M)\times\mathsf{\Gamma}(T M)\longrightarrow\mathsf{\Gamma}(T M)$ by
\be\nonumber
\eta(\llbracket X,Y\rrbracket_\LC,Z) = \eta(\nabla^\LC_XY-\nabla_Y^\LC X,Z)  + \eta(\nabla_Z^\LC X,Y)
\ee
for vector fields $X,Y,Z\in\mathsf{\Gamma}(T M)$. Then $(T M,\eta,\unit_{T M},\llbracket\,\cdot\,,\,\cdot\,\rrbracket_\LC)$ is a regular metric algebroid.
\end{example}

\begin{example}\label{ex:canmetricalg}
On any almost para-Hermitian manifold $(M,K,\eta)$ one can construct the regular \emph{canonical metric algebroid} $(TM,\eta,\unit_{TM},\llbracket\,\cdot\,,\,\cdot\,\rrbracket_{\tt can})$ in the following way (see~\cite{Freidel2017,Svoboda2018,Freidel2019}). Let $\sfp_\pm:TM\longrightarrow L_\pm$ be the projections to the $\pm\,1$-eigenbundles $L_\pm$ of the almost para-complex structure $K$, and define the \emph{canonical connection}
\begin{align*}
\nabla^{\tt can} := \sfp_+\circ\nabla^\LC\circ\sfp_+ + \sfp_-\circ\nabla^\LC\circ\sfp_- \ ,
\end{align*}
where $\nabla^\LC$ is the Levi-Civita connection of the split signature metric $\eta$. Then $\nabla^{\tt can}$ also preserves $\eta$, and by definition it preserves $K$ as well, i.e. $\nabla^{\tt can}_X K=0$ for all $X\in\sfGamma(TM)$, hence it preserves the splitting $TM = L_+\oplus L_-$ induced by $K$. The deviation between the two linear connections $\nabla^{\tt can}$ and $\nabla^\LC$ (which by Remark~\ref{metricconnections} correspond to different splittings of the corresponding Atiyah sequence \eqref{eq:atiyahalg} with $E=TM$) can be quantified through
\begin{align*}
\eta(\nabla^{\tt can}_X \, Y, Z)= \eta(\nabla^{\textrm{\tiny\tt LC}}_X \, Y, Z) -\tfrac12\, \nabla^{\textrm{\tiny\tt LC}}_X\, \omega(Y, K(Z)) \ ,
\end{align*}
for all $X,Y,Z\in\sfGamma(TM)$,
where $\omega$ is the fundamental $2$-form of the almost para-Hermitian manifold. In particular, $\nabla^{\tt can} = \nabla^\LC$ if and only if $(M,K,\eta)$ is an \emph{almost para-K\"ahler manifold}, i.e. $\de\omega=0$~\cite{Svoboda2018}.

By the construction of Proposition~\ref{prop:malgexistence}, we obtain a D-bracket on $TM$ given by
\begin{align*}
\eta(\llbracket X,Y\rrbracket_{\tt can},Z) := \eta(\nabla_X^{\tt can}Y - \nabla_Y^{\tt can}X,Z) + \eta(\nabla_Z^{\tt can}X,Y) \ ,
\end{align*}
for all $X,Y,Z\in\sfGamma(TM)$. It is compatible with the almost para-complex structure $K$, in the sense that both of the sub-bundles $L_\pm$ are involutive with respect to it: $$\llbracket\sfGamma(L_\pm),\sfGamma(L_\pm)\rrbracket_{\tt can} \ \subseteq \ \sfGamma(L_\pm) \ . $$ It is moreover compatible with the Lie bracket of vector fields on $TM$, in the sense that
\begin{align*}
\llbracket \sfp_\pm(X),\sfp_\pm(Y)\rrbracket_{\tt can} = \sfp_\pm\big([\sfp_\pm(X),\sfp_\pm(Y)]\big) \ ,
\end{align*}
for all $X,Y\in\sfGamma(TM)$. In this sense, the bracket $\llbracket\,\cdot\,,\,\cdot\,\rrbracket_{\tt can}$ is the unique canonical compatible D-bracket induced by a metric connection on the anchored para-Hermitian vector bundle $(TM,\eta,K,\unit_{TM})$~\cite{Freidel2017}. 
\end{example}

\medskip

\subsection{Exact Pre-Courant Algebroids}\label{sec:expreCourant} ~\\[5pt]
We shall now specialise the discussion of Appendix~\ref{sec:malg} to the class of metric algebroids which are pre-Courant algebroids, i.e. for which the homomorphism property (4) of the anchor map in Definition~\ref{malg} is imposed. Let $(E,\eta,\rho,\llbracket\,\cdot\,,\,\cdot\,\rrbracket)$ be a regular pre-Courant algebroid over a manifold $ M$. From the definition of $\rho^*=\eta^{-1}{}^\sharp \circ \rho^{\tt t}:T^* M\longrightarrow E$ it follows that  ${\rm im}(\rho^*)$ is a coisotropic sub-bundle of $E$ with respect to the fibrewise metric $\eta$, and $\sfGamma({\rm im}(\rho^*))$ is a two-sided abelian ideal of $\sfGamma(E)$ with respect to the D-bracket  $\llbracket\,\cdot \, , \, \cdot\,\rrbracket$~\cite{Kotov:2010wr}. Hence $\rho \circ \rho^* =0$ and there is a chain complex
\begin{align}\label{eq:CAexact}
0\longrightarrow T^*M \xlongrightarrow{\rho^*} E \xlongrightarrow{\rho} TM \longrightarrow 0 \ .
\end{align}
A pre-Courant algebroid is \emph{exact} if \eqref{eq:CAexact} is a short exact sequence.  

From  the exactness of the 
sequence \eqref{eq:CAexact}, it follows that the sub-bundle
$\mathrm{im}(\rho^*) \subset E,$ which is isomorphic to $T^* M,$ is maximally isotropic with respect to $\eta.$
A para-Hermitian structure on $E$ is given by a choice of a maximally
isotropic splitting of \eqref{eq:CAexact}:
$$
\sigma: T M \longrightarrow E \qquad \mbox{with} \quad \rho \circ \sigma=
\unit_{T M} \ . 
$$
It follows that 
$$
E= \mathrm{im}(\sigma) \oplus \mathrm{im}(\rho^*)
$$ 
with associated para-complex structure defined by
$$
K_\sigma\big(\sigma(X)+ \rho^*(\alpha)\big)=\sigma(X)- \rho^*(\alpha) \ , 
$$ 
for all $ X \in \mathsf{\Gamma}(T M)$ and $\alpha \in
{\mathsf\Omega}^1(M).$  The para-complex structure $K_\sigma$ is
compatible with the metric $\eta,$ and thus $E$ is endowed with a
para-Hermitian structure. This para-Hermitian structure of an exact
pre-Courant algebroid is isomorphic to the para-Hermitian structure of the
generalised tangent bundle $\mathbb{T} M$ from
Example~\ref{ex:gentanbun}.

Under the isomorphism $E\simeq\IT M = T M\oplus T^* M$ induced by the splitting $\sigma$, the anchor map $\rho:E\longrightarrow T M$ is sent to the projection ${\rm pr}_{T M}:T M\oplus T^* M \longrightarrow T M$ defined by ${\rm pr}_{T M}(X+\alpha) = X$, for all $X\in T M$ and $\alpha\in T^* M$. The D-bracket $\llbracket\,\cdot\,,\,\cdot\,\rrbracket$ is sent to the bracket~\cite{gualtieri:tesi,Severa-letters}
\be\label{eq:standardbracket}
\llbracket X+ \alpha, Y + \beta\rrbracket_{H_\sigma} = [X,Y]+ \pounds_X \beta  -\iota_Y \, \de \alpha + \iota_Y\iota_X H_\sigma \ ,
\ee
for all $X,  Y \in \sfGamma(T M)$ and $\alpha, \beta \in {\mathsf\Omega}^1(M)$, where $\pounds_X$ denotes the Lie derivative along the vector field $X$ and $H_\sigma\in{\mathsf\Omega}^3( M)$ is the $3$-form on $ M$ defined by 
\begin{align}\label{eq:Hsigma}
H_\sigma(X,Y,Z)= \eta\big(\llbracket\sigma(X),\sigma(Y)\rrbracket, \sigma(Z)\big) \ ,
\end{align}
for all $X,Y,Z \in \sfGamma(T M).$ 

By Remark~\ref{rem:EL-}, every maximally isotropic splitting of the short exact sequence \eqref{eq:CAexact} is associated with a para-Hermitian structure on $E$, and by Remark~\ref{rem:B+} any two splittings of an exact pre-Courant algebroid are related by a $B_+$-transformation, as discussed in Appendix~\ref{sec:Btransformations}. Under the isomorphism above, a $B_+$-transformation of a split exact pre-Courant algebroid is given by a $2$-form $b_+=B\in{\mathsf\Omega}^2( M)$ on $ M$ which induces the vector bundle morphism $B_+=B^\flat:T M\longrightarrow T^* M$ given by  $B^\flat(X) = \iota_XB$, for all $X\in\sfGamma(T M)$. Thus $e^{B_+}:\IT M\longrightarrow\IT M$ is given by
\begin{align*}
e^{B_+}(X+\alpha) = X+\iota_XB+\alpha \ ,
\end{align*}
for all $X\in\sfGamma(T M)$ and $\alpha\in{\mathsf\Omega}^1(M)$. The image ${\rm im}(e^{B_+})\subset\IT M$ is a maximally isotropic sub-bundle, and it is compatible with the anchor ${\rm pr}_{T M}$, i.e. ${\rm pr}_{T M}\circ e^{B_+} = e^{B_+}\circ{\rm pr}_{T M}$. The isometry $e^{B_+}$ acts on the D-bracket \eqref{eq:standardbracket} as~\cite{gualtieri:tesi}
\begin{align*}
\llbracket e^{B_+}(X+\alpha),e^{B_+}(Y+\beta)\rrbracket_{H_\sigma} = e^{B_+}\big(\llbracket X+\alpha,Y+\beta\rrbracket_{H_\sigma+\de B}\big) \ ,
\end{align*}
for all $X+\alpha,Y+\beta\in\sfGamma(\IT M)$. On the other hand, $B_+$-transformations do not preserve the para-Hermitian structure on $\IT M$, but the different para-Hermitian structures are isomorphic, and they have the same D-bracket if and only if the $2$-form $B$ is closed, $\de B=0$. 

\begin{remark}
When $H_\sigma=0$, the para-Hermitian vector bundle $\IT M$ from Example~\ref{ex:gentanbun} with anchor ${\rm pr}_{T M}$ and D-bracket \eqref{eq:standardbracket} is a Courant algebroid called the \emph{standard Courant algebroid} on $ M$. In this case the standard Dorfman bracket $\llbracket\,\cdot\,,\,\cdot\,\rrbracket_0 = \llbracket\,\cdot\,,\,\cdot\,\rrbracket_\nabla$ can also be obtained from the construction \eqref{eq:nablabracket} by choosing an arbitrary torsion-free connection $\nabla$ on $T M$ and extending it to $\IT M$ by $\nabla_X(Y+\beta) = \nabla_XY+\nabla_X\beta$, for all $X\in\sfGamma(T M)$ and $Y+\beta\in\sfGamma(\IT M)$~\cite{Svoboda2018}. This bracket is compatible with the para-Hermitian structure of Example~\ref{ex:gentanbun}, i.e. the eigenbundles $T M$ and $T^* M$ are both involutive with respect to \eqref{eq:standardbracket} for $H_\sigma=0$. 

More generally, the D-bracket is a Dorfman bracket, i.e. it satisfies the Jacobi identity~(5) of Definition~\ref{malg}, if and only if the $3$-form $H_\sigma$ is closed, $\de H_\sigma=0$. In that case the Courant algebroid is called the \emph{$H_\sigma$-twisted standard Courant algebroid} on~$ M$, and its isomorphism class depends only on the de~Rham cohomology class $[H_\sigma]\in{\sf H}^3( M;\IR)$, called the \emph{\v{S}evera class}. In other words, isomorphism classes of exact Courant algebroids over a manifold $ M$ are in one-to-one correspondence with elements of the cohomology group ${\sf H}^3( M;\IR)$. This is the famous \v{S}evera classification of exact Courant algebroids~\cite{Severa-letters}.
\end{remark}

\medskip

\subsection{Courant Algebroid Reduction} ~\\[5pt]
\label{app:Courantred}
We briefly recall the main results concerning the reduction of Courant algebroids from~\cite{Bursztyn2007, Zambon2008}, beginning with the following notion.

\begin{definition} \label{def:basic}
Let $(E, \eta , \rho, \llbracket \, \cdot \, , \, \cdot \, \rrbracket)$ be an exact Courant algebroid over a manifold $M$ endowed with an isotropic sub-bundle $A\subset E$ over a submanifold $\cW \subseteq M$ whose orthogonal complement $A^\perp$ obeys $\rho(A^\perp)=T\cW.$ The subspace of sections of $A^\perp$ which are \emph{basic with respect to} $A$ is 
\be \nonumber
\mathsf{\Gamma}_{\mathtt{bas}}(A^\perp) \coloneqq \big\{ e \in \mathsf{\Gamma}(A^\perp) \ \big| \ \llbracket e , \sfGamma(A) \rrbracket \subseteq \mathsf{\Gamma}(A) \big\} \ .
\ee
\end{definition}

The circumstances under which the reduction of an exact Courant algebroid is possible is described by

\begin{theorem} \label{thm:BCGZ}
Let $(E, \eta , \rho, \llbracket \, \cdot \, , \, \cdot \, \rrbracket)$ be an exact Courant algebroid over $M$  endowed with an isotropic sub-bundle $A$ over a submanifold $\cW \subseteq M$ such that $\rho(A^\perp)=T\cW$. Assume that $A^\perp$ is spanned fibrewise by basic sections
$\mathsf{\Gamma}_{\mathtt{bas}}(A^\perp).$ Let $\cF$ be the foliation of $\cW$ integrating the involutive distribution $\rho(A) \subset T\cW$, and suppose that the leaf space $\cQ= \cW / \cF$ is a smooth manifold.  Then there is an exact Courant algebroid with underlying vector bundle $E_{\mathtt{red}}$ over $\cQ$ such that the diagram 
\be \nonumber
\begin{tikzcd}
A^\perp /A \arrow{r}{} \arrow{dd} & E_{\mathtt{red}} \arrow{dd}{} \\ & \\
\cW \arrow{r}{} & \cQ
\end{tikzcd}
\ee
is a pullback of vector bundles.
\end{theorem}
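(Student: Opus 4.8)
The plan is to follow the reduction procedure of~\cite{Bursztyn2007,Zambon2008}, building the reduced Courant algebroid from a sheaf of ``invariant'' sections. Write $q:\cW\longrightarrow\cQ$ for the quotient map of the foliation $\cF$ integrating $\rho(A)$, which by hypothesis is a surjective submersion onto a manifold, and set $\cE:=A^\perp/A$, a vector bundle over $\cW$. Call a section of $\cE$ \emph{invariant} if it is the image of a basic section in $\sfGamma_{\mathtt{bas}}(A^\perp)$ (Definition~\ref{def:basic}). I would then \emph{define} $E_{\mathtt{red}}\longrightarrow\cQ$ by declaring its sheaf of sections to be the invariant sections of $\cE$, and equip it with the pairing, anchor and bracket inherited from those of $E$; the tautological identification $\cE\simeq q^*E_{\mathtt{red}}$ is then precisely the asserted pullback square.

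The soft points come first. Since $A$ is isotropic, $A\subseteq A^\perp$ and, by non-degeneracy of $\eta$ on $E|_\cW$, $(A^\perp)^\perp=A$; hence $\eta|_{A^\perp}$ has radical exactly $A$ and descends to a fibrewise non-degenerate pairing $\bar\eta$ of split signature on $\cE$, while $\rho|_{A^\perp}$ is onto $T\cW$ and sends $A$ onto $\rho(A)=T\cF$, so it descends to a surjection $\bar\rho:\cE\longrightarrow T\cW/T\cF\simeq q^*T\cQ$. For basic sections $e_1,e_2\in\sfGamma_{\mathtt{bas}}(A^\perp)$ and $a\in\sfGamma(A)$ one has, using the $D$-bracket calculus for sub-bundles over $\cW$ as in~\cite{Bursztyn2007,Zambon2008}, that $\eta(a,e_i)=0$ makes $\llbracket a,e_i\rrbracket+\llbracket e_i,a\rrbracket=\rho^*\de\eta(a,e_i)$ vanish, so $\llbracket a,e_i\rrbracket=-\llbracket e_i,a\rrbracket\in\sfGamma(A)$; then axiom~(2) of Definition~\ref{malg} gives $\rho(a)\cdot\eta(e_1,e_2)=0$, so $\eta(e_1,e_2)$ is leafwise constant and defines a function on $\cQ$, while axiom~(4) gives $[\rho(a),\rho(e_i)]=\rho(\llbracket a,e_i\rrbracket)\in\sfGamma(T\cF)$, so $\rho(e_i)$ is projectable. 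Moreover $\llbracket e_1,e_2\rrbracket\in\sfGamma(A^\perp)$, since $\eta(\llbracket e_1,e_2\rrbracket,a)=\rho(e_1)\cdot\eta(e_2,a)-\eta(e_2,\llbracket e_1,a\rrbracket)=0$, and it is again basic by the Jacobi identity~(5), $\llbracket\llbracket e_1,e_2\rrbracket,a\rrbracket=\llbracket e_1,\llbracket e_2,a\rrbracket\rrbracket-\llbracket e_2,\llbracket e_1,a\rrbracket\rrbracket\in\sfGamma(A)$. Replacing $e_i$ by $e_i+a'$ with $a'\in\sfGamma(A)$ leaves $\eta(e_1,e_2)$ unchanged, shifts $\rho(e_i)$ by a section of $T\cF$, and shifts $\llbracket e_1,e_2\rrbracket$ by a section of $A$, so $\bar\eta$, $\bar\rho$ and the bracket all descend unambiguously to invariant sections, i.e. to $E_{\mathtt{red}}$.

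The hard part, and really the content of the theorem, is to show that the invariant sections of $\cE$ are the sections of a genuine vector bundle $E_{\mathtt{red}}\longrightarrow\cQ$ with $\cE\cong q^*E_{\mathtt{red}}$. Here I would argue locally: around a leaf of $\cF$ choose a transversal slice, and use the hypothesis that $A^\perp$ is \emph{fibrewise} spanned by $\sfGamma_{\mathtt{bas}}(A^\perp)$ to extract, after shrinking, finitely many basic sections whose images in $\cE$ restrict to a local frame over a saturated neighbourhood; two such frames differ by transition matrices expressible through $\bar\eta$ of invariant sections, hence leafwise constant, hence descending to transition functions over $\cQ$. Smoothness of $\cQ$ then assembles these into the bundle $E_{\mathtt{red}}$ together with the pullback identification. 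This is the step genuinely using both the spanning hypothesis and the smoothness of the leaf space, and where the bulk of the care is required.

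It remains to see that $(E_{\mathtt{red}},\eta_{\mathtt{red}},\rho_{\mathtt{red}},\llbracket\,\cdot\,,\,\cdot\,\rrbracket_{\mathtt{red}})$ is an exact Courant algebroid. Axioms~(1)--(5) of Definition~\ref{malg} are inherited by evaluating the corresponding identities for $E$ on basic sections and pushing down to $\cQ$; in particular (5) holds verbatim, so the reduced object is a Courant algebroid. For exactness I would run a rank count: the condition $\rho(A^\perp)=T\cW$ forces $A\cap\ker\rho=\rho^*\big({\sf Ann}(T\cW)\big)$, since $\rho^*\xi\in A$ implies $\langle\xi,\rho(e)\rangle=\eta(\rho^*\xi,e)=0$ for all $e\in A^\perp$, hence $\xi\in{\sf Ann}(T\cW)$, while conversely $\rho^*\big({\sf Ann}(T\cW)\big)\subseteq(A^\perp)^\perp=A$. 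Therefore $\dim\cF=\mathrm{rank}\,\rho(A)=\mathrm{rank}(A)-(\dim M-\dim\cW)$, and since $E$ is exact, $\mathrm{rank}(E_{\mathtt{red}})=\mathrm{rank}(A^\perp)-\mathrm{rank}(A)=2\dim M-2\,\mathrm{rank}(A)=2\dim\cW-2\dim\cF=2\dim\cQ$; combined with $\rho_{\mathtt{red}}$ surjective and $\rho_{\mathtt{red}}^*$ injective (both descending from $E$), this yields the short exact sequence $0\longrightarrow T^*\cQ\longrightarrow E_{\mathtt{red}}\longrightarrow T\cQ\longrightarrow 0$, so $E_{\mathtt{red}}$ is exact.
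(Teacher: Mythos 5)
This theorem is stated in the paper without proof, as a recollection of the reduction results of Bursztyn--Cavalcanti--Gualtieri and Zambon, and your argument reconstructs essentially that standard proof: descend $\eta$, $\rho$ and the bracket to $A^\perp/A$ on basic sections, assemble $E_{\mathtt{red}}$ over $\cQ$ from leafwise-constant transition data, and obtain exactness by a rank count together with surjectivity of the reduced anchor. The outline is correct; the only step needing slightly more care is that identities such as $\llbracket a,e_i\rrbracket+\llbracket e_i,a\rrbracket=\rho^*\,\de\eta(a,e_i)$ are computed via extensions of sections off $\cW$, so this term need not vanish but only lies in $\rho^*\big({\sf Ann}(T\cW)\big)\subseteq A$ --- a containment you in fact establish later in the exactness argument --- which is all that the descent of the bracket requires.
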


The main result concerning the reduction of Dirac structures can be stated as

\begin{proposition} \label{prop:Dircred}
Let $(E, \eta , \rho, \llbracket \, \cdot \, , \, \cdot \, \rrbracket)$ be an exact Courant algebroid over $M$  endowed with an isotropic sub-bundle $A$ over a submanifold $\cW \subseteq M$ such that $\rho(A^\perp)=T\cW$, which fulfils the assumptions of Theorem \ref{thm:BCGZ}, i.e. 
$E$ induces an exact Courant algebroid $E_{\mathtt{red}}$ over $\cQ= \cW /\cF.$ Let $L$ be a maximally isotropic sub-bundle of $E \rvert_\cW$ such that $L \cap A^\perp$ has constant rank and 
\be \label{eq:diracred1}
\llbracket \mathsf{\Gamma}(A), \mathsf{\Gamma}(L \cap A^\perp) \rrbracket \  \subseteq \ \mathsf{\Gamma}(L+A) \ .
\ee
Then $L$ reduces to a maximally isotropic sub-bundle $L_{\mathtt{red}}$ of $E_{\mathtt{red}}.$
If in addition
\be \label{eq:diracred2}
\llbracket \mathsf{\Gamma}_{\mathtt{bas}}(L \cap A^\perp), \mathsf{\Gamma}_{\mathtt{bas}}(L \cap A^\perp) \rrbracket \ \subseteq \ \mathsf{\Gamma}(L+A) \ ,
\ee
then $L_{\mathtt{red}}$ is a Dirac structure.
\end{proposition}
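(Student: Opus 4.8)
The plan is to carry out the reduction fibrewise by pure linear algebra, then upgrade it to a sub-bundle of $E_{\mathtt{red}}$ using the hypothesis \eqref{eq:diracred1}, and finally obtain involutivity from the extra hypothesis \eqref{eq:diracred2}.

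First I would work at a fixed point $x\in\cW$. Since $A_x$ is isotropic we have $A_x\subseteq A_x^\perp$ and $(A_x^\perp)^\perp=A_x$, so $\eta$ induces a non-degenerate split-signature pairing on the fibre $A_x^\perp/A_x$, which is the fibre of the reduced bundle in Theorem~\ref{thm:BCGZ}. A dimension count using $L_x^\perp=L_x$ and $(L_x\cap A_x)^\perp=L_x+A_x^\perp$ shows that $\bar L_x:=\big((L_x\cap A_x^\perp)+A_x\big)/A_x$ is maximally isotropic in $A_x^\perp/A_x$; this is the quadratic analogue of coisotropic reduction of a Lagrangian. Because $L\cap A^\perp$ has constant rank by hypothesis, so does $L\cap A$, and hence $\bar L:=\big((L\cap A^\perp)+A\big)/A$ is a smooth maximally isotropic sub-bundle of $A^\perp/A$ over $\cW$.

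Next I would descend $\bar L$ along the quotient map $\cW\to\cQ$. Here I would invoke that $\sfGamma(E_{\mathtt{red}})$ is identified with the image of $\sfGamma_{\mathtt{bas}}(A^\perp)$ in $\sfGamma(A^\perp/A)$, and that a constant-rank sub-bundle of $A^\perp/A$ descends precisely when it is invariant under the infinitesimal $\cF$-action $\llbracket\sfGamma(A),\,\cdot\,\rrbracket$ on $\sfGamma(A^\perp/A)$. Two elementary facts feed in: the identity $(L+A)\cap A^\perp=(L\cap A^\perp)+A$ (valid fibrewise since $A\subseteq A^\perp$), and the stability $\llbracket\sfGamma(A),\sfGamma(A^\perp)\rrbracket\subseteq\sfGamma(A^\perp)$, which follows from the metric-algebroid compatibility axiom together with $A$ being isotropic and involutive. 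Combining these with \eqref{eq:diracred1} gives $\llbracket\sfGamma(A),\sfGamma(L\cap A^\perp)\rrbracket\subseteq\sfGamma\big((L\cap A^\perp)+A\big)$, so $\bar L$ is $\cF$-invariant and descends to a maximally isotropic sub-bundle $L_{\mathtt{red}}\subset E_{\mathtt{red}}$.

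For the final assertion I would assume \eqref{eq:diracred2} and check that $L_{\mathtt{red}}$ is closed under the reduced Dorfman bracket. One lifts sections of $L_{\mathtt{red}}$ to basic sections lying in $L\cap A^\perp$, i.e.\ to elements of $\sfGamma_{\mathtt{bas}}(L\cap A^\perp):=\sfGamma_{\mathtt{bas}}(A^\perp)\cap\sfGamma(L\cap A^\perp)$, the existence of enough such lifts being part of the reduction construction of \cite{Bursztyn2007,Zambon2008}. For $e_1,e_2$ in this set, the reduced bracket of the corresponding reduced sections is represented by $\llbracket e_1,e_2\rrbracket$, which lies in $\sfGamma(A^\perp)$ since $\sfGamma_{\mathtt{bas}}(A^\perp)$ is bracket-closed, and in $\sfGamma(L+A)$ by \eqref{eq:diracred2}; hence it lies in $\sfGamma\big((L+A)\cap A^\perp\big)=\sfGamma\big((L\cap A^\perp)+A\big)$, so its image in $A^\perp/A$ lies in $\bar L$. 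Thus $L_{\mathtt{red}}$ is involutive, and being maximally isotropic, is a Dirac structure. The hard part will be the descent step and the production of basic lifts — upgrading statements about modules of sections to genuine smooth sub-bundles over $\cQ$ — both of which rest on the hypothesis that $A^\perp$ is basically generated and are carried out exactly as in the Courant algebroid reduction of \cite{Bursztyn2007,Zambon2008} underlying Theorem~\ref{thm:BCGZ}; everything else is linear algebra and routine bracket bookkeeping.
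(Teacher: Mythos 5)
Your proposal is essentially correct, but there is nothing in the paper to compare it against line by line: Appendix~\ref{app:Courantred} merely \emph{recalls} Proposition~\ref{prop:Dircred} from \cite{Bursztyn2007,Zambon2008} and gives no proof, so what you have written is, in outline, the standard argument of Zambon underlying the quoted statement. Your main steps are sound: the fibrewise maximal isotropy of $\bar L_x=\big((L_x\cap A_x^\perp)+A_x\big)/A_x$ via $(L\cap A)^\perp=L+A^\perp$, smoothness from the constant rank of $L\cap A^\perp$ (which indeed forces $L\cap A$ to have constant rank), the modular identity $(L+A)\cap A^\perp=(L\cap A^\perp)+A$, and the combination of \eqref{eq:diracred1} with $\llbracket\sfGamma(A),\sfGamma(A^\perp)\rrbracket\subseteq\sfGamma(A^\perp)$ to obtain invariance of $\bar L$ along $\cF$. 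Two points you defer to the cited construction deserve to be made explicit, since they follow quickly from the standing hypotheses rather than being extra input: (i) involutivity of $A$, which you use both for the stability of $\sfGamma(A^\perp)$ and for the induced action on sections of $A^\perp/A$ to be well defined, is a consequence of $A^\perp$ being spanned fibrewise by basic sections (pair $\llbracket a_1,a_2\rrbracket$ against a basic section $e$, use axioms (2)--(3) of Definition~\ref{malg} and $\eta(A,A^\perp)=0$, and conclude $\llbracket a_1,a_2\rrbracket\in\sfGamma\big((A^\perp)^\perp\big)=\sfGamma(A)$); and (ii) the ``enough basic lifts inside $L\cap A^\perp$'' needed in the involutivity step is not a black box: once $\bar L$ is known to descend, a local section of $L_{\mathtt{red}}$ pulls back to an $\cF$-invariant section of $\bar L$, and \emph{any} smooth lift of it to $\sfGamma(L\cap A^\perp)$ is automatically basic, because invariance means $\llbracket a,e\rrbracket\in\sfGamma(A)$ for all $a\in\sfGamma(A)$, independently of the lift (the symmetric defect vanishes since $\eta(a,e)=0$, and the ambiguity lies in $\llbracket a,\sfGamma(A)\rrbracket\subseteq\sfGamma(A)$). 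This last point matters: if one only had basic lifts in $(L\cap A^\perp)+A$, hypothesis \eqref{eq:diracred2} as stated would not directly apply. With (i) and (ii) spelled out, your final computation --- that for basic $e_1,e_2\in\sfGamma(L\cap A^\perp)$ the representative $\llbracket e_1,e_2\rrbracket$ lies in $\sfGamma(A^\perp)\cap\sfGamma(L+A)=\sfGamma\big((L\cap A^\perp)+A\big)$ --- closes the argument that $L_{\mathtt{red}}$ is a Dirac structure.
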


\newpage


\end{document}